\def\full{1}
\def\ShowAuthNotes{0}
\title{\textsc{\textbf{
      Title
    }}
}
\author{
  Author
}
\date{\today}
\newtheorem{itheorem}{Informal Theorem}
\newtheorem{theorem}{Theorem}[section]
\newtheorem{claim}[theorem]{Claim}
\newtheorem{lemma}[theorem]{Lemma}
\newtheorem{corollary}[theorem]{Corollary}
\newtheorem{fact}[theorem]{Fact}
\theoremstyle{definition}
\newtheorem{definition}[theorem]{Definition}
\newtheorem{remark}[theorem]{Remark}
\renewcommand{\mathbb}{\varmathbb}
\newcommand{\nfrac}{\nicefrac}
\newcommand{\cA}{\mathcal A}
\newcommand{\cC}{\mathcal C}
\newcommand{\cF}{\mathcal F}
\newcommand{\cG}{\mathcal G}
\newcommand{\cH}{\mathcal H}
\newcommand{\cM}{\mathcal M}
\newcommand{\cP}{\mathcal P}
\newcommand{\cQ}{\mathcal Q}
\newcommand{\mper}{\,.}
\newcommand{\mcom}{\,,}
\definecolor{DSgray}{cmyk}{0,0,0,0.7}
\definecolor{DSred}{cmyk}{0,0.7,0,0.7}
\newcommand{\Authornote}[2]{\noindent{\small\textcolor{DSgray}{\sf{
\textcolor{red}{[#1: #2]\marginpar{\textcolor{red}{\fbox{\Large !}}}}}}}}
\newcommand{\Authormarginnote}[2]{\marginpar{\parbox{2.2cm}{\raggedright\tiny \textcolor{red}{#1: #2}}}}
\newcommand{\Authornote}[2]{}
\newcommand{\Authormarginnote}[2]{}
\newcommand{\Dnote}{\Authornote{David}}
\newcommand{\Mnote}{\Authornote{Moritz}}
\newcommand{\Tnote}{\Authornote{Thomas}}
\newcommand{\Bnote}{\Authornote{Boaz}}
\newcommand{\Brac}[1]{\left[#1 \right]}
\newcommand{\Set}[1]{\left\{#1\right\}}
\newcommand{\norm}[1]{\lVert#1\rVert}
\newcommand{\from}{\colon}
\newcommand{\poly}{\mathrm{poly}}
\newcommand{\defeq}{\stackrel{\mathrm{def}}=}
\renewcommand{\leq}{\leqslant}
\renewcommand{\le}{\leqslant}
\renewcommand{\geq}{\geqslant}
\renewcommand{\ge}{\geqslant}
\newcommand{\vbig}{\vphantom{\bigoplus}}
\newcommand{\normt}[1]{\norm{#1}_{\scriptstyle 2}}
\newcommand{\varnormt}[1]{\norm{#1}_{\scriptscriptstyle 2}}
\newcommand{\N}{\mathbb N}
\newcommand{\R}{\mathbb R}
\newcommand{\sdp}{\mathrm{sdp}}
\newcommand{\opt}{\mathrm{opt}}
\newcommand{\Esymb}{\mathbb{E}}
\newcommand{\Psymb}{\mathbb{P}}
\DeclareMathOperator*{\E}{\Esymb}
\DeclareMathOperator*{\ProbOp}{\Psymb r}
\renewcommand{\Pr}{\ProbOp}
\newcommand{\Ex}[1]{\E\Brac{#1}}
\renewcommand{\epsilon}{\varepsilon}
\newcommand{\e}{\epsilon}
\newcommand{\eps}{\epsilon}
\newcommand{\sse}{\subseteq}
\renewcommand{\normt}{\varnormt}
\newcommand{\seteq}{\mathrel{\mathop:}=}
\newcommand{\la}{\leftarrow}
\newcommand{\using}[1]{\text{(using #1)}}
\newcommand{\maxcut}{\textsc{Max Cut}\xspace}
\newcommand{\gwsdp}{\textsf{GW SDP}\xspace}
\newcommand{\basicsdp}{\ensuremath{\mathsf{Basic SDP}}\xspace}
\newcommand{\basiclp}{\ensuremath{\mathsf{Basic LP}}\xspace}
\newcommand{\uniquegames}{\textsc{Unique Games}\xspace}
\newcommand{\remove}[1]{}
\newcommand{\normtv}[1]{\norm{#1}_{\scriptscriptstyle\mathrm{TV}}}
\newcommand\gep\succeq
\newcommand\lep\preceq
\newcommand\sa{\mathrm{lp}}
\newcommand\sdpGWT{\mathrm{sdp}\ensuremath{_3}}
\newcommand\sdpGW{\mathrm{sdp}}
\newcommand{\var}{\mathrm{Var}}
\newcommand{\Inf}{\mathrm{Inf}}
\newcommand{\torestate}[3]{%
\expandafter \def \csname BBRESTATE #2 \endcsname{#3}
\theoremstyle{plain}
\newtheorem{BBRESTATETHMNUM#2}[theorem]{#1}
\begin{BBRESTATETHMNUM#2}\label{#2}\csname BBRESTATE #2 \endcsname   \end{BBRESTATETHMNUM#2}
\newtheorem*{BBRESTATETHMNONNUM#2}{{#1}~\ref{#2}}
}
\newcommand{\restate}[1]{\begin{BBRESTATETHMNONNUM#1}[Restated] \csname BBRESTATE #1 \endcsname
\end{BBRESTATETHMNONNUM#1}}
\newcommand\@BBnote[1]{{#1}}
\newcommand\@BBnote[1]{}
\newcommand{\fullproceed}[2]{#1}
\newcommand{\onlyfull}[1]{\@BBnote{$<<<$ \bf Only in full:}#1\@BBnote{$>>>$}}
\newcommand{\onlyproceed}[1]{}
\newcommand{\defer}[1]{\@BBnote{$<<<$ \bf Defer (#1)}}
\newcommand{\deferred}[1]{\@BBnote{$<<<$ \bf Place here (#1) $>>>$}}
\newcommand{\fullproceed}[2]{#2}
\newcommand{\onlyfull}[1]{}
\newcommand{\onlyproceed}[1]{\@BBnote{$<<<$ \bf Only in proceed:}#1\@BBnote{$>>>$}}
\newcommand{\defer}[1]{\@BBnote{$<<<$ \bf Text below deferred (#1) $>>>$}\expandafter\def\csname @BBDEFER#1 \endcsname}
\newcommand{\deferred}[1]{\@BBnote{$<<<$ \bf Deferred text (#1)}\expandafter\relax\csname @BBDEFER#1 \endcsname}
\begin{document}

\newcommand\infn{\ensuremath{B_{\infty}^n}}

\sloppy

\title{\bf Subsampling Mathematical Relaxations \\ and  Average-case Complexity}

\author{Boaz Barak\thanks{Department of Computer Science and Center for Computational Intractability,
Princeton University \texttt{boaz@cs.princeton.edu}. Supported by NSF grants CNS-0627526, CCF-0426582 and CCF-0832797,
and the Packard and Sloan fellowships.}
\and Moritz Hardt\thanks{Department of Computer Science and Center for
Computational Intractability, Princeton University, {\tt mhardt@cs.princeton.edu}. Supported by NSF grants CCF-0426582
and CCF-0832797.}
\and Thomas Holenstein\thanks{Department of Computer Science, ETH Zurich,
\texttt{thomas.holenstein@inf.ethz.ch}. Work done while at Princeton University and supported by NSF grants CCF-0426582
and CCF-0832797.}
\and David Steurer\thanks{Department of Computer Science and Center for Computational Intractability,
Princeton University \texttt{dsteurer@cs.princeton.edu}. Supported by NSF grants 0830673, 0832797, 0528414.}}

\maketitle

\thispagestyle{empty}

\begin{abstract}
We initiate a study of when the value of mathematical relaxations such as
linear and semi-definite programs for constraint satisfaction problems (CSPs)
is approximately preserved when restricting the instance to a sub-instance
induced by a small random subsample of the variables.

Let  $\mathcal{C}$ be a family of CSPs such as 3SAT, Max-Cut, etc.., and let
$\Pi$ be a mathematical program that is a \emph{relaxation} for $\mathcal{C}$,
in the sense that for every instance $\cP\in\cC$, $\Pi(\cP)$ is a number in
$[0,1]$ upper bounding the maximum fraction of satisfiable constraints of
$\cP$. Loosely speaking, we say that \emph{subsampling holds} for $\cC$ and
$\Pi$ if for every sufficiently dense instance $\mathcal{P} \in \mathcal{C}$
and every $\e>0$, if we let $\mathcal{P}'$ be the instance obtained by
restricting $\mathcal{P}$ to a sufficiently large constant number of
variables, then $\Pi(\cP') \in (1\pm \e)\Pi(\cP)$. We say that \emph{weak
subsampling} holds if the above guarantee is replaced with $\Pi(\cP') =
1-\Theta(\gamma)$ whenever $\Pi(\cP)= 1-\gamma$, where  $\Theta$ hides only
absolute constants. We obtain both positive and negative results, showing
that:

\begin{enumerate}

\item Subsampling holds for the \basiclp and \basicsdp programs. \basicsdp is
a variant of the semi-definite program considered by Raghavendra (2008), who
showed it gives an optimal approximation factor for every
constraint-satisfaction problem under the unique games conjecture. \basiclp is
the linear programming analog of \basicsdp.

\item For tighter versions of \basicsdp obtained by adding additional
constraints from the Lasserre hierarchy, \emph{weak} subsampling holds for
CSPs of unique games type.

\item There are non-unique CSPs for which even weak subsampling fails for the
above tighter semi-definite programs.  Also there are unique CSPs for which
(even weak) subsampling fails for the Sherali-Adams \emph{linear programming}
hierarchy.

\end{enumerate}

As a corollary of our weak subsampling for strong semi-definite programs, we
obtain a polynomial-time algorithm to certify that \emph{random geometric
graphs} (of the type considered by Feige and Schechtman, 2002) of max-cut
value $1-\gamma$ have a cut value at most $1-\gamma/10$. More generally, our
results give an approach to obtaining average-case algorithms for CSPs using
semi-definite programming hierarchies.
\end{abstract}

\vfill
\pagebreak

\tableofcontents
\thispagestyle{empty}
 \pagebreak

 \setcounter{page}{1}

\section{Introduction}

In this paper we consider the following seemingly unrelated questions:
\begin{enumerate}

\item Is the \maxcut  problem hard on random geometric graphs of the type considered by Feige and
    Schechtman~\cite{FeigeSc02}?

\item Is the value of a mathematical relaxation for a constraint-satisfaction problem (CSP) preserved when one
    passes from an instance $P$ to a  random induced sub-formula of $P$?

\end{enumerate}

It turns out that (in a sense made precise below) the answer to the first question is ``no'' and in fact this is
intimately related to the second question. The answer to the second question is much more subtle, and, in contrast to
the case of the objective value\footnote{A $k$-CSP is a collection $\mathcal{P}$ of functions mapping $n$ variables
from some finite alphabet to $\{0,1\}$, such that every $P\in\mathcal{P}$ depends on at most $k$ variables. We define
the \emph{objective value} of a CSP to be the maximum of $\tfrac{1}{|\mathcal{P}|}\sum_{P\in\mathcal{P}}P(x)$, taken
over all possible assignments $x$ to the variables.} of the CSP, the answer strongly depends on the type of relaxation
and CSP.

\subsection{\maxcut  on the sphere}

\maxcut --- the problem of finding a cut maximizing the number of cut edges--- is a widely studied optimization
problem, important both in its own right, and as a testbed for techniques in algorithms and hardness of approximation.
The best approximation algorithm for \maxcut  known today is the semi-definite program \gwsdp of Goemans and
Williamson~\cite{GoemansWi95}, which is optimal in the worst-case under the unique games
conjecture~\cite{KhotKiMoOd07,MosselOdOl05}. \gwsdp is a special case of the \basicsdp algorithm for CSPs considered by
Raghavendra~\cite{Raghavendra08}, who showed that the latter algorithm always has an optimal approximation factor in
the worst-case under the unique games conjecture.

In particular \gwsdp gives a value of at most $1-\Omega(\e^2)$ when given as input a graph whose maximum cut cuts
$1-\e$ fraction of the edges.\footnote{As a relaxation for a maximization problem, the value of \gwsdp is always at
least as large as the integral objective value. Hence the fact that the relaxation outputs some value $v$ for an
instance $G$ is a \emph{certification} that the maximum cut of $G$ is at most $v$.} In this work we study the
\emph{average-case complexity} of \maxcut --- namely whether one can do better on natural distributions over  the
instances. Since random graphs are expanders and so obviously have a maximum cut value close to $1/2$ (and moreover
this fact can be efficiently certified using the second eigenvalue), one needs to consider other distributions over the
inputs. We consider \emph{random geometric graphs}, that in light of known results, arguably constitute the most
natural distribution of \maxcut instances that is not obviously easy.

\paragraph{Random geometric graphs.} A random geometric graph is obtained by taking the vertices as random unit vectors in $\R^d$,
and connecting two vertices $u,v \in \R^d$ based on their distance $\normt{u-v}$. We consider the distribution
$\cG_{n,d,\gamma}$, where the vertices are $n$ random unit vectors in $\R^d$, and we connect two vectors if
$\normt{u-v} \geq 2\sqrt{1-\gamma}$. By construction, \gwsdp will have value $1-\gamma$ on these graphs, but, as shown
by~\cite{FeigeSc02}, as long as $n$ is not too small these graphs will have with high probability a maximum cut value
of $1-c\sqrt{\gamma}$ for some absolute constant $c$. Moreover, as we observe here, for a suitable choice of $n$, these
graphs will also be hard instances for the Sherali-Adams~\cite{SheraliAd90} \emph{linear} programming hierarchies;
these are generally incomparable with \gwsdp and have been shown to solve \maxcut on \emph{dense}
graphs~\cite{VegaKe07}. Nevertheless, we show here that these graphs can be certified to have small max cut in
polynomial time. (A certification algorithm that the max-cut of a random graph from a distribution is at most $v$ is an
algorithm whose output always upper bounds the max-cut, and with high probability the output is at most $v$.)

\begin{itheorem}[\maxcut on random geometric graphs, see Theorem~\ref{thm:sdp3}]
\label{thm:maxcut} There is a polynomial-time algorithm that certifies that a random graph $G$ from $\cG_{n,d,\gamma}$
satisfies $\maxcut(G) \leq 1 - \Omega(\sqrt{\gamma})$, for every $\gamma \in (0,1)$, $d\in\N$ and $n \geq
C(\gamma)/\mu(\gamma,d)$, where $\text{\textsc{Max Cut}}(G)$ denotes the fraction of edges cut by the maximum cut in
$G$, $C(\gamma)$ is some constant depending only on $\gamma$, and $\mu(\gamma,d)$ denotes the normalized measure in the
unit sphere of the ball of radius $\sqrt{2\gamma}$ around some unit vector.
\end{itheorem}

By a simple calculation one can show that the probability that two random unit vectors $u,v$ in $\R^d$ will satisfy
$\normt{u-v} \geq 2\sqrt{1-\gamma}$ is exactly $\mu(\gamma,d)$, implying that if $n \ll 1/\mu(\gamma,d)$ the graph
$\cG_{n,d,\gamma}$ will have average degree $\ll 1$ (and hence has a trivial large max cut). Thus the value of $n$ that
Theorem~\ref{thm:maxcut} applies to is at most a constant factor larger than the minimum possible. The algorithm $A$ of
Theorem~\ref{thm:maxcut} is simply a tightening of the relaxation $\gwsdp$ obtained by adding the so-called ``triangle
inequalities'' to that program.

\subsection{Subsampling mathematical relaxations}

The other question we consider is whether the value of mathematical relaxations such as linear and semi-definite
programming is preserved under subsampling. That is, given a CSP instance $\phi$ on $n$ variables, we consider the
instance $\phi'$ obtained by choosing at random $S \subseteq [n]$ of some specified size, and keeping only the
constraints involving only variables in $S$. We ask in what cases the value of the relaxation of $\phi'$ is close to
the value of $\phi$.

This question is a variant of property testing~\cite{Ron00,Rubinfeld06} that we believe is interesting in its own
right. It also has algorithmic applications. Subsampling gives a fast way to ``sketch'' a CSP in a way that preserves
the the objective value but using a much smaller instance size. But since we generally cannot compute this objective
value in the worst case, we'd want to make sure that if $\phi$ was an ``easy instance'' for our algorithm, then $\phi'$
will be such an instance as well. A subsampling theorem for mathematical relaxations guarantees this property.

Subsampling for the objective value of constraint satisfaction problem (namely the fraction of satisfied constraints)
was studied before by Goldreich, Goldwasser and Ron~\cite{GoldreichGoRo98} who gave a subsampling theorem for \maxcut ,
and by Alon, de la Vega, Kannan and Karpinski~\cite{AlonVeKaKa03} who gave a subsampling theorem for general CSPs. But,
to our knowledge, subsampling for mathematical relaxations was not studied
before.\Mnote{it has been used in a couple of papers that the value of some LP
is preserved under subsampling, e.g., in Alon et al} As we show, unlike the case of the
objective value, subsampling sometimes fails for the value of relaxations, and this depends on the particular
relaxation and CSP.

Another, more minor difference between prior works and ours is that while prior works focused on the \emph{dense} case,
considering $k$-CSPs with  $\Omega(n^k)$ constraints, we consider general, possibly non dense, CSPs, and wish to
optimize the trade-off between the sample size and density. We say that a $2$-CSP
is $\Delta$-dense if every variable
appears in at least $\Delta$ constraints, and use a suitable generalization of this notion to $k$-CSPs (see
Section~\ref{sec:subsample-kcsp}). We show a subsampling theorem for the objective value of  $\Delta$-dense CSPs with
the optimal sample of size $O(n/\Delta)$. Namely, we show that the value of the induced instanced is equal to the value
of the original instance up to $1 \pm \e$ multiplicative factor, where $O$ notation in the sample size hides polynomial
factors in $1/\e$.  The only prior work to consider this trade-off was by Feige and Schechtman~\cite{FeigeSc02}, who
gave such a result for \maxcut with $O(n\log n/\Delta)$ sample size.

Our results for subsampling mathematical relaxations of CSPs are the following (see Section~\ref{sec:subsample-sdp}
and~\ref{sec:subsample-ug} for formal statements).  In all cases we consider a $\Delta$-dense CSP $\cP$ and a
subformula of $\cP'$ induced on a random subset of $\poly(1/\e)(n/\Delta)$ variables, and we let $\Pi(\cP)$ be the
value of the relaxation $\Pi$ on $\cP$.\footnote{For the positive results, our sample size is as small as possible; the
negative results hold also for much larger sample size and in particular show that one cannot get a constant size
subset even if $\Delta=\Omega(n)$, see Section~\ref{sec:negative}.}

We start by showing that subsampling holds for \basicsdp and \basiclp, where $\basicsdp$ is the semi-definite program
considered by Raghavendra~\cite{Raghavendra08} and \basiclp is its linear programming analog.\footnote{We actually use
the ``smoothed'' version of Raghavendra's SDP considered in
\cite{RaghavendraSt09a,Steurer10}--- see
Section~\ref{sec:subsample-sdp}. The two programs are closely related,
and~\cite{Raghavendra08}'s result holds for the
smoothed version as well.}

\begin{itheorem}[Subsampling for \basicsdp and \basiclp, see Section~\ref{sec:subsample-sdp}]
\label{item:subsample-basic} In the notation above, for any CSP $\cP$ and for $\Pi$ that is either $\basicsdp$ or
$\basiclp$,
\[
 \Pi(\cP) - \e \leq \Pi(\cP') \leq \Pi(\cP) + \e
\]
\end{itheorem}

We then show that for stronger SDPs, we still have weak subsampling if the CSP is a unique game.

\begin{itheorem}[Subsampling for unique games, see Theorem~\ref{thm:subsample-ug}] \label{ithm:subsample-ug}
In the notation above, if $\cP$ is a unique game,
then for every $k\in\N$, letting  $\gamma = 1- \basicsdp_k(\cP)$,
 \[
1 - \gamma - \e \leq \basicsdp_k(\cP') \leq 1-\gamma/9 + \e
\] where $\basicsdp_k$ denotes \basicsdp augmented with $k$ rounds of the
Lasserre hierarchy.
\end{itheorem}

Theorem~\ref{ithm:subsample-ug} is the main technical contribution of this paper, and also the one used to obtain our
algorithm for \maxcut on random geometric graphs. We also have negative results that complement our positive results
and show that, in contrast to the case of the objective value, subsampling sometimes fails for mathematical
relaxations.

\begin{itheorem}[Negative results for subsampling, see Theorems~\ref{thm:no-subsample-sdp} and~\ref{thm:sa}]
There is a (non unique) CSP $\cP$ and absolute constant $\delta>0$ for which $\basicsdp_{O(1)}(\cP)\leq 1-\delta$ but
with high probability $\basicsdp_{\sqrt{n}}(\cP') \geq 1-o(1)$.  There is a unique CSP $\cP$ and absolute constant
$\delta>0$ for which $\basiclp_3(\cP) \leq 1-\delta$ but with high probability $\basiclp_{\omega(1)}(\cP) \geq 1-o(1)$,
where $\basiclp_k$ denotes \basiclp augmented with $k$ rounds of the Sherali-Adams hierarchy, and $o(1)$ (resp.
$\omega(1)$) denotes a function that tends to $0$ (resp. $\infty$) with $n$.
\end{itheorem}

See Figure~\ref{fig:subsampling} for an overview of our positive and negative results on subsampling mathematical
relaxations. As one can see, we cover most of the cases, with the most interesting (in our opinion) open question is
whether strong semidefinite programs for unique CSPs such as \maxcut actually have \emph{strong} subsampling, in the
sense that the value of the program on a subsample approximates the value on the original instance with arbitrary
accuracy. We suspect that the answer is ``no'', though have no proof of that.

\begin{figure}
\begin{center}
\includegraphics{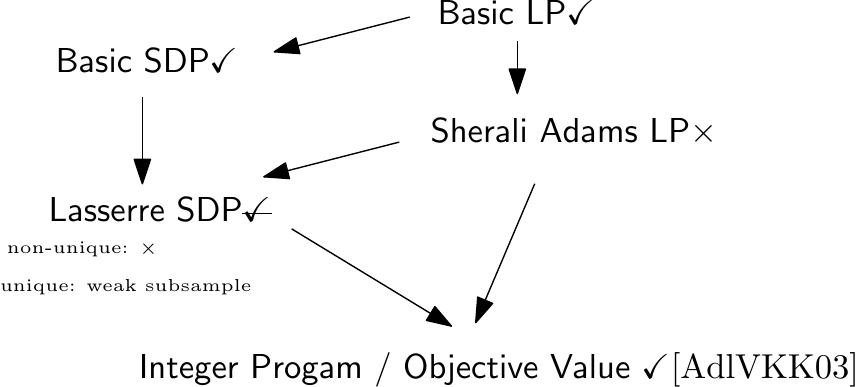}
\caption{Overview of subsampling results. {\footnotesize \checkmark denotes a subsampling theorem, while $\times$ denotes
that subsampling fails. Arrows point from weaker to tighter relaxations. } }
\label{fig:subsampling}
\end{center}
\end{figure}

\subsection{Subsampling SDPs and average-case complexity.}

As mentioned above, we use Theorem~\ref{ithm:subsample-ug} (weak subsampling theorem for strong SDPs on unique CSPs) to
show Theorem~\ref{thm:maxcut}--- the \maxcut algorithm for random geometric graphs. Theorem~\ref{thm:maxcut} is
obtained from our subsampling theorem as follows: we first show that $\basicsdp_3(G_{d,\gamma}) \leq 1-
\Omega(\sqrt{\gamma})$ where $\basicsdp_3$ denotes the $\basicsdp$ program augmented with the triangle inequalities,
and $G_{d,\gamma}$ is the graph on the continuous $d$-dimensional sphere where we connect two unit vectors $u,v$ if
$\normt{u-v} \geq 2\sqrt{1-\gamma}$. (Equivalently, one can think of $G_{d,\gamma}$ as a random graph from
$\cG_{n,d,\gamma}$ where $d,\gamma$ are fixed and $n$ tends to infinity.) We show this by observing that the edges of
$G$ can be partitioned into essentially disjoint union of \emph{odd cycles} of length $O(1/\sqrt{\gamma})$, and
noting that triangle inequalities can capture the fact that  one cannot cut all the edges of an
odd cycle. Since random geometric graphs are simply subsamples of $G_{d,\gamma}$, our subsampling theorem implies that
$\basicsdp_3$ will have value in $1-\Theta(\sqrt{\gamma})$ for these graphs.

This algorithm is an instance of a general recipe for using our subsampling theorem for average-case algorithms. Many
natural distributions can be thought of as random subsamples of some instance (or family of instances)~$\phi$ (e.g.,
random graphs are subsamples of a random dense graph, random 3SAT are subsamples of a random dense formula). In such
cases, if one can give a relaxation that gives a tight value on $\phi$ (perhaps by exploiting its density) and the
relaxation admits subsampling, then it follows that the relaxation succeeds on the distribution of subsamples as well.
In our case, even though sufficiently many rounds on Sherali-Adams hierarchy give a tight value on $G_{d,\gamma}$
(since, considering $\gamma$ as constant, it is dense), we cannot use those directly as they do not admit subsampling.
Similarly, even though $\basicsdp$ admits subsampling, it does not yield a tight value on dense 3SAT formulas, which is
the reason our results do not refute Feige's hypothesis~\cite{Feige02} on the hardness of certifying that random 3SAT
formulas are unsatisfiable.

We note that subsampling theorems have been used before for approximation
algorithms for CSPs, but in a different way.  Prior works used subsampling of
the objective value to show worst-case approximation algorithms for dense
graphs, by showing that one can first subsample to constant size and then
solve the problem using brute force on the
sample~\cite{AlonVeKaKa03}
(or use that argument to show that
linear programming hierarchies will succeed on the original
instance~\cite{VegaKe07}). In contrast we use subsampling of the relaxation
value to give average-case algorithms on some specific distributions of
(possibly sparse) graphs. Our result is also one of the few examples where
higher order SDPs can succeed in an algorithmic task in which \basicsdp fails.
As mentioned above, if the unique games conjecture is true, then \basicsdp is
an optimal worst-case approximation algorithm for CSPs, though of course it
can be worse than other efficient algorithms on some (distributions of)
inputs.

\subsection{Related work}

As mentioned above, there has been many works on estimating graph parameters from random small induced subgraphs of
\emph{dense} graphs. Goldreich, Goldwasser and Ron \cite{GoldreichGoRo98} show that the the Max-Cut value of a dense
graph (degree $\Omega(n)$) is preserved by subsampling. (In this and other results, the constants depend on the quality
of estimation.)  Feige and Schechtman~\cite{FeigeSc02} showed that the result holds generally for $\Delta$-dense graphs
so long as the degree $\Delta \geq \Omega(\log n)$ and the subgraph is of size at least $\Omega(n\log n/\Delta)$. (As a
corollary of our results, we slightly strengthen \cite{FeigeSc02}'s bounds to hold for any $\Delta>\Omega(1)$ and
subgraph size larger than $\Omega(n/\Delta)$.) Alon et al~\cite{AlonVeKaKa03} generalize \cite{GoldreichGoRo98} for
$k$-CSP's and improve their quantitative estimates. See also \cite{RudelsonVe07} for further
quantitative improvements in the case of $k=2$.  

There has also been much work on matrix and graph sparsification by means other than \emph{uniform} sampling, see for
instance~\cite{SpielmanTe04,AroraHaKa06,AchlioptasMc07, SpielmanSr08,BatsonSpSr09}. Indeed, \emph{spectral sparsifiers}
are stronger than the notion we consider, in the sense that passing to a spectral sparsifier will preserve the SDP
value for, say, \maxcut . Algorithmically though, if one only wants to
preserve the SDP value, there are some advantages to subsampling, as it
reduces not just the number of edges but also the number of vertices, hence
potentially yielding \emph{sublinear} algorithms, and can also be carried out
very efficiently by just random sampling,
reducing to a subgraph of constant degree. In contrast constant degree spectral sparsification~\cite{BatsonSpSr09}
cannot be achieved by sampling vertices (or even edges for that matter) uniformly at random, even for regular graphs.

\section{Overview of proofs} \label{sec:overview}

In this section we give a high level overview of our proofs, focusing on our main result---
Theorem~\ref{ithm:subsample-ug} showing a weak subsampling theorem for strong semidefinite programs for unique games. A
$k$CSP $\mathcal{P}$ on an alphabet $[q]$ is a collection of local functions (called ``constraints'') from $[q]^n\to
[0,1]$, where for $x\in [q]^n$ we denote $\cP(x) = \tfrac{1}{|\cP|}\sum_{P\in\cP}P(x)$. If $U$ is a set of variables,
then $\cP[U]$ denotes the restriction of $\cP$ to those constraints that depend on variables in $U$. We'll let $\cP'$
denote $\cP[U]$ where $U$ is a random subset of size set to an appropriate parameter (that we ignore in this overview).

\subsection{Subsampling for $k$-CSPs and \basicsdp} \label{sec:overview:kcsps}

Alon, de la Vega, Kannan and Karpinski~\cite{AlonVeKaKa03} proved a subsampling theorem for $k$-CSP. As a first step,
we extend their results to hold with a better dependency between the sample size and density, and to hold for
constraints that can output a real number, say in $[0,1]$, rather than just a Boolean value. The latter extension is
trivial, but the former (which we need for our \maxcut application) requires some work, adapting and refining
techniques of \cite{GoldreichGoRo98,FeigeSc02,VegaKe07}.  Our subsampling theorem for (generalized) $k$-CSPs is stated
in Section~\ref{sec:subsample-kcsp} and proven in Section~\ref{sec:proof-kcsp}.

\paragraph{Subsampling for \basicsdp.} \basicsdp is the semi-definite program for $k$-CSPs
considered by Raghavendra~\cite{Raghavendra08}, who showed that it gives an optimal  approximation ratio in the
worst-case under the unique games conjecture. For a given $k$CSP $\cP$ over alphabet $[q]$, this program assigns a
vector $v_{i,a}$ for every variable $x_i$ and alphabet symbol $a\in [q]$ of
$\cP$. It also assigns $q^k$ numbers $\mu_{P,1^k},\ldots,\mu_{P,q^k}$ for
every constraint $P$ of $\cP$. It makes the following consistency requirement
on
$\{ v_{i,a} \}$ and $\{ \mu_{P,x} \}$--- the inner product of $v_{i,a}$ and $v_{j,b}$ should match the probability of
the event ``$x_i=a$ AND $x_j=b$'' in any local distribution $\mu_P$ involving both variables $x_i$ and $x_j$ (this can
be captured by linear and semi-definite conditions).  The value of the CSP is simply the expectation of $P(x)$ over a
random constraint $P$ and a random partial assignment $x$ chosen from $\mu_P$. (To avoid the potential issue of the SDP
being extremely sensitive to few of the constraints, we
follow~\cite{RaghavendraSt09a,Steurer10} in allowing a bit of
slackness in the consistency constraints on $\mu_P$.)

Our subsampling theorem for \basicsdp, proven in Section~\ref{sec:subsample-sdp}, follows from the general subsampling
for $k$-CSPs. The idea is to combine two observations: (1) because the assignment to the vectors $\{ v_{i,a} \}$
determines the best choice for the local distribution, it is possible to write $\basicsdp$ as a program that has no
constraints and needs to maximize a sum of local functions over these vectors, (2) one can use dimension reduction to
assume that the vectors have \emph{constant} dimension with little loss of
accuracy~\cite{RaghavendraSt09a}. Thus by
discretizing this constant dimensional space, we can think of $\basicsdp$ as itself a CSP over some constant sized
alphabet, and apply our $k$-CSP subsampling theorem to this CSP. A similar (even simpler) reasoning applies to the
linear programming variant $\basiclp$, and also to quadratic programs, in particular implying a variant of property
testing for positive semi-definiteness, see Section~\ref{sec:subsample-kcsp}.

\subsection{Weak subsampling for strong SDPs} \label{sec:overview:weak}

We now give a high level overview of the proof of Theorem~\ref{ithm:subsample-ug}. Because stronger SDPs such as those
from the Lasserre hierarchy actually involve constraints including several vectors, they cannot be expressed as a CSP
in the same way as \basicsdp. Indeed, we have negative results showing that subsampling can fail for these SDPs (see
Sections~\ref{sec:overview:neg} and~\ref{sec:negative}).

The result actually does not depend on the particular properties of the Lasserre hierarchy, and holds for a very
general class of strong SDPs. We start by formalizing this class. Any strong SDP can be thought of as the program
\basicsdp augmented with the constraint that the positive semi-definite matrix $X$ of the inner products of all these
vectors is in some convex set $\cM$. But one needs the set $\cM$ to satisfy some ``niceness conditions'' in order for
it to make sense to apply the program on a subsampled CSP. The niceness conditions we consider are rather mild, and
require that solutions remain valid under renaming and identifying of vertices (see
Section~\ref{sec:subsample-ug}). In particular they apply to any SDP obtained by adding a number of Lasserre rounds
to \basicsdp .

If $\Pi$ is any strong SDP, $\cP$ is a CSP, and $\cP'$ is a subsample of $\cP$, then it's not hard to show that with
high probability $\Pi(\cP') \geq \Pi(\cP) - \e$, since this only needs the
argument that the value of one solution
(the optimal one for $\cP$) will be approximately preserved. The challenging task is to show that $\Pi(\cP')$ is not
much larger than $\Pi(\cP)$, and because subsampling does not always hold for SDPs, we know that the proof for
subsampling of $k$-CSPs does not generalize to this case.

The crucial notion we use is of that of a \emph{proxy CSP}. Let $\cG$ and $\cH$ be two unique games on the same
alphabet and number of variables, we say that $\cH$ is \emph{proxy} for $\cG$ (with respect to the program $\Pi$), if
for \emph{every} assignment $X$ (even possibly outside $\cM$) to the vectors of $\Pi$, $1-\Pi(\cG)[X] \leq
1-\Pi(\cH)[X]/10$, where $\Pi(\cP)[X]$ denotes the value of the program $\Pi$ on the CSP $\cP$ with assignment $X$ to
the vectors of $\Pi$.\footnote{The actual domination condition we use will restrict the possible vector assignments
based on the norms of the vectors, but because we restrict the vectors to a product set, it does not make a difference
in our arguments.} That is, one can think of $\cH$ as pointwise dominating $\cG$ with respect to the program $\Pi$. We
then show that this domination condition is somewhat preserved under subsampling, at least for the optimal solutions.
That is, we show that with high probability $1- \Pi(\cG')  \leq 1 -\Pi(\cH')/10 + \e$, where $\cG'$ and $\cH'$ are the
subsampled versions of $\cG$ and $\cH'$. The idea here is to use our subsampling theorem for SDP looking at the SDP
$\max_X \Pi(\cH)[X]/10 - \Pi(\cG)$. This is a basic SDP since it places no constraints on $X$, and so since we know its
optimum is at most $0$, this should be approximately preserved under subsampling.

The above discussion shows that to prove Theorem~\ref{ithm:subsample-ug} it suffices to find some unique game $\cH$
such that (*) $\cH$ is a proxy for $\cG$ and (**) with high probability $1-\Pi(\cH') \leq 1 - \Pi(\cG) + \e$. This is
what we do. The proxy game $\cH$ is simply the game $\cG^3$ obtained by taking all length-$3$ paths in the constraint
graph of $\cG$ and composing the corresponding permutations. Condition (*) is not that hard to show. Intuitively, an
assignment that satisfies $1-\gamma$ fraction of the constraints of $\cG$ should satisfy about $1-3\gamma$ fraction of
the constraints of $\cH$, (since each one is just three constraints of $\cG$) and this reasoning carries over to SDP
assignments as well.

Condition (**) looks suspiciously close to what we're trying to prove in the first place (preservation of value under
subsampling), but note the asymmetry--- we need to show that a subsample of $\cH$ will have roughly the same value as
the original graph $\cG$. It turns out this will actually help us. What we need to show is a way to decode an
assignment for the SDP of the subsampled game $\cH'$ into an assignment of roughly the same value for the SDP of the
original game $\cG$. For simplicity, assume that the alphabet of the CSP is $\{0,1\}$ in which case the vector
assignment is just one vector per variable.\footnote{Although in the phrasing above it seems that one would need two
vectors per variable for alphabet of size $2$, it is known how to transform the SDP into an equivalent program needing
only one vector per variable in this case.} Suppose that $\cG$ has $n$ variables, each participating in $\Delta$
constraints, and we subsample to a set $S$ of size $n' = O(n/\Delta)$ variables.\footnote{Note that, ignoring constant
factors, $\cG$ has roughly $n\Delta$ constraints, $\cG'$ has $n/\Delta$ constraints, $\cH$ has $n\Delta^3$ constraints,
and $\cH'$ has $n\Delta$ constraints--- the latter fact is some indication why one may hope to decode an assignment to
$\cH'$ into an assignment to $\cG$.} We are given a vector assignment $\{ v'_{i'} \}_{i'\in S}$ for each of the $n'$
variables in the sample that gives value $\tau$ for $\Pi(\cH')$, and need to ``decode'' it into an assignment $\{ v_i
\}_{i\in [n]}$ that gives value roughly $\tau$ for $\Pi(\cG)$. We will use a randomized decoding, assigning for every
variable $i$ of $\cG$  the vector $v_{i'}$ where $i'$ is a random neighbor of $i$ in $\cG$ that is contained in the
sample $S$.\footnote{Our ``niceness conditions'' will ensure that if the inner-product matrix of the original
assignment was in $\cM$ then the same will hold for the decoded assignment. Also we will flip the vector if the
corresponding permutation on $\{0,1\}$ was $a \mapsto -a$, but in the discussion below as assume that all permutations
involved are the identity--- this simplifies notation and is immaterial to this argument.} Let $(i',i,j,j')$ be the
length-$3$ path corresponding to a random constraint of $\cH'$ that survived the subsampling. That is, $i',j' \in S$.
If the subsampled graph is (approximately) regular, we can choose $(i',i,j,j')$ in the following way: first let $(i,j)$
be variables corresponding to a random constraint in $\cG$, then take $i'$ to be a random neighbor of $i$ that is also
in $S$, and take $j'$ to be a random neighbor of $j$ that is also in $S$. We know that on average the vectors $v'_{i'}$
and $v'_{j'}$ contribute $\tau$ to the value of $\Pi(\cH')$. But then on expectation the contribution to $\Pi(\cG)$ of
the decoded vectors $v_i$ and $v_j$ is also $\tau$, since $v_i$ is exactly obtained by taking $v'_{i'}$ for a random
neighbor $i'\in S$ of $i$, and $v_j$ is obtained by taking $v'_{j'}$ for a random neighbor $j'\in S$ of $j$. This
concludes the proof. We remark that this reasoning is somewhat reminiscent of Dinur's analysis of her gap amplification
lemma for PCP's~\cite{Dinur07}.

\subsection{Negative results for subsampling} \label{sec:overview:neg}

We now briefly sketch why, unlike the case for $k$CSPs, subsampling sometimes
\emph{fails} for strong semidefinite and linear programs---- see
Section~\ref{sec:negative} for more details. The idea is simple: many
integrality gaps examples, for both LP hierarchy and SDP's, are actually
obtained from random instances. Examples include Schoenebeck's
result~\cite{Schoenebeck08} showing random 3SAT is an integrality gap example
for the Lasserre SDP hierarchy, and results showing that random graphs (and
more generally good expanders) are integrality gap examples for linear
programming hierarchies for \maxcut~\cite{VegaKe07,CharikarMaMa09}. Such
random instances can be thought of as subsampling of sufficiently dense
instance. But sufficiently strong SDP or LP  programs will succeed in
certifying that a dense instance has small value.
Thus these
integrality gaps give example of a CSP $\cP$  where $\Pi(\cP)$ is small, where
$\Pi$ is a sufficiently strong linear or semidefinite program, but $\Pi(\cP')$
is close to $1$ for a random induced sub-instance $\cP'$ of $\cP$. Note that
indeed for unique games random graphs are actually easy for semi-definite
programs~\cite{AroraKhKoStTuVi08}, explaining perhaps why subsampling for
unique games is possible for semi-definite programs but not for linear
programs.

\section{Preliminaries}

Let $G$ be a $\Delta$-regular graph with vertex set $V=[n]$ and edge
set $E$ (no parallel edges or self-loops).
We give weight $\nfrac2{\Delta n}$ to each edge of $G$ so that every
vertex of $G$ has (weighted) degree $2/n$ and $G$ has total edge
weight $1$.
We say a graph is \emph{normalized} if it has total edge weight $1$.
(We choose this normalization, because we will often think of a graph
as a probability distribution over unordered vertex pairs.)
For a graph $G$ as above and a vertex subset $U \subseteq V$, let
$G[U]$ denote the \emph{induced subgraph} on $U$.
To preserve our normalization, we scale the weights of the edges of
$G[U]$ such that the total edge weight in $G[U]$ remains~$1$.
We denote by $V_{\delta}$ a random subset of a $\delta$ fraction of
the vertices in $V$, and hence $G[V_{\delta}]$ denotes a random
induced subgraph of $G$ of size $\delta|V|$.
With our normalization, the typical weight of an edge in
$G[V_\delta]$ is $\nfrac 2{\delta^2\Delta n} $.
\ifnum\full=0
\vspace{-3mm}
\else
\fi

\paragraph{Max $k$-CSPs.}
A $k$-CSP instance ${\cal P}$ is a set of predicates (or pay-off functions)
of the form
$P\colon[q]^n\to\R,$ where every $P=P(x_{i_1},x_{i_2},\dots,x_{i_k})$ is a $k$-junta,
meaning it depends only on $k$ of the $n$ variables in $x.$ We'll think of
$\var(P)=(i_1,\dots,i_k)$ as an ordered set and denote the $r$-th variable by
$\var_r(P)=i_r.$ Without loss of generality we may assume that in each
predicate $P\in\cP,$ all $k$ variables are distinct.
The norm of a pay-off function is defined as $|P|\defeq\max_{x\in[q]^n}|P(x)|\mcom$ and we put
$|\cP|=\sum_{P\in\cP}|P|.$ \Bnote{Should this be expectation?}

We think of~$\cP$ itself as a mapping
$\cP\colon[q]^n\to\R$ defined as
${\cal P}(x) \defeq \frac1{|\cP|}\sum_{P\in\cP} P(x)\mper$
The optimum is denoted $\opt(\cP)=\max_{x\in[q]^n}\cP(x).$
We will typically assume that
$|P|\le1$ for all $P\in\cP$ in which case $\opt(\cP)\le 1.$
For a subset $U\subseteq[n],$ with $|U|=\delta n,$
we let $\cP_U$ denote the~$k$-CSP
$\cP_U = \{ \delta^{-k}P \colon P\in \cP, \var(P)\subseteq U \}\mper$
In this case, we define $\cP_U(x)=\frac1{|\cP|}\sum_{P\in\cP_U}P(x)\mper$

\ifnum\full=0
\vspace{-3mm}
\else
\fi
\paragraph{Unique Games.}
A unique game $\cG$ is given by a constraint graph $G=(V,E),$ an alphabet~$[R]$ and constraints $\pi_{v\la u}$ for each
edge $e=(u,v)\in E.$ An assignment $x\in[R]^n$ satisfies the edge $e$ if $\pi_{v\la u}(x_u)=x_v.$ It will be convenient
for us to define unique games as a \emph{minimization} problem in which the objective is to minimize the number of
unsatisfied constraints. Note that throughout the introduction \uniquegames
was a maximization problem, but these two views are equivalent. As a minimization problem unique game has the following SDP relaxation (which is closely
related to \basicsdp program mentioned before):
\begin{equation}\label{eq:unique-sdp}
\sdp(\cG)\defeq
\min
\E_{(u,v)\in E} \E_{a\in[R]} \|u_a - v_{\pi_{v\la u}(a)}\|^2
\end{equation} subject to the constraints that $\sum_{a\in[R]}\|u_a\|^2=1$ for
every $u\in V$ and
$\langle u_a,u_b\rangle = 0$ for all $u\in V$ and $a\ne b$.
An SDP solution is a positive semidefinite
$(V\times[R])\times(V\times[R])$ matrix written as
$(X_{ua,vb})_{u,v\in V,a,b\in[R]}$ so
that $X_{ua,vb}=\langle u_a,v_b\rangle.$
We will denote by $\cM_2$ the set of such matrices that satisfy the three
constraints above. We will write $\sdp(\cG)[X]$ to denote the value of
$\sdp(\cG)$ under the particular solution~$X.$ We denote by $\cG[U]$ the
unique game $\cG$ restricted to the constraint graph $G[U].$
%


\section{Subsampling theorem for Max-$k$CSPs}
\label{sec:subsample-kcsp}

We will now state our subsampling theorem for $k$-CSPs and, as direct application, obtain subsampling theorems for
basic semidefinite relaxations of $k$-CSPs. To state the theorem we need a notion of \emph{density} of a $k$-CSP. For
$2$-CSPs we will use the standard notion of density in a graph. Specifically, we will say a $2$-CSP is
\emph{$\Delta$-dense} if every vertex has $\Theta(\Delta)$ neighbors. For $k$-CSPs when $k>2$
a natural generalization is to demand that after assigning $k-1$ out of $k$ coordinates in each constraint, there are
still $\Theta(\Delta)$ constraints remaining. In this case we say that the $k$-CSP is $\Delta$-dense.
\begin{theorem}\label{thm:subsample-kcsp-1}
Let $\eps>0,\Delta>1.$
Let $\cP$ be a $\Delta$-dense $k$-CSP in $n$ variables over an alphabet of
size $q$ so that $|P|\le1$ for all $P\in\cP.$
Put $\delta\ge\eps^{-C}\log(q)/\Delta$ for some absolute constant $C.$
Suppose $U\subseteq[n]$ is chosen uniformly at
random so that $|U|=\delta n.$ Then,
\begin{equation}
\left| \E\opt(\cP_U)-\opt(\cP) \right| \le\epsilon \mper
\end{equation}
\end{theorem}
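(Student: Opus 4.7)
The plan is to prove the two sides of $\lvert\E\opt(\cP_U) - \opt(\cP)\rvert \leq \eps$ separately. The lower bound $\E\opt(\cP_U) \geq \opt(\cP) - \eps/2$ is the easier direction. I would take any $x^\ast \in [q]^n$ attaining $\cP(x^\ast) = \opt(\cP)$ and use its restriction $x^\ast|_U$ as a feasible test assignment for $\cP_U$. A direct calculation gives $\E_U\, \cP_U(x^\ast|_U) = \cP(x^\ast) = \opt(\cP)$, because a fixed $k$-tuple of variables lies in $U$ with probability $\approx \delta^k$, cancelling the $\delta^{-k}$ rescaling in the definition of $\cP_U$. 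Concentration of $\cP_U(x^\ast|_U)$ to within $\eps/2$ of its mean follows from a McDiarmid/Azuma bounded-differences argument applied to the sampling of $U$: swapping one variable in or out only changes $\cP_U(x^\ast|_U)$ by an amount controlled by the degree of that variable, which is $O(\Delta/\lvert\cP\rvert)$ by the $\Delta$-density assumption.

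The upper bound $\E\opt(\cP_U) \leq \opt(\cP) + \eps/2$ is the technical heart of the theorem. A naive approach would union-bound the concentration of $\cP_U(z|_U)$ around $\cP(z)$ over all $z \in [q]^n$, but this is far too lossy at the stated value of $\delta$. Instead, the plan is to use a ``sample-and-greedy-extend'' paradigm in the spirit of Goldreich--Goldwasser--Ron, Alon--de la Vega--Kannan--Karpinski, and de la Vega--Kenyon. Within $U$, identify a further random core sample $S \subseteq U$ of size $s = C \log q/\eps^2$, and for each $y \in [q]^S$ define an extension $\Phi(y) \in [q]^n$ by setting $\Phi(y)_i = y_i$ for $i \in S$ and, for $i \notin S$, setting $\Phi(y)_i$ to the value $a\in[q]$ that maximises the empirical $y$-conditional pay-off computed from the constraints of $\cP$ whose remaining $k-1$ variables lie in $S$. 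The core lemma is that, with probability $1-o(1)$ over $S$,
\[
\opt(\cP_U) \;\leq\; \max_{y \in [q]^S} \cP_U\bigl(\Phi(y)|_U\bigr) + \eps/4
\quad\text{and}\quad
\opt(\cP) \;\geq\; \max_{y \in [q]^S} \cP\bigl(\Phi(y)\bigr) - \eps/4 \mper
\]
For each \emph{fixed} $y$ (and thus fixed $\Phi(y)$) a Hoeffding/Azuma bound gives $\cP_U(\Phi(y)|_U) \leq \cP(\Phi(y)) + \eps/4$ with probability $1 - \exp(-c\eps^2 \delta^2 n \Delta)$; taking a union bound over only $q^s = \exp(O(\log^2 q/\eps^2))$ extensions then yields $\opt(\cP_U) \leq \opt(\cP) + \eps$, as desired.

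The main obstacle will be establishing the core lemma, namely that the optimum of $\cP_U$ (and of $\cP$) is essentially realised by the greedy extension from some assignment on a sample of size only $O(\log q/\eps^2)$. The technical content is that this small sample $S$ accurately ``profiles'' the optimal contribution of each outside variable, which requires sharp control not only on the degree of each variable (given directly by $\Delta$-density) but also on the \emph{co-degrees} --- the number of constraints containing any fixed pair, triple, \ldots, of variables --- since these govern the variance of the empirical conditional pay-offs used by $\Phi$ as well as the failure probability of the Azuma bound. Establishing these higher-order density bounds under the stated hypothesis and plugging them into the concentration arguments above is the main refinement over the constant-sample-size results of Alon \etal; it is what enables the optimal trade-off $\delta \geq \eps^{-C}\log q/\Delta$ between sample size and density, in the spirit of the $\Delta$-dense Max-Cut analysis of Feige--Schechtman.
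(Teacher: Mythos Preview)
Your two-step skeleton (easy lower bound by plugging in $x^*$; upper bound by constructing a small family $\Psi$ of candidate assignments and union-bounding a concentration estimate over~$\Psi$) is exactly the paper's architecture. The gap is in the size of the core sample~$S$.

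A seed $S$ of \emph{constant} size $s=O(\log q/\eps^2)$ cannot drive the greedy extension in the $\Delta$-dense regime. Your rule sets $\Phi(y)_i$ using the constraints of $\cP$ that contain $i$ and whose remaining $k-1$ variables lie in $S$. Under $\Delta$-density the expected number of such constraints is $\Theta\bigl(s^{k-1}\Delta/n^{k-1}\bigr)$ (for $k=2$ this is just $\Theta(s\Delta/n)$); whenever $\Delta=o(n^{k-1})$ and $s=O(1)$ this is $o(1)$, so for a typical $i$ the sum defining $\Phi(y)_i$ is empty and $\Phi$ carries no information about~$y$. Your ``core lemma'' $\opt(\cP_U)\le\max_{y}\cP_U(\Phi(y))+\eps/4$ then cannot hold. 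The constant-size sample of Alon--de~la~Vega--Kannan--Karpinski works precisely because they are in the fully dense case $\Delta=\Theta(n)$, and even Feige--Schechtman for $\Delta$-dense \maxcut take a sample that grows with~$n$. The paper instead takes $|S|$ to be a $\poly(\eps)/\log q$ \emph{fraction} of $|U|$, so that $|\Psi|=q^{|S|}=\exp(\poly(\eps)\,|U|)$; the structure lemma then shows that this $\Psi$ captures $\opt(\cP_U)$ up to~$\eps$, by a hybrid argument that re-assigns the variables of $U$ block by block and compares the greedy choice against the true optimum via a further application of the concentration lemma (this is where one uses that $\cP_U$ is itself approximately $\delta\Delta$-dense).

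With $|\Psi|=\exp(\poly(\eps)\,|U|)$ rather than $O(1)$, the concentration side needs to deliver failure probability $\exp(-\poly(\eps)\,|U|)$ for each fixed assignment. A direct McDiarmid on the sampling of $U$ does not give this: the worst-case Lipschitz constant is governed by the maximum, over $i$, of the number of constraints containing $i$ whose other variables already landed in $U$, and this can be far above its expectation. The paper handles this by sampling the $k$ variable positions in $k$ independent stages and, after each stage, \emph{pruning} all constraints incident to any variable whose influence has more than doubled; a fourth-moment calculation shows the pruned mass is $O(\eps)$ in expectation, and on the pruned instance McDiarmid applies with Lipschitz constant $O(1/\delta n)$, yielding the required $\exp(-\Omega(\eps^2\delta n))$ tail. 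Your proposal does not contain this step, and controlling ``co-degrees'' alone will not substitute for it.
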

The formal density condition and the proof of this theorem are given in
Section~\ref{sec:proof-kcsp}. We instead proceed to discuss the applications
of this theorem.

\subsection{Subsampling basic semidefinite programs}
\label{sec:subsample-sdp}

The above subsampling theorem for $k$-CSPs can actually be used to give a general subsampling theorem for \emph{basic}
semidefinite programs. A semidefinite program is called \emph{basic} if it can be written as a $2$-CSP $\cQ$ in $n$
variables (over infinite alphabet) of the following form:
\begin{equation}\label{eq:sdp-form}
\opt(\cQ)
=\max\E_{i,j\in[n]}P_{ij}(\{v_{i,a}\}_{a\in[R]},\{v_{j,b}\}_{b\in[R]})
\end{equation}
where $\cQ=\{P_{ij}\}_{i,j\in[n]}$ so that each $P_{ij}$ is a continuous function satisfying a Lipschitz condition in
the inner products $\langle v_{i,a},v_{j,b}\rangle.$ Here the maximum is taken over a bundle of $R$ vectors
$\{v_{i,a}\}$ per variable $i\in[n].$ We further require that each constraint on the vectors involves only vectors from
the same bundle $\{v_{i,a}\}_{a\in[R]}$ (such as $\|v_{i,a}\|^2\le1$ or $\sum_{a\in[R]}\|v_{i,a}\|^2=1$). We also
assume that~$|P_{ij}|\le1.$

It is crucial here that the
maximization is over a \emph{product} space of $n$ coordinates. Each
coordinate corresponds to one vector bundle $\{v_{i,a}\}_{a\in[R]}.$ Still we
cannot yet apply our subsampling theorem, because each coordinate is maximized
over a continuous space, i.e., $(B_2^d)^R.$ However, using dimension reduction as
in~\cite{RaghavendraSt09a}, the dimension of the vectors can be assumed to be
$\poly(1/\eps)$ without changing the objective value by more than an
$\eps/2$. Once the dimension is small we can discretize the space by an
$\eps'$-net (for small enough~$\eps'$) changing the inner products again only
by~$\eps/2.$ Hence we have the following lemma.
\begin{lemma}\label{lem:dim-net}
Let $\cQ$ be a $\Delta$-dense $2$-CSP of the form~(\ref{eq:sdp-form}). Then
there is a $2$-CSP $\cQ'$ with alphabet size at most $2^{\poly(\nfrac1\eps)}$
such that $|\opt(\cQ)-\opt(\cQ')|\le\eps.$
\end{lemma}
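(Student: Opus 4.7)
The plan is to prove the lemma in two conceptually separate steps that together transform the continuous, high-dimensional SDP into a finite-alphabet $2$-CSP with only $\eps$-loss in value, then bound the resulting alphabet size.

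First, I would apply dimension reduction. The constraints $P_{ij}$ are Lipschitz in the inner products $\langle v_{i,a}, v_{j,b}\rangle$, so it suffices to (approximately) preserve all pairwise inner products among the vectors in the optimal solution. Following \cite{RaghavendraSt09a}, a Johnson--Lindenstrauss style projection into dimension $d = \poly(1/\eps)$ (depending also on $R$ and the Lipschitz constant, which are constants here) preserves all inner products of unit-norm vectors up to an additive $\eps'$. Choosing $\eps'$ small enough so that, by the Lipschitz condition on each $P_{ij}$, every constraint value changes by at most $\eps/2$, we get an SDP solution in $\bbR^d$ whose objective differs from $\opt(\cQ)$ by at most $\eps/2$. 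I would note that the norm constraints within a bundle (such as $\sum_a \|v_{i,a}\|^2 = 1$) are preserved (or can be renormalized with negligible loss) because the projection is applied uniformly to each bundle.

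Second, I would discretize the low-dimensional continuous space. The feasible region for one bundle $\{v_{i,a}\}_{a\in [R]}$ lies in $(B_2^d)^R$, a product of $R$ unit balls in $\bbR^d$. I would take an $\eps''$-net $\cN$ of this set in the Euclidean metric, with $\eps''$ chosen small enough that rounding each vector of the bundle to its nearest net point perturbs every inner product $\langle v_{i,a}, v_{j,b}\rangle$ by at most $O(\eps'')$, which via the Lipschitz condition changes each $P_{ij}$ by at most $\eps/2$. Standard volume bounds give $|\cN| \le (3/\eps'')^{Rd} = 2^{\poly(1/\eps)}$. The discretized program is then a $2$-CSP $\cQ'$ over alphabet $\cN$ of size $2^{\poly(1/\eps)}$, with constraints $P_{ij}$ inherited from $\cQ$ (and still satisfying $|P_{ij}| \le 1$). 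Since both transformations are one-sided rounding of feasible solutions, $|\opt(\cQ) - \opt(\cQ')| \le \eps/2 + \eps/2 = \eps$.

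The main subtlety, rather than a genuine obstacle, is that the constraints involving only vectors within a single bundle (such as unit-norm or orthogonality conditions) must remain satisfiable after projection and discretization. I would handle this by folding these intra-bundle constraints into the domain of the discretized alphabet: the alphabet $\cN$ is taken to be the $\eps''$-net of the \emph{feasible} intra-bundle configurations (e.g., nearly-orthonormal $R$-tuples of nearly-unit vectors), so every symbol in $\cQ'$ automatically satisfies the local constraints up to the discretization error. This keeps the alphabet size bounded by $2^{\poly(1/\eps)}$ and makes $\cQ'$ a bona-fide $2$-CSP to which the subsampling theorem of Section~\ref{sec:subsample-kcsp} (Theorem~\ref{thm:subsample-kcsp-1}) can be applied downstream.
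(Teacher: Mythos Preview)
Your proposal is correct and follows essentially the same approach as the paper: dimension reduction via \cite{RaghavendraSt09a} to bring the vectors into $\bbR^{\poly(1/\eps)}$ with at most $\eps/2$ loss, followed by an $\eps'$-net discretization of the bundle space $(B_2^d)^R$ for another $\eps/2$ loss. You have fleshed out details (the volume bound on the net size, and folding intra-bundle constraints into the alphabet) that the paper leaves implicit in the paragraph preceding the lemma, but the argument is the same.
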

This shows that we do have a strong subsampling theorem for any \emph{basic}
semidefinite program:
\begin{corollary} \label{cor:subsample-sdp}
Let $\cQ$ denote a basic semidefinite program.
Assume $\cQ$ is $\Delta$-dense and let $\eps>0.$
Then,
\begin{equation}
\left|
\E\opt(\cQ_U)-\opt(\cQ)
\right|\le \eps\mcom
\end{equation}
where $U\subseteq[n]$ is a randomly chosen set of
size $\eps^{-C}n/\Delta$ for sufficiently large $C>0.$
\end{corollary}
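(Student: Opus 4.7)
The plan is to reduce Corollary~\ref{cor:subsample-sdp} to Theorem~\ref{thm:subsample-kcsp-1} by way of Lemma~\ref{lem:dim-net}, using the latter twice: once on the original instance $\cQ$ and once on the subsampled instance $\cQ_U$. First I would set a precision parameter $\eps' = \eps/3$ and apply Lemma~\ref{lem:dim-net} to the basic SDP $\cQ$ to obtain a $2$-CSP $\cQ'$ over an alphabet of size $q = 2^{\poly(1/\eps)}$ with $|\opt(\cQ) - \opt(\cQ')| \le \eps'$. Since the variable set and the support structure of the constraints are unchanged by dimension reduction and discretization (these operations act only on the per-vertex bundles $\{v_{i,a}\}$), the density parameter $\Delta$ and the CSP-substructure carry over from $\cQ$ to $\cQ'$.

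Next I would apply Theorem~\ref{thm:subsample-kcsp-1} to $\cQ'$ with parameter $\eps'$. Picking $U \subseteq [n]$ of size $\delta n$ with $\delta \ge (\eps')^{-C_0}\log(q)/\Delta = \eps^{-C_0} \poly(1/\eps)/\Delta$, which can be absorbed into $\eps^{-C}/\Delta$ for a sufficiently large constant $C$, we get
\begin{equation}
\bigl|\, \E\, \opt(\cQ'_U) - \opt(\cQ')\,\bigr| \le \eps'\mper
\end{equation}
It remains to relate $\opt(\cQ_U)$ to $\opt(\cQ'_U)$. The key observation is that the operation of restricting a basic SDP to a vertex subset $U$ commutes with dimension reduction and discretization of the vector bundles: $\cQ'_U$ is precisely the $2$-CSP obtained by applying Lemma~\ref{lem:dim-net} to the basic SDP $\cQ_U$ (with the same bundle-wise operations). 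Hence Lemma~\ref{lem:dim-net} applied to $\cQ_U$ for each outcome of $U$ gives $|\opt(\cQ_U) - \opt(\cQ'_U)| \le \eps'$ pointwise, and therefore after taking expectations $|\E \opt(\cQ_U) - \E \opt(\cQ'_U)| \le \eps'$. Combining the three bounds via the triangle inequality yields $|\E\, \opt(\cQ_U) - \opt(\cQ)| \le 3\eps' = \eps$, as desired.

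The only subtlety I anticipate is the claim that Lemma~\ref{lem:dim-net} can be applied \emph{uniformly} to all induced subinstances $\cQ_U$ in a way consistent with the discretization used on $\cQ$. Concretely, dimension reduction in~\cite{RaghavendraSt09a} projects onto a low-dimensional subspace that depends on the instance, so one has to be mindful that the guarantee $|\opt(\cQ_U) - \opt(\cQ'_U)|\le \eps'$ can be obtained by a projection/net argument applied to $\cQ_U$ itself (rather than trying to inherit the net chosen for $\cQ$). Since Lemma~\ref{lem:dim-net} is stated as an existential approximation result depending only on the density and the Lipschitz structure of the $P_{ij}$ (both of which are preserved under restriction to $U$), this obstacle is mild: I simply invoke Lemma~\ref{lem:dim-net} twice, obtaining two different discretizations $\cQ'$ and $(\cQ_U)'$, and note that the intermediate step only needs a sampled CSP whose alphabet size is $2^{\poly(1/\eps)}$, matching the hypothesis needed to apply Theorem~\ref{thm:subsample-kcsp-1}.
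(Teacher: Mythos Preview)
Your approach is exactly the paper's: reduce to Theorem~\ref{thm:subsample-kcsp-1} via Lemma~\ref{lem:dim-net}. The three-term triangle inequality you write out is the natural unpacking of the paper's two-sentence proof, and the observation that $\log q = \poly(1/\eps)$ gets absorbed into the exponent $C$ is exactly the paper's remark.

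Where you trip yourself up is the final paragraph. In the main body you correctly observe that restriction to $U$ commutes with discretization, so that $\cQ'_U=(\cQ_U)'$ when the \emph{same} target dimension and net are used; this is what yields $|\opt(\cQ_U)-\opt(\cQ'_U)|\le\eps'$ for every $U$. But then you back off and propose applying Lemma~\ref{lem:dim-net} separately to $\cQ$ and to each $\cQ_U$, obtaining possibly \emph{different} discretizations $\cQ'$ and $(\cQ_U)'$. That fallback does not close: Theorem~\ref{thm:subsample-kcsp-1} is applied to the fixed CSP $\cQ'$ and controls $\E\opt(\cQ'_U)$, whereas your second invocation of Lemma~\ref{lem:dim-net} controls $\opt((\cQ_U)')$. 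If $(\cQ_U)'\ne\cQ'_U$ you have nothing linking these two quantities, and the triangle inequality does not chain.

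The right resolution is the one you already stated and should simply commit to. The dimension bound $d=\poly(1/\eps)$ and the $\eps'$-net in $(B_2^d)^R$ described before Lemma~\ref{lem:dim-net} depend only on $\eps$ and the Lipschitz structure of the payoffs, not on which constraints $P_{ij}$ happen to be present. Hence one fixed discretization map simultaneously gives $|\opt(\cQ)-\opt(\cQ')|\le\eps'$ and $|\opt(\cQ_U)-\opt(\cQ'_U)|\le\eps'$ for every $U$ (this step uses neither the density of $\cQ_U$, which is good since the subsample need not be $\Delta$-dense, nor anything instance-specific). With that, your argument is complete and agrees with the paper.
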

\begin{proof}
After applying Lemma~\ref{lem:dim-net}, we can use
Theorem~\ref{thm:subsample-kcsp-1} to conclude the claim. Note that the alphabet
size of $2^{\poly(1/\eps)}$ translates into a factor $\poly(1/\eps)$ in
sample size.
\end{proof}
We will next demonstrate that both \basicsdp for $k$-CSPs and
the \uniquegames SDP are in
fact basic relaxation of the above form and therefore have a strong
subsampling theorem.

For the \uniquegames SDP this is immediate after changing it from a minimization problem to a maximization problem. (We
can simply multiply the objective by $-1$.) Note that the SDP relaxation for unique games corresponds to a dense
$2$-CSP if this is the case for the constraint graph. We remark that the same is true for the difference of two dense
\uniquegames relaxations and this is the case that will be used in the proof of our main theorem later
(Section~\ref{sec:subsample-ug}).

More generally, the same can be done for the \basicsdp relaxation of any $k$-CSP. Raghavendra~\cite{Raghavendra08}
defined  \basicsdp for a $k$-CSP $\cP=\{P_1,\dots,P_m\}$ with $|P_t|\le1$ over the alphabet~$[R]$ as the program
\[
\max\E_{t\in[m]}\E_{x\sim\mu_t}P_t(x)
\]
subject to the constraint that $\Pr_{x\sim\mu_t}\{x_i=a,x_j=b\}=\langle v_{i,a},v_{j,b}\rangle$ for all $t\in[m],$
$i,j\in\var(P_t),$ and $a,b\in[R].$ The maximum is taken over all ensembles $\{v_{i,a}\}$ of unit vectors and $R^k$
tuples of variables $\mu_t$, each of which is required to be a probability distribution on $\var(P_t).$ Let
$\mathsf{violate}(t)$ denote the sum of $|\Pr_{x\sim\mu_t}\{x_i=a,x_j=b\}-\langle v_{i,a},v_{j,b}\rangle|$ over all
$i,j \in \var(P_t)$ and $a,b \in R$. While the constraints of~\cite{Raghavendra08} is that $\mathsf{violate}(t)=0$ for
all $t\in [m]$, we follow~\cite{RaghavendraSt09a,Steurer10} that replaced this with the constraint $\E_{t \in [m]}
\mathsf{violate}(t) \leq \e$ and showed the two programs are approximately equivalent up to $\poly(\e)$ perturbation of
the instance. As shown in~\cite{Steurer10}, because there are only a few ($R^2k^2$) constraints per pay-off
function $P_t$, we can introduce this penalty function into the objective function, adding the term $-\E_{t \in [m]}
\mathsf{violate}(t)/\e$ into the expression we maximize. Hence, for our purposes we may assume that \basicsdp has the
form in~(\ref{eq:sdp-form}) so that our subsampling theorem applies. We stress that this approach can only work since
there are a few constraints for each pay-off functions of~$\cP.$ The approach breaks down in the presence of
constraints that involve arbitrary combinations of variables, such as $\ell_2^2$ triangle inequalities. In this case it
is no longer possible to assign a meaningful penalty to each constraint.

In the case of \basiclp similar arguments apply. \basiclp is the same as
\basicsdp except that we don't require the probability distributions to be
realized as inner products of vectors. Two distributions $\mu_P$ and
$\mu_{P'}$ are however required to be consistent whenever they share a
variable. These constraints can be written in the objective function and this
results in a $2$-CSP to which our subsampling theorem applies.

\paragraph{Application to property testing positive definite matrices.}
Our subsampling theorem also applies to quadratic forms and this can be very
useful. We illustrate one application in the context of property testing.
Specifically, we will get a property testing algorithm for
the class of positive semidefinite matrices.
Let us say that a matrix $B$ is \emph{$\eps$-far} from positive
semidefinite definite if there exists a vector $x$ with $\|x\|_\infty\le1$
such that $-\eps\ge\langle x,Bx\rangle=\sum_{ij}b_{ij}x_ix_j.$
Recall that $B$ is positive semidefinite if and only if $\langle x,Bx\rangle\ge0$
for all $x.$
Notice we could have defined distance in terms of the operator
norm which is to say that there exists an $x$ with $\|x\|_2\le1$ such that
$\langle x,Bx\rangle\le-\eps.$ However, since every vector $x$ of Euclidean
norm~$1$ also satisfies $\|x\|_\infty\le1,$ this would only
be a stronger notion of ``$\eps$-far'' thus applying to fewer matrices.
Note that the expression $\max_{x:\|x\|_\infty\le1} \langle x,Bx\rangle$ is a
$2$-CSP to which we can apply our subsampling theorem (after discretization of
the domain.) This lets us distinguish between matrices that are positive
semidefinite and those that are $\eps$-far from a small subsample. Formally,
we get the following corollary. The simple proof is omitted.

\begin{corollary}
\label{cor:proppsd}
Let $B$ by a matrix with $\|B\|_\infty\le D/n^2.$ Then there is a
property testing algorithm ${\cal A}$ such that:
If $B$ is $\eps$-far from being positive semidefinite, then ${\cal A}$ rejects
$B$ with probability greater than~$2/3.$
If $B$ is
positive semidefinite, then ${\cal A}$ rejects $B$ with probability less than~$1/3$. \Tnote{I added $1/3$ here, it was missing, I hope it should not be another constant.}
Furthermore, ${\cal A}$ reads only $\poly(D,\eps^{-1})$ many
entries of $B$ and runs in time $\poly(D,\eps^{-1})$
\end{corollary}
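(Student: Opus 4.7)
The plan is to encode ``$\max_{\|x\|_\infty\le 1}-\langle x,Bx\rangle\ge\eps$'' as the value of a suitable $2$-CSP and apply Theorem~\ref{thm:subsample-kcsp-1}. Define $\cP=\{P_{ij}\}_{i,j\in[n]}$ over alphabet $[-1,1]$ by $P_{ij}(x_i,x_j)=-b_{ij}x_ix_j$. Because $\|B\|_\infty\le D/n^2$ we have $|P_{ij}|\le 1$, $|\cP|=\sum_{ij}|b_{ij}|\le D$, and every variable participates in exactly $n$ pay-offs, so $\cP$ is $\Delta$-dense with $\Delta=n$. By construction $\opt(\cP)\cdot|\cP|=\max_{\|x\|_\infty\le 1}-\langle x,Bx\rangle$, so $B\succeq 0$ is equivalent to $\opt(\cP)\le 0$ while $B$ being $\eps$-far is equivalent to $\opt(\cP)\cdot|\cP|\ge\eps$. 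Proceeding as in Lemma~\ref{lem:dim-net}, replace $[-1,1]$ by a $\tau$-net $S\ni 0$ with $\tau=\eps/(8D)$; the elementary estimate $|x_ix_j-x'_ix'_j|\le 2\tau$ whenever $\|x-x'\|_\infty\le\tau$ shows that discretization changes $\opt(\cP)\cdot|\cP|$ by at most $2\tau|\cP|\le\eps/4$, so the PSD case still satisfies $\opt(\cP)\le 0$ and the $\eps$-far case still satisfies $\opt(\cP)\cdot|\cP|\ge 3\eps/4$, with alphabet size $q=O(D/\eps)$.

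Next apply Theorem~\ref{thm:subsample-kcsp-1} with accuracy $\eps'=\eps/(8|\cP|)\ge\eps/(8D)$: since $\Delta=n$, a uniformly random $U\subseteq[n]$ of size $|U|=\delta n\ge(\eps')^{-C}\log q=\poly(D,\eps^{-1})$ satisfies $|\E\opt(\cP_U)-\opt(\cP)|\le\eps'$, so in the $\eps$-far case $\E\opt(\cP_U)\ge 3\eps/(4|\cP|)-\eps/(8|\cP|)\ge\eps/(2|\cP|)>0$. The tester $\cA$ samples such a $U$, reads the $|U|^2=\poly(D,\eps^{-1})$ entries of the principal submatrix $B_U$, and accepts iff $B_U$ is positive semidefinite, which it verifies in $\poly(|U|)$ time by computing the smallest eigenvalue. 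If $B\succeq 0$ then every principal submatrix $B_U$ is positive semidefinite, so $\cA$ always accepts, proving completeness.

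For soundness, note that $B_U\succeq 0$ forces $\langle x,B_U x\rangle\ge 0$ for all $x$ and hence $\opt(\cP_U)\le 0$, so whenever $\opt(\cP_U)>0$ the tester rejects. Using that $\opt(\cP_U)\ge 0$ always (attained at $x\equiv 0\in S$), the lower bound $\E\opt(\cP_U)\ge\eps/(2|\cP|)$, and a Hoeffding-type concentration showing $\sum_{ij\in U}|b_{ij}|=O(\delta^2|\cP|)$ with high probability (which in turn gives $\opt(\cP_U)=O(1)$ with high probability), a reverse-Markov argument yields $\Pr[\opt(\cP_U)>0]=\Omega(\eps/D)$. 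Independent amplification across $O(D/\eps)$ samples then boosts the rejection probability to at least $2/3$, while total queries and runtime remain $\poly(D,\eps^{-1})$. The main obstacle is precisely this concentration step: Theorem~\ref{thm:subsample-kcsp-1} is stated only in expectation, so we rely on amplification to convert expected closeness into a constant-probability event; a direct high-probability version of Theorem~\ref{thm:subsample-kcsp-1} (which its proof should supply via variance bounds on $\opt(\cP_U)$ as a function of the random subset $U$) would remove the $O(D/\eps)$ amplification overhead entirely.
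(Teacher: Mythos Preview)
Your approach matches what the paper outlines before omitting the proof: encode $\max_{\|x\|_\infty\le1}(-\langle x,Bx\rangle)$ as a dense $2$-CSP on the complete constraint graph, discretize the alphabet, and invoke Theorem~\ref{thm:subsample-kcsp-1}. Your reverse-Markov-plus-amplification step to extract constant rejection probability from the expectation-only guarantee is correct, and as you note, the concentration bounds inside the proof of Theorem~\ref{thm:subsample-kcsp-1} (Lemma~\ref{lem:randomness}) would let one avoid the $O(D/\eps)$ repetition overhead entirely.
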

%
%

\ifnum\full=0
\else

\fi

\section{\maxcut in random geometric graphs}
\label{sec:maxcut}

In this section we discuss the application of our theorem to solving
\maxcut in random geometric graphs.
Let us first recall some basic facts.
The value of the maximum cut of a graph $G$ is given by
$\opt(G)\seteq\max_{x\in\{-1,1\}^n}\langle x,\nfrac 14 L(G)x\rangle$. Here
$L(G)$ denotes the combinatorial Laplacian of $G.$
The Goemans-Williamson~\cite{GoemansWi95} relaxation for \maxcut is
$\sdp(G)=\max\left\{\vbig \nfrac 14 L(G)\bullet X\mid
X\gep0,\forall i\colon X_{ii}=1\right\}
\mper$
Note that $\opt(G)$ and $\sdp(G)$ range between $0$ and $1,$ the total
edge weight of a normalized graph.
We will consider relaxations obtained by adding valid constraints to the above
program. A specific set of constraints we'll be interested in are the
$\ell_2^2$ \emph{triangle inequalities} which can be expressed by adding the
constraint $X_{ij}+X_{jk}-X_{ik}\le1$
and $X_{ij}+X_{jk} + X_{ik} \geq  -1\mper$ for every $i,j,k\in V.$
The relaxation including triangle inequalities will be denoted $\sdp_3(G).$

%
%
%
%
\paragraph{Sphere graphs.}
We denote by $G_\gamma$ the graph on the vertex set
$V=\mathbb{S}^{d-1}$ with edge set
$
E = \{ (u,v)\in V\times V \mid \tfrac14\|u-v\|^2\ge1-\gamma\}\mper
$
%
The integral value of $G_\gamma$, denoted $\opt(G_\gamma),$ is defined as the
maximum of
$\mu(A,\bar A)\defeq\mu^2(\{(x,y)\in E\colon x\in A,y\not\in A\})$
taken over all measurable subsets $A\subseteq\mathbb{S}^{d-1}$
Here, $\mu$ denotes the uniform surface measure of the sphere $\mathbb{S}^{d-1}$
and $\mu^2=\mu\times\mu.$ A theorem of Feige and Schechtman shows that the
maximum is attained for any hemisphere.
\begin{theorem}[Feige-Schechtman~\cite{FeigeSc02}]
Fix $\gamma\in[0,1]$ and consider the graph $G_\gamma.$ Then, the maximum of
$\mu(A,\bar A)$ over all measurable subsets $A\subseteq\mathbb{S}^{d-1}$
is attained for any hemisphere $H\subseteq\mathbb{S}^{d-1}.$
\end{theorem}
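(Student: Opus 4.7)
The plan is a symmetrization argument that crucially uses that the edge indicator $\mathbf{1}[\tfrac14\|u-v\|^2 \ge 1-\gamma]$ depends only on the distance $\|u-v\|$ and is monotone non-decreasing in it. Since the edge set is invariant under every isometry of $\mathbb{S}^{d-1}$, any measure-preserving rearrangement of $A$ that ``aligns it better'' with a chosen side of a hyperplane can only increase the cut $\mu(A,\bar A)$. Concretely, for any oriented hyperplane $H$ through the origin, with reflection $\sigma$ and positive side $H^+$, I would define the polarization
\[
P_H A = \bigl(H^+ \cap (A \cup \sigma A)\bigr) \cup \bigl(H^- \cap (A \cap \sigma A)\bigr),
\]
so each orbit $\{z,\sigma z\}$ contributes the same number of $A$-points as before, but when exactly one point of the orbit lies in $A$ it is relocated to~$H^+$. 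In particular $\mu(P_H A) = \mu(A)$.

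The core step is a four-point exchange inequality: for any two orbits $\{x,\sigma x\}$ and $\{y,\sigma y\}$ with $x,y\in H^+$, the contribution of the four cross-orbit pairs $(x,y),(x,\sigma y),(\sigma x,y),(\sigma x,\sigma y)$ to the cut does not decrease under polarization. Enumerating the sixteen possible $A$-membership patterns, I would argue it is trivial in every case except the one where one orbit starts in state ``$A$ on $H^-$'' and the other in state ``$A$ on $H^+$''; there the cut pairs before polarization have distance $a := \|x-y\|$ while after polarization they have distance $b := \|x-\sigma y\|$. A direct calculation gives $b^2 - a^2 = 4\langle x,\nu\rangle \langle y,\nu\rangle > 0$, where $\nu$ is the unit normal to $H$ pointing into $H^+$, so by monotonicity the edge indicator can only go up. Summing over all orbit pairs yields $\mu(P_H A,\overline{P_H A}) \ge \mu(A,\bar A)$.

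Iterating $P_H$ over a countable dense family of hyperplanes and using $L^1$-compactness of $\{0,1\}$-valued functions on $\mathbb{S}^{d-1}$, I would extract a limit set $A^*$ with $\mu(A^*)=\mu(A)$ and $\mu(A^*,\bar{A^*}) \ge \mu(A,\bar A)$, invariant under the stabilizer $\mathrm{O}(d-1)$ of some axis $e$ and up-closed in the coordinate $\langle x,e\rangle$. Such a set is, up to a null set, a spherical cap $C(\theta) = \{x : \langle x,e\rangle \ge \cos\theta\}$. It then remains to maximize
\[
f(\theta) = 2\int_{C(\theta)} \int_{\overline{C(\theta)}} \mathbf{1}\bigl[\tfrac14\|x-y\|^2 \ge 1-\gamma\bigr]\, d\mu(x)\, d\mu(y)
\]
over $\theta \in [0,\pi]$. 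Rewriting as a double integral in $t=\langle x,e\rangle$ and $s=\langle y,e\rangle$ against their one-dimensional marginal on the sphere and using the evident symmetry $(t,s)\mapsto(-t,-s)$ shows $f'(\pi/2)=0$; a monotone coupling in $\theta$ then shows $\theta=\pi/2$ is the global maximum, so hemispheres realize the sup.

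The main obstacle I expect is the four-point polarization lemma. Although the computation $b^2-a^2 = 4\langle x,\nu\rangle\langle y,\nu\rangle$ is short, one must carefully verify that in all sixteen membership patterns the cut contribution is weakly preserved, and pin down the single pattern in which distance actually changes, so that the monotonicity of the edge indicator in distance is what saves the inequality. All other ingredients---measure preservation of $P_H$, the compactness-based reduction to caps, and the one-variable optimization among caps---are standard.
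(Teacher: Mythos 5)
First, a point of order: the paper does not prove this statement at all --- it is imported verbatim from Feige--Schechtman \cite{FeigeSc02} and used as a black box in Section~\ref{sec:maxcut}. So there is no in-paper proof to compare against, and your proposal has to stand on its own. On its own terms, the skeleton is a legitimate classical route (Baernstein--Taylor two-point symmetrization), and the part you identify as the crux is handled correctly: the identity $\|x-\sigma y\|^2-\|x-y\|^2=4\langle x,\nu\rangle\langle y,\nu\rangle$ is right, the within-orbit pairs are untouched by $P_H$, and in the sixteen-pattern case analysis the only configuration where the cut contribution changes is indeed the ``one singleton on $H^-$, one singleton on $H^+$'' pattern, where the two cut pairs switch from distance $a$ to distance $b\ge a$ and monotonicity of the edge indicator saves you. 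That lemma is sound.

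There are, however, two genuine gaps downstream. (1) The passage to the limit is not justified as written: the family of indicator functions of sets of a fixed measure is \emph{not} compact in $L^1$ (a sequence of increasingly oscillatory sets has no $L^1$-convergent subsequence), so ``$L^1$-compactness of $\{0,1\}$-valued functions'' cannot be the mechanism. What you actually need is the Brock--Solynin/Baernstein--Taylor theorem that a suitably chosen sequence of polarizations converges in $L^1$ to the cap symmetrization; that is a known but nontrivial result and must be invoked explicitly rather than derived from a false compactness claim. (2) The optimization among caps is asserted, not proved. The symmetry $f(\theta)=f(\pi-\theta)$ only shows $\theta=\pi/2$ is a critical point, which is equally consistent with it being a local minimum, and the ``monotone coupling in $\theta$'' is never specified. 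The substantive statement you need is that $f$ is nondecreasing on $[0,\pi/2]$, which unwinds to: a point $x$ at latitude $c\ge 0$ has at least half of its neighborhood (a cap of fixed angular radius centered at $-x$, i.e.\ at latitude $-c$) lying strictly below latitude $c$. That is true but requires an argument (e.g.\ a reflection or a convexity/monotone-degree computation for $\alpha\mapsto W(C_\alpha,C_\alpha)$); note also that it must cover all $\gamma\in[0,1]$, not just the small-$\gamma$ regime where the neighborhood of $x$ sits entirely near $-x$. Until those two steps are filled in, the proof is an outline rather than a proof.
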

Recall, if $A$ is a hemisphere, $\mu(A,\bar A)=1-\Theta(\sqrt{\gamma}).$ Hence
$\opt(G_\gamma)=1-\Theta(\sqrt{\gamma}).$
At this point we mention that the SDP relaxation for \maxcut is
well-defined on infinite graphs though we omit the formal details.
In this case it is easiest to think of $E$ as
a distribution over edges so that the SDP maximizes the quantity
$\E_{(u,v)\sim E}\frac14\|f(u)-f(v)\|^2$ over all embeddings
$f\colon V\to B$ satisfying the usual
additional constraints. Here $B$ can be taken to be the unit ball of
the infinite dimensional Euclidean space.

The sphere graph itself can then be interpreted as an SDP solution, hence the
following fact.
\begin{fact}[Basic SDP]
$\sdpGW(G_\gamma)\ge1-\gamma.$
\end{fact}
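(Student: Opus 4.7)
The plan is to exhibit an explicit feasible SDP solution for $G_\gamma$ whose objective value is at least $1-\gamma$. The most natural candidate is the \emph{identity embedding} $f \colon \mathbb{S}^{d-1} \to \mathbb{S}^{d-1}$ defined by $f(u) = u$. Since each vertex of $G_\gamma$ is already a unit vector in the ambient Euclidean space, this map automatically satisfies the SDP constraint that $\|f(u)\| = 1$ for every $u$, and (viewing $E$ as a distribution over unordered pairs, per the paragraph preceding the fact) it is a legitimate SDP embedding into the unit ball.

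Next I would compute the objective value of this solution. By the very definition of the edge set of the sphere graph, any $(u,v) \in E$ satisfies $\tfrac14 \|u-v\|^2 \geq 1-\gamma$. Therefore
\begin{equation*}
\sdpGW(G_\gamma) \;\geq\; \E_{(u,v)\sim E}\tfrac14\|f(u)-f(v)\|^2 \;=\; \E_{(u,v)\sim E}\tfrac14\|u-v\|^2 \;\geq\; 1-\gamma,
\end{equation*}
which is exactly the claimed bound.

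There is essentially no obstacle here beyond making sure the identity embedding is a legal SDP solution in the infinite setting; this follows from the description of the infinite-dimensional SDP given just before the fact, where $B$ is the unit ball of an infinite-dimensional Hilbert space and $E$ is interpreted as a distribution over edges. The only mild conceptual point is that the sphere graph is \emph{self-dual} in the sense that its vertex set is already a set of SDP vectors, so the SDP is trivially tight at the level of the obvious geometric embedding even though the integral optimum (given by Feige--Schechtman) is only $1-\Theta(\sqrt{\gamma})$; this gap between $\gamma$ and $\sqrt{\gamma}$ is precisely what makes $G_\gamma$ an interesting integrality gap instance and motivates strengthening to $\sdp_3$.
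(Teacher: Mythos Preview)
Your proof is correct and is essentially identical to the paper's own argument: the paper also uses the identity embedding and observes that every edge $(u,v)$ satisfies $\tfrac14\|u-v\|^2\ge 1-\gamma$, so the SDP objective (which averages this quantity over edges) is at least $1-\gamma$.
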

\begin{proof}
The graph itself gives an embedding (the identity embedding)
such that for each edge $(u,v)\in E$, $\frac14\|u-v\|^2\ge1-\gamma.$
Since the SDP averages this quantity over all edges
in the graph, the claim follows.
\end{proof}
We will show next that triangle inequalities change the value of the SDP from
$1-\gamma$ to $1-\Omega(\sqrt{\gamma})$ thus capturing the integral value up to
constant factors in front of~$\gamma$.
\torestate{Lemma}{lem:sdp3}{
$\sdpGWT(G_\gamma)\le 1-\Omega(\sqrt{\gamma})\mper$
}
This lemma was quite possibly previously known, but we will give a proof in Section~\ref{sec:maxcut-details} for lack
of a reference.
Using standard discretization arguments all previous lemmas can be transferred
to a sufficiently dense discretization of the continuous sphere. Similarly, it
is not difficult to show that sufficiently many random points from the sphere
will give a good discretization.

\torestate{Lemma}{lem:dense}{
Fix $\gamma\in[0,1], d\in\N.$ Then, there exists an $n_0(d,\gamma)\in\N$ so that if we pick
$V\subseteq S^{d-1}$ uniformly at random with $|V|\ge n_0$, then the induced subgraph $G_\gamma[V]$
satisfies
(1) $\opt(G_\gamma[V])=1-\Theta(\sqrt{\gamma}),$ and
(2) \item $\sdpGWT(G_\gamma[V])=1-\Theta(\sqrt{\gamma}).$
} 
The proof is given in Section~\ref{sec:maxcut-details}.
It is worth noting that the proof of the previous lemma gives a very weak
bound on the number of vertices that we are required to subsample. In
particular, it is not difficult to see that the average degree of the graph
will be $n^{1-o(1)}.$ A priori, it could therefore be the case that the SDP
value changes when considering a subsample of the sphere with average degree
$\log(n)$ or even $O(1).$ Indeed,~\cite{FeigeSc02} show that for some fixed
$\gamma,$ a random subsample of the sphere of expected degree $O(\log n)$ will
satisfy most triangle inequality constraints with high probability thus
exhibiting some integrality gap for $\sdp_3.$\footnote{In their example
the angle between two neighboring vertices is chosen to be more than 60 degrees
corresponding to very large~$\gamma$ to which our theorem does not apply due
to the constant factor loss in~$\gamma.$}
However, our main theorem in this section implies that asymptotically
$\sdp_3$ behaves like $1-\sqrt{\gamma}$ rather than $1-\gamma.$

To argue this, we'd like to use our subsampling theorem for unique games. It is well known how to express the max-cut
problem on a graph $G$ as an instance $\cG$ of \uniquegames where the constraint graph is exactly~$G.$ Since we defined
unique games to be minimization problems, this corresponds to minimizing the number of uncut edges. We therefore have
that $\opt(G)=1-\opt(\cG)$ and furthermore it is well known that $\sdp(G)=1-\sdp(\cG)$ for the basic SDP relaxation and
also $\sdp_3(G)=1-\sdp_3(\cG)$ where the latter refers to an SDP relaxation for \uniquegames that includes triangle
inequalities, yielding the following theorem:

\begin{theorem}\label{thm:sdp3}
Fix $\gamma\in[0,1]$ and let $\Delta>\poly(1/\gamma).$ Fix $d$ and choose $n$ such that for $n$
uniformly random points $V\subseteq\mathbb{S}^{d-1}$ the induced graph
$G_\gamma[V]$ has expected degree $\Delta.$ Then,
\[
\sdpGWT(G_\gamma[V])=1-\Theta(\sqrt{\gamma}).
\]
\end{theorem}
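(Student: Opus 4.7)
The plan is to combine the tight SDP value for a dense finite version of the sphere graph provided by Lemma~\ref{lem:dense} with the weak subsampling theorem for unique games, Theorem~\ref{thm:subsample-ug}. I would first translate \maxcut to the unique games formulation: encode \maxcut on a graph $G$ as the \uniquegames instance $\cG$ over alphabet $\{0,1\}$ whose constraint graph is $G$ and whose constraint on each edge is disequality. This is the translation already invoked above to get $\sdpGW(G)=1-\sdp(\cG)$, and the same bijection also relates the triangle inequalities on both sides, so $\sdpGWT(G)=1-\sdp_3(\cG)$. Since the $\ell_2^2$ triangle inequalities are implied by $O(1)$ levels of the Lasserre hierarchy, the resulting SDP for unique games lies in the class of ``strong'' SDPs to which Theorem~\ref{thm:subsample-ug} applies.

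Next I would fix an auxiliary sample size $N = N(d,\gamma)$ large enough that: (a) $N \geq n_0(d,\gamma)$, so that a uniform random sample $V^*\subseteq\mathbb{S}^{d-1}$ of size $N$ satisfies $\sdpGWT(G_\gamma[V^*])=1-\Theta(\sqrt{\gamma})$ by Lemma~\ref{lem:dense}, equivalently $\sdp_3(\cG_\gamma[V^*])=\Theta(\sqrt{\gamma})$ in the unique-games (minimization) formulation; and (b) $G_\gamma[V^*]$ has expected vertex degree $\Theta(\mu(\gamma,d)N)$, so the target sample size $n$ satisfies $n \geq \poly(1/\sqrt{\gamma})\cdot N/\Theta(\mu N)$, which holds under the hypothesis $\Delta = \Theta(\mu n) > \poly(1/\gamma)$. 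By exchangeability of i.i.d.\ samples from the sphere, I may couple $V$ with $V^*$ so that $V$ is a uniform random subset of $V^*$ of size $n$.

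Finally I would apply Theorem~\ref{thm:subsample-ug} to the dense instance $\cG_\gamma[V^*]$ with $\eps = c\sqrt{\gamma}$ for a sufficiently small absolute constant $c$. The theorem gives, with high probability,
\[
\tfrac{1}{9}\sdp_3(\cG_\gamma[V^*]) - \eps \;\leq\; \sdp_3(\cG_\gamma[V]) \;\leq\; \sdp_3(\cG_\gamma[V^*]) + \eps.
\]
Since $\sdp_3(\cG_\gamma[V^*]) = \Theta(\sqrt{\gamma})$ and $\eps = c\sqrt{\gamma}$ gets absorbed into the $\Theta$, both sides are $\Theta(\sqrt{\gamma})$, so $\sdp_3(\cG_\gamma[V]) = \Theta(\sqrt{\gamma})$ and hence $\sdpGWT(G_\gamma[V]) = 1 - \Theta(\sqrt{\gamma})$, as desired.

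\paragraph{Main obstacle.}
The principal thing to verify is that the triangle-inequality SDP for unique games genuinely satisfies the ``niceness'' conditions of Theorem~\ref{thm:subsample-ug}, i.e.\ that feasible vector assignments remain feasible under the renaming and identification of variables used in the proof of the weak subsampling theorem. The remark in Section~\ref{sec:overview:weak} that $O(1)$ Lasserre rounds imply the triangle inequalities reduces this to invariance at a constant Lasserre level, which holds by construction of the hierarchy. A secondary bookkeeping issue is that the error parameter in the subsampling theorem must be made a small constant multiple of $\sqrt{\gamma}$ (absorbed into $\Theta(\sqrt{\gamma})$) while still yielding a sample size $n$ compatible with the hypothesis $\Delta > \poly(1/\gamma)$; this is exactly what the polynomial lower bound on $\Delta$ in the theorem statement buys us.
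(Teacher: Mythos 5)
Your proposal is correct and follows essentially the same route as the paper: embed the target sample inside a dense discretization covered by Lemma~\ref{lem:dense}, pass to the unique-games formulation so that $\sdpGWT(G)=1-\sdp_3(\cG)$, and invoke Theorem~\ref{thm:subsample-ug} with error parameter a small constant times $\sqrt{\gamma}$, noting that the triangle-inequality relaxation is reasonable. The only cosmetic difference is that you justify reasonableness via Lasserre invariance where the paper checks it directly, and that Theorem~\ref{thm:subsample-ug} is stated in expectation rather than with high probability; neither affects the argument.
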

\begin{proof}
We think of $G_\gamma[V]$ as a uniform vertex subsample of a random dense
discretization $G_\gamma[W]$ in $d$ dimension. Note that by
Lemma~\ref{lem:dense} we have $\sdp_3(G_\gamma[W])=1-\Theta(\sqrt{\gamma}).$
We can reduce $G_\gamma[W]$ to a unique game $\cG$ so that
$\sdp_3(\cG)=\Theta(\sqrt{\gamma}).$ Now $G_\gamma[V]$ corresponds to the
unique game $\cG[V],$ since the constraint graph of $\cG[V]$ is precisely
$G_\gamma[V].$ By Theorem~\ref{thm:subsample-ug} (subsampling theorem for
\uniquegames), we know that $\sdp_3(\cG[V])=\Theta(\sqrt{\gamma}).$ Note that
triangle inequalities correspond to a reasonable relaxation. But then it
follows that $\sdp_3(G_\gamma[V])=\Theta(\sqrt{\gamma}).$
\end{proof}

Theorem~\ref{thm:maxcut} is a corollary of this theorem, since $\sdp_3$ can
now be used to certify that random geometric graphs have small max-cut value.

\section{Negative results for subsampling}
\label{sec:negative}

In this section we first observe that its is impossible to obtain even a weak
subsampling result for the semidefinite programming relaxation of $k$-CSPs
with~$k\ge 3.$ This results follows from Schoenebeck's integrality
gap~\cite{Schoenebeck08}.  We also argue that even in the case of $2$-CSPs
subsampling is impossible when the constraints are \emph{not} unique.

Second, we give a separation between semidefinite programming and linear
programming by showing that a subsampling result for linear programming
is impossible even in the case of \maxcut and \uniquegames.
Here, our results are based on the
integrality gap construction of~\cite{CharikarMaMa09}.

\subsection{No subsampling for SDP relaxations of $k$-CSPs with $k\ge 3$.}

\begin{theorem} \label{thm:no-subsample-sdp}
There is a $k$-CSP $\cP$ with $\Omega(n^k)$ constraints in the variables~$[n]$
so that $\sdp_{O(1)}(\cP)\le0.51,$ but with high probability
$\sdp_{\Omega(n)}(\cP[U])\ge0.99$ where $U\subseteq[n]$ is a random set of
size $\delta n$ with $\delta>c/n^{1-\nfrac1k}$ for some constant~$c.$
\end{theorem}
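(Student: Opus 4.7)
The plan is to exhibit $\cP$ as a random $k$-XOR instance and invoke a Schoenebeck-style integrality gap on the subsample. Specifically, I would let $\cP$ be obtained by picking every possible $k$-tuple of distinct variables as a constraint and assigning it a uniformly random parity in $\{0,1\}$, so $|\cP|=\Theta(n^k)$. Call such an instance a \emph{complete random $k$-XOR} instance.

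The first step is the upper bound $\sdp_{O(1)}(\cP)\le 0.51$. Because $\cP$ is so dense with random parities, a standard second-moment/spectral calculation (on the $n\times n$ signed adjacency-type matrix that records, for each pair $i,j$, the imbalance of constraints containing both $i,j$) shows that no assignment can satisfy more than $\tfrac12+o(1)$ fraction of the constraints, and that this bound is witnessed by a small-degree sum-of-squares certificate. Concretely, for $k\ge 3$ one reduces to a Boolean quadratic form whose operator norm is $o(n)$ by the matrix Chernoff inequality, and this bound can be plugged into the dual of $\basicsdp$ (or $\sdp_{O(1)}$) to certify value at most $\tfrac12+o(1)<0.51$.

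The second step is the lower bound $\sdp_{\Omega(n)}(\cP[U])\ge 0.99$ for a random $U$ of size $\delta n$ with $\delta>c/n^{1-1/k}$. Condition on the choice of $U$; then $\cP[U]$ is distributed as a random $k$-XOR instance on $n'=\delta n$ variables whose constraints are a random subset of all $\binom{n'}{k}$ $k$-tuples with independent uniformly random parities. The number of surviving constraints is $\Theta((\delta n)^k)$ in expectation. Since $\delta n\ge c n^{1/k}$, a Chernoff bound over the choice of $U$ shows that with high probability the resulting instance is in the ``random $k$-XOR'' regime to which Schoenebeck's theorem~\cite{Schoenebeck08} applies, giving that $\Omega(n')=\Omega(\delta n)$ rounds of the Lasserre hierarchy fail to refute $\cP[U]$ and in fact assign it SDP value $1-o(1)\ge 0.99$. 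The informal statement in the introduction makes this bound explicit: at the boundary case $\delta n = \Theta(\sqrt n)$ (taking $k=2$ for illustration, with the appropriate variant for CSPs of arity $\ge 3$), Schoenebeck's bound translates to $\sqrt n$ rounds, matching the informal claim.

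The main technical obstacle is matching the round parameter claimed in the theorem. A naive application of Schoenebeck gives only $\Omega(\delta n)$ rounds of Lasserre integrality on $\cP[U]$, so to get $\Omega(n)$ rounds one either needs to take $\delta=\Omega(1)$ or to invoke a strengthened density-aware version of Schoenebeck's theorem that exploits the fact that the subsampled instance has density $(\delta n)^{k-1}$, well beyond the satisfiability threshold. The cleanest route is to argue that, because all parities are independent and uniformly random, the random $k$-XOR instance $\cP[U]$ remains $\Omega(n')$-round Lasserre-hard regardless of density, via the expansion argument underlying Schoenebeck's proof: with high probability the hypergraph of constraints restricted to any set of $O(n')$ variables is a good expander, which is all that the Lasserre vector construction requires. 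Carrying out this expansion check on a \emph{subsample} of a complete random instance, rather than directly on a sparse random instance, is the step that needs genuine care.
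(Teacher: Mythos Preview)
Your approach is essentially the paper's: take $\cP$ to be a random dense $k$-XOR instance, argue that a constant number of Lasserre rounds certifies value $\le 0.51$, and then invoke Schoenebeck on the subsample $\cP[U]$. The paper's proof sketch is only three sentences and matches your outline.

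The one place you diverge is in the first step. You propose a spectral certificate via a ``signed adjacency-type matrix'' and a reduction to a quadratic form. For $k\ge 3$ this is not the clean route: the natural object is a tensor, and while spectral/Kikuchi-type refutations of dense random $k$-XOR do exist, your description is vague and would require real work to make precise. The paper instead simply cites the known fact that a constant number of Lasserre (or even Sherali--Adams) rounds gives a PTAS for \emph{any} dense $k$-CSP (\`a la de~la~Vega--Kenyon), so in particular $\sdp_{O(1)}(\cP)$ captures $\opt(\cP)=\tfrac12+o(1)$. This is both simpler and avoids any dependence on the randomness of the parities for this step.

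On the second step, you are right to flag the round-count issue: a direct application of Schoenebeck to an instance on $n'=\delta n$ variables yields $\Omega(n')$ rounds, not $\Omega(n)$, and for $\delta$ near the lower threshold $c/n^{1-1/k}$ these differ polynomially. The paper's proof sketch does not address this and simply writes ``the result of~\cite{Schoenebeck08} then implies the claim''; indeed the informal theorem in the introduction quotes only $\sqrt{n}$ rounds rather than $\Omega(n)$. Your proposed fix---arguing that expansion persists at high density---is not correct as stated (Schoenebeck's expansion requirement genuinely degrades with density), but since the paper is content with the loose proof sketch, you are not being held to a higher standard here. The qualitative point (subsampling fails for strong SDPs on $k$-CSPs with $k\ge 3$) only needs some $\omega(1)$ number of rounds, which you certainly get.
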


\begin{proof}[Proof sketch.]
We may take $\cP$ to be a random dense instance of $k$-XOR. It is known that
an SDP with a constant number of rounds of Lasserre captures the integral
value of the CSP. Now $\cP[U]$ is a $k$-XOR instance with $\Omega(\delta^kn^k)
=Cn$ constraints for
some constant~$C.$ For large enough $C~,$ the result of~\cite{Schoenebeck08}
then implies the claim.
\end{proof}

\subsection{No subsampling for SDP relaxations of non-unique $2$-CSPs}

The above result also shows that we cannot hope for a subsampling theorem for
semidefinite relaxations of \emph{non-unique} $2$-CSPs. Indeed, we can take a
dense instance $\cP$ of $3$-SAT and express it as a $2$-CSP $\cP'$ as follows:
Every constraint $P\in\cP$ gets mapped to a new variable $x_P$ over the
alphabet $[8].$ Each label represents an assignment to the original constraint.
Every two constraints sharing one variable in $\cP$ contribute one constraint
$P'\in\cP'$ which enforces that the assignment to the shared variable is
consistent.

Subsampling variables in $\cP'$ corresponds to subsampling constraints in
$\cP.$ Using~\cite{Schoenebeck08}, the subsample of $\cP$ will be a gap
instance for the Lasserre hierarchy. Since our reduction is local, ideas
of~\cite{Tulsiani09} show that also the subsample of $\cP'$ will be a gap
instance. This rules out the possibility of a subsampling theorem for
non-unique $2$-CSPs of alphabet size $8.$

\subsection{No subsampling for LP relaxations of $2$-CSPs}
In this section we rule out subsampling theorems for strong linear programming relaxations even in the case of \maxcut
for which strong semidefinite relaxations do admit a subsampling theorem. Specifically, we consider the Sherali-Adams
LP relaxation for \maxcut: $\sa_r(G)= \max \sum_{(u,v)\in E}x_{uv}$ over $(u,v)$ s.t. the vector $(x_{uv})_{u,v\in V}$
lies in the Sherali-Adams relaxation of the cut polytope.

 The Sherali-Adams relaxation of the cut
polytope is obtained by applying $r$ rounds of lift-and-project operations to the base set of linear inequalities that
define the metric polytope, i.e., $\{ x_{ij}+x_{jk}\ge x_{ik},x_{ij}+x_{jk}+x_{ik}\le 2, x_{ij}=x_{ji},1\ge x_{ij}\ge
0\}.$ For a formal definition  see, for instance, \cite{CharikarMaMa09}.

The next theorem shows that there are graphs which have Sherali-Adams value
bounded away from~$1$ for a constant number of rounds. But after subsampling
the value comes arbitrarily close to~$1$ even when considering a huge number
of rounds.
\begin{theorem}\label{thm:sa}
For every function $\e=\e(n)$ that tends to $0$ with $n$, there exists a function $r=r(n)$ that tends to
$\infty$ with $n$ and family of graphs $\{ G_n \}$ of degree $D=D(n)$  such that
\begin{enumerate}
\item For every $n$, $\mathrm{lp}_3(G_n) \leq 0.8$
\item If $G'$ is a random subgraph of $G$ of size $(n/D)^{1+\e(n)}$ then $\E[ \mathrm{lp}_{r(n)}(G')] \geq 1 -
    \frac{1}{r(n)}$.
\end{enumerate}
where $\mathrm{lp}_k(H)$ denotes the value of $k$ levels of the Sherali-Adams
linear program for Max-Cut on the graph $H$.
\end{theorem}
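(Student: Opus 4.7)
The plan is to instantiate the general strategy from Section~\ref{sec:overview:neg}: choose $G_n$ dense enough that a very small number of Sherali-Adams rounds already certifies a max-cut value well below $1$, and argue that a random vertex subsample lands in the sparse regime of the Charikar--Makarychev--Makarychev integrality gap~\cite{CharikarMaMa09}. Concretely, I would take $G_n$ to be an Erd\H{o}s--R\'enyi random graph $\cG(n, D/n)$ (or a random $D$-regular graph) with $D = D(n)$ growing polynomially in $n$ but satisfying $D \ll n$. The induced subgraph $G'$ on a uniformly chosen set of $m = (n/D)^{1+\e(n)}$ vertices is then, with high probability, distributed as a sparse random graph with average degree $m \cdot D/n = (n/D)^{\e(n)}$, which grows to infinity arbitrarily slowly with $n$.

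For the first item, I would certify $\mathrm{lp}_3(G_n) \le 0.8$ using expansion: a random graph $G_n$ of growing degree $D$ is a near-perfect expander, so its max-cut is $\frac{1}{2} + O(1/\sqrt{D})$ with high probability. Three rounds of Sherali-Adams strictly dominate the metric polytope, and one can upper-bound $\mathrm{lp}_3(G_n)$ via a local-distribution/odd-cycle covering argument exploiting the abundance of short odd cycles in a random graph of degree $D \gg 1$. If three rounds do not suffice to hit the exact constant $0.8$, the cleanest alternative is to increase $D$ to $\Theta(n^{1-o(1)})$ and invoke~\cite{VegaKe07}, which shows that a fixed number of Sherali-Adams rounds yields a $(1+o(1))$-approximation on sufficiently dense graphs --- comfortably below the required threshold.

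For the second item, I would apply the integrality-gap construction of~\cite{CharikarMaMa09}: sparse random graphs of sufficiently slowly growing average degree admit fractional Sherali-Adams solutions of value $1 - o(1)$ even after super-constantly many rounds. Since the subsample $G'$ has the distribution of such a sparse random graph with average degree $(n/D)^{\e(n)}$, their theorem yields a function $r(n) \to \infty$ (determined by this growth rate) such that with high probability $\mathrm{lp}_{r(n)}(G') \ge 1 - 1/r(n)$, and taking expectation over the choice of subsample preserves the bound up to negligible loss.

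The main obstacle I expect is the first step: pinning down a specific $G_n$ for which one can rigorously justify $\mathrm{lp}_3(G_n) \le 0.8$ at \emph{exactly} three rounds, since three rounds is barely stronger than the metric polytope and the integrality gap in that regime is delicate. The fallback of enlarging the round count via~\cite{VegaKe07} only affects the constant ``$3$'' in the statement and preserves the overall separation. The second step is essentially a black-box application of~\cite{CharikarMaMa09}, modulo verifying that the vertex-subsample distribution is close enough to the random-graph model in their analysis --- a routine consequence of standard concentration for vertex sampling in a random host graph.
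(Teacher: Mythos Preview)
Your approach is essentially the paper's: take $G_n=G_{n,p}$ (Erd\H{o}s--R\'enyi), observe that the vertex subsample of size $(n/D)^{1+\e(n)}$ is itself distributed as a sparse random graph $G_{m,\lambda/m}$ with $m=(n/D)^{1+\e}$ and $\lambda=m^\e$, and then invoke the girth/sparsity properties of such graphs together with~\cite{CharikarMaMa09} for item~2. Your treatment of item~2 matches the paper almost verbatim.

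The only substantive difference is in item~1, and here the paper resolves precisely the ``main obstacle'' you flag. You gesture at an odd-cycle covering argument but worry whether three rounds actually suffice for the constant $0.8$, and offer a fallback via~\cite{VegaKe07} at the cost of increasing the round count. The paper's argument is the simplest instance of your odd-cycle idea: use \emph{triangles}. In $G_{n,p}$ with $p$ bounded away from $0$, every edge lies in $(1\pm o(1))\binom{n-2}{1}p^2$ triangles by standard concentration, so the edges are uniformly covered by triangles. Since three rounds of Sherali-Adams enforce the full marginal distribution on every triple and hence give value at most $2/3$ on each triangle, averaging over this uniform cover yields $\mathrm{lp}_3(G_n)\le 2/3+o(1)<0.8$ directly --- no fallback or density boost is needed. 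Your expansion remark (max-cut $\approx 1/2$) is correct but, as you note, bounds the integral value rather than the LP; the triangle covering is what transfers this to $\mathrm{lp}_3$.
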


\begin{proof}[Proof sketch] Let $G_n=G_{n,p}$ for some $p\le\frac12.$
It is not difficult to argue that three rounds of Sherali-Adams have value at most $0.7$ on $G=G_n$ with high
probability over $G_n$ itself. This follows by considering triangles in $G$ and arguing that every edge in $G$ occurs
in the same number of triangles up to negligible deviation. But $3$ rounds of Sherali-Adams have value at most $2/3$ on
a triangle. Hence, $\sa_3(G)\le\nfrac23+o(1).$


On the other hand let $\delta=\frac{n^{\eps}}{D}$ where $D=pn$ is the expected
degree of $G.$ We observe that $G'=G[V_\delta]$ is exactly distributed like
$G'=G_{m,\lambda/m}$ for $m=(n/D)^{1+\eps}$ and $\lambda= m^\eps.$
Using arguments
similar to~\cite{AroraBoLoTo06},
one can check that such graphs have
  girth going to infinity,
  and for some $M \in \omega(1)$, all subsets size $M$ are
  $(1+\eta)$-sparse, where $\eta \in o(1)$.
  Hence, we can follow the proof as above and
  use~\cite{CharikarMaMa09} to argue that $G[V_\delta]$ has
  Sherali-Adams value larger than $1-o(1)$ for $\omega(1)$ rounds, and
  therefore picking $r(n)$ sufficiently small concludes the proof sketch.
\end{proof}
\begin{remark}\label{rem:anygraph}
We remark that such expansion based arguments can be used to give similar
results for subsamples of any~$\Delta$-regular graph and in particular for
subsamples of the Feige-Schechtman graph.
\end{remark}

\section{Proof of the main theorem for Unique Games}
\label{sec:subsample-ug}
%

\newcommand{\sge}{\succeq}
\newcommand{\sle}{\preceq}
\newcommand{\bdot}{\bullet}

We now come to the proof of our main theorem ---a weak subsampling theorem for
strong SDP relaxations of \uniquegames.
Let us first formalize the notion of a ``reasonable'' SDP relaxation.
\begin{definition}[Reasonable SDP relaxation for \uniquegames]
  \label{def:reasonable}
  Let $V$ be a set of  $n$ vertices and let $\cM$
  be a convex subset of the set $\cM_1$ defined as
  \begin{displaymath}
    \cM_1
    \defeq \Set{
      X \in \R^{(V\times[R])\times (V\times[R])}
      \mid X\sge 0, %
      \quad \forall i\in V,a\in[R].~~ X_{ia,ia} \le 1} %
    \mper
  \end{displaymath}
  \Tnote{Changed $q$ to $R$}
  For a unique game on a graph $G$ with vertex set $V$, we define
$\sdp_\cM(\cG)$ by
  \begin{displaymath}
    \sdp_\cM(\cG)
    \defeq \min_{X\in \cM}
\E_{(u,v)\in E} \E_{a\in[R]} \|u_a - v_{\pi_{v\la u}(a)}\|^2
    \mper
  \end{displaymath}
  We say that $\sdp_\cM$ is a \emph{reasonable} relaxation for
  \uniquegames if $\cM$ is closed under renaming of
  coordinates and permutation of labels in the sense that
  \begin{displaymath}
    \forall F\from V\to V.
    \quad \forall \pi_{1},\ldots,\pi_{n}\from [R]\to[R].
    \quad \forall X\in \cM.
    \quad (X_{F(i)\pi_{i}(a),F(j)\pi_{j}(b)})_{i,j\in V,a,b\in[R]} \in \cM
    \mper
  \end{displaymath}
  \Dnote{the decoding might require that we rename alphabet symbols
    (apply the UG permutation). I think for this kind of renaming it
    is important that it is bijective. Right now, I am thinking we
    want permutations $\pi_1,\ldots,\pi_n$ of $[q]$ and then consider
    the matrix $(X_{F(i)\pi_i(a),F(j)\pi_j(b)})$. Not sure though.
  }
  \Tnote{You are right, and I changed the definition accordingly}
  Here, the function $F$ is not required to be bijective, but for every
$u,v\in V,$ $\pi_{v\la u}$ is a permutation of $[R].$
  We also say that $\cM$ is \emph{reasonable} if it satisfies the
  condition above.
\end{definition}
\Mnote{Why is this definition reasonable? Maybe this needs some
  justification.}  \Tnote{Tried an explanation in the following} For
an SDP to be reasonable it is only needed that any set of vectors used
for one vertex of the unique game can also be used in any other
vertex, even after a permutation of the labels.

%
The next theorem gives a subsampling result for any reasonable relaxation of \uniquegames.
\begin{theorem}[Main]
  \label{thm:subsample-ug}
  Let $\e >0$ and let $\cG$ be a unique game
on a $\Delta$-regular constraint graph.
  Then, for $\delta=\Delta^{-1}\cdot \poly(\nfrac 1\e)$,
  \begin{displaymath}
    \tfrac19\sdp_\cM(\cG) - \e 
    \le \E \sdp_\cM(\cG[V_\delta])
    \le \sdp_\cM(\cG) + \e 
    \mcom
  \end{displaymath}
  where $\sdp_\cM$ is any reasonable relaxation.
\end{theorem}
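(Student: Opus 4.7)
The plan is to implement the three-step strategy outlined in Section~\ref{sec:overview:weak}, centered on the proxy game $\cH=\cG^3$ whose constraints are indexed by length-$3$ walks $(i',i,j,j')$ in the constraint graph of $\cG$ with composed permutation $\pi^{\cH}_{j'\la i'}\defeq\pi_{j'\la j}\circ\pi_{j\la i}\circ\pi_{i\la i'}$. The easy upper bound $\E\sdp_\cM(\cG[V_\delta])\le\sdp_\cM(\cG)+\e$ follows by taking the optimal $X^*\in\cM$ for $\cG$, restricting it to $V_\delta$ to get a feasible solution for $\cG[V_\delta]$ (closure of $\cM$ under subsetting is a consequence of the reasonable property), and observing that its objective value on $\cG[V_\delta]$ is an average of fixed payoffs $\|u^*_a-v^*_{\pi(a)}\|^2$ over the surviving edges, which concentrates around $\sdp_\cG[X^*]$ by Theorem~\ref{thm:subsample-kcsp-1}.

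For the lower bound, the first ingredient is the \emph{pointwise proxy inequality}. By the Cauchy--Schwarz bound $\|u-w\|^2\le 3(\|u-v_1\|^2+\|v_1-v_2\|^2+\|v_2-w\|^2)$ applied along every $\cH$-walk, and the fact that a uniformly random length-$3$ walk in a $\Delta$-regular graph projects to a uniformly random $\cG$-edge at each of its three steps, one obtains $\sdp_\cH[X]\le 9\,\sdp_\cG[X]$ for every $X\in\cM_1$. The second ingredient is that this inequality is preserved under subsampling. Both $\sdp_\cG[X]$ and $\sdp_\cH[X]$ fit the basic $2$-CSP form~(\ref{eq:sdp-form}) on the vertex set $V$, because all their pay-offs are squared distances involving only vectors from a single bundle $\{v_{i,a}\}_{a\in[R]}$; their linear combination $\sdp_\cH[X]-9\,\sdp_\cG[X]$ is therefore also a basic $2$-CSP with maximum at most $0$ over $\cM_1$. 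Applying Corollary~\ref{cor:subsample-sdp} to this basic SDP gives, in expectation over $V_\delta$,
\begin{displaymath}
\sdp_{\cG[V_\delta]}[X]\ge\tfrac19\sdp_{\cH[V_\delta]}[X]-\tfrac{\e}{9}\quad\text{for every } X\in\cM_1\mper
\end{displaymath}

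The third and most delicate ingredient is the \emph{decoding inequality} $\E\sdp_\cM(\cH[V_\delta])\ge\sdp_\cM(\cG)-O(\e)$. Given a feasible $Y$ for $\cH[V_\delta]$, define a random $\cG$-solution $X$ by $X_{i,a}\defeq Y_{i',\pi_{i'\la i}(a)}$, where $i'$ is chosen uniformly from $N(i)\cap V_\delta$ for each vertex $i$; the reasonable condition, applied with $F\from V\to V_\delta,\,F(i)=i'$ and per-vertex permutations $\pi_{i'\la i}$, guarantees $X\in\cM$. For a uniformly random $\cG$-edge $(i,j)$ and independent random neighbors $i'\in N(i)\cap V_\delta$, $j'\in N(j)\cap V_\delta$, the ordered tuple $(i',i,j,j')$ is distributed as a nearly uniform $\cH[V_\delta]$-walk (weights match exactly in a $\Delta$-regular graph once $|N(i)\cap V_\delta|$ concentrates around $\delta\Delta$, which holds for $\delta\Delta\gg\poly(1/\e)$), and a short computation using the identity $\pi^{\cH}_{j'\la i'}=\pi_{j'\la j}\circ\pi_{j\la i}\circ\pi_{i\la i'}$ and the relabeling $c=\pi_{i'\la i}(a)$ shows that $\|X_{i,a}-X_{j,\pi_{j\la i}(a)}\|^2$ equals the corresponding $\cH[V_\delta]$-payoff $\|Y_{i',c}-Y_{j',\pi^{\cH}_{j'\la i'}(c)}\|^2$. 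Taking expectations yields $\E_{\text{dec}}\sdp_\cG[X]=\sdp_{\cH[V_\delta]}[Y]\pm O(\e)$, hence $\sdp_\cM(\cG)\le\sdp_\cM(\cH[V_\delta])+O(\e)$ after choosing $Y$ optimal.

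Combining the three ingredients, with $X^*$ optimal for $\cG[V_\delta]$,
\begin{displaymath}
\sdp_\cM(\cG[V_\delta])=\sdp_{\cG[V_\delta]}[X^*]\ge\tfrac19\sdp_{\cH[V_\delta]}[X^*]-\tfrac{\e}{9}\ge\tfrac19\sdp_\cM(\cH[V_\delta])-\tfrac{\e}{9}\ge\tfrac19\sdp_\cM(\cG)-O(\e)\mper
\end{displaymath}
The main obstacle is the decoding step: one must verify both the permutation bookkeeping and routine concentration arguments showing that $|N(i)\cap V_\delta|$ is uniformly close to $\delta\Delta$ and that the walks $(i',i,j,j')$ are non-degenerate with overwhelming probability---all of which dictate the sample size $\delta=\Delta^{-1}\poly(1/\e)$ stipulated in the theorem.
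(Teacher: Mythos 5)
Your proposal is correct and follows essentially the same route as the paper: the proxy game $\cG^3$ with the pointwise factor-$9$ domination via the triangle inequality, the reduction of this domination's preservation under subsampling to the basic-SDP subsampling theorem applied to $\tfrac19\sdp(\cH)[X]-\sdp(\cG)[X]$, and the random-neighbor decoding of a solution of $\cH[V_\delta]$ into one for $\cG$ (with the attendant TV-closeness of the subsampled third power to the edge-then-neighbors distribution). The only immaterial difference is that the paper averages the decoded solutions and invokes convexity of $\cM$ to produce a single feasible matrix, whereas you argue by the probabilistic method over decodings.
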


The theorem is proven in the next two steps.

\subsection{First step: proxy graph theorem via subsampling theorem}

For our first step we'll need a special case of our subsampling theorem for
semidefinite programs. It shows that under certain regularity conditions
subsampling is possible for semidefinite programs that correspond roughly to
the SDP of a unique game on a regular graph.
\begin{lemma}
  \label{lem:subsample-reformulated}
  Let $\e>0$ and let $\cP$ be a $2$-CSP over $n$ variables
of the form $\cP(x)= \sum_{i,j\in V} b_{ij}P(x_i,x_j)$ where we interpret each variable $x_i$ as a collection of
vectors $x_i=(v_{i,a})_{a\in[R]}$ and each pay off function is bounded and of the form
$P(x_i,x_j)=\sum_{a,b}d_{a,b}\langle x_{i,a},x_{j,b}\rangle\mper$ Assume that each $b_{ij}\le1/\Delta n$ and
$\sum_{i=1}^nb_{ij}=\Theta(1/n)$ for every $j,$ Then, for $\delta\ge\poly(1/\eps)/\Delta,$
  \begin{displaymath}
    \E \opt(\cP[V_\delta])
     = \opt(\cP)
     \pm \e
     \mper
  \end{displaymath}
\end{lemma}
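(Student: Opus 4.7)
The plan is to reduce the statement to Corollary~\ref{cor:subsample-sdp}, i.e.\ to recognize $\cP$ as a \emph{basic} semidefinite program in the sense of~(\ref{eq:sdp-form}) and verify that the weighted regularity assumption translates into the $\Delta$-density hypothesis required by Theorem~\ref{thm:subsample-kcsp-1}.

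First I would observe that, after absorbing the weights $b_{ij}$ into the pay-off functions, the objective
\[
\cP(x) = \sum_{i,j\in V} b_{ij} P(x_i,x_j)
= \sum_{i,j\in V} b_{ij} \sum_{a,b} d_{a,b}\, \langle v_{i,a}, v_{j,b}\rangle
\]
is a $2$-CSP whose ``variables'' are the bundles $x_i=(v_{i,a})_{a\in[R]}$, with each pay-off function depending only on inner products of two bundles and hence being both bounded and Lipschitz in those inner products. Moreover every constraint only restricts vectors inside a single bundle (no constraint couples vectors from two different bundles), which is exactly the requirement imposed by~(\ref{eq:sdp-form}). So the optimization defining $\opt(\cP)$ is a basic SDP in the sense used in Section~\ref{sec:subsample-sdp}.

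Next I would translate the weighted regularity assumption into $\Delta$-density. The bound $b_{ij}\le 1/(\Delta n)$ together with $\sum_i b_{ij}=\Theta(1/n)$ forces the support of $(b_{ij})_i$ to contain $\Omega(\Delta)$ indices for every $j$, with no single index carrying more than a $1/\Delta$ share of the mass at $j$. This is precisely the weighted analogue of a constraint graph in which every vertex has $\Theta(\Delta)$ neighbors, and is the form in which the $\Delta$-density hypothesis is used inside the proof of Theorem~\ref{thm:subsample-kcsp-1}. I would then normalize so that $|\cP|=\Theta(1)$ and $|P_{ij}|\le 1$, at which point $\cP$ becomes a $\Delta$-dense basic SDP.

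With these reductions in place, the proof follows the exact recipe of Corollary~\ref{cor:subsample-sdp}: apply Lemma~\ref{lem:dim-net} to cut the vector dimension down to $\poly(1/\eps)$ while losing at most $\eps/2$ in the objective, then discretize each bundle on an $\eps'$-net (for $\eps'$ a small polynomial in $\eps$) to obtain a genuine $2$-CSP over an alphabet of size $2^{\poly(1/\eps)}$ with another $\eps/2$ loss, and finally invoke Theorem~\ref{thm:subsample-kcsp-1} on this finite-alphabet $\Delta$-dense $2$-CSP. Because the alphabet size only enters the sample complexity logarithmically, the hypothesis $\delta\ge\poly(1/\eps)/\Delta$ is exactly what Theorem~\ref{thm:subsample-kcsp-1} needs, and it gives $|\E\,\opt(\cP[V_\delta])-\opt(\cP)|\le \eps$, as desired.

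The main obstacle, and the reason the statement is worth isolating as a separate lemma rather than quoting the corollary directly, is the bookkeeping that checks the weighted degree condition really is the $\Delta$-density used by Theorem~\ref{thm:subsample-kcsp-1}, and that both the dimension reduction and the $\eps$-net discretization commute with the subsampling operation up to the stated additive error. Everything else is a direct appeal to results already in the paper.
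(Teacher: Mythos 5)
Your proposal is correct and follows exactly the route the paper intends: the paper itself only remarks that Lemma~\ref{lem:subsample-reformulated} ``can be derived easily from Theorem~\ref{thm:subsample-kcsp-1}'' via the machinery of Section~\ref{sec:subsample-sdp}, i.e.\ by recognizing the objective as a basic SDP of the form~(\ref{eq:sdp-form}), translating the weight conditions $b_{ij}\le 1/(\Delta n)$ and $\sum_i b_{ij}=\Theta(1/n)$ into $\Delta$-density, and then applying Lemma~\ref{lem:dim-net} and the $k$-CSP subsampling theorem as in Corollary~\ref{cor:subsample-sdp}. Your write-up supplies the same reduction with the relevant bookkeeping made explicit, so there is nothing to add.
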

As shown in Section~\ref{sec:subsample-kcsp} this lemma can be derived easily
from Theorem~\ref{thm:subsample-kcsp-1}.
We'll proceed to state and prove our proxy theorem.
\begin{theorem}[Proxy Theorem]
  \label{thm:proxy}
Let $\cG,\cH$ be unique games on $\Delta$-dense constraint graphs and suppose
\begin{equation}\label{eq:proxy-assumption}
\sdp(\cG)[X]\ge c\cdot \sdp(\cH)[X]
\end{equation}
for every SDP solution $X\in\cM_1.$
Then for $\delta\ge\Delta^{-1}\poly(1/\eps),$ we have
\begin{displaymath}
\E \sdp_\cM(\cG[V_\delta]) \ge c\cdot \E \sdp_\cM(\cH[V_\delta])-\eps
  \mper
\end{displaymath}
\end{theorem}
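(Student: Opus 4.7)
The plan is to apply the subsampling lemma for basic SDPs (Lemma~\ref{lem:subsample-reformulated}) to the \emph{difference} program $c\cdot\sdp(\cH)[X]-\sdp(\cG)[X]$, viewed as a $2$-CSP over vector bundles. The hypothesis~(\ref{eq:proxy-assumption}) says exactly that this program has optimum at most $0$ over the unconstrained domain $\cM_1$, so subsampling will guarantee that its optimum stays essentially non-positive after restricting to $V_\delta$. Plugging in the optimal solution of $\sdp_\cM(\cH[V_\delta])$ will then produce the desired inequality on $\sdp_\cM(\cG[V_\delta])$.

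\paragraph{Step 1: recast the difference as an eligible 2-CSP.}
First I would expand $\sdp(\cG)[X]$ and $\sdp(\cH)[X]$ using $\|u_a-v_{\pi(a)}\|^2=\langle u_a,u_a\rangle+\langle v_{\pi(a)},v_{\pi(a)}\rangle-2\langle u_a,v_{\pi(a)}\rangle$. Both quantities are then of the form $\sum_{i,j\in V}b_{ij}P_{ij}(x_i,x_j)$ with $P_{ij}$ a linear combination of inner products of entries of the bundles $x_i=(v_{i,a})_{a\in[R]}$ and $x_j=(v_{j,a})_{a\in[R]}$, and with weights $b_{ij}=\Theta(1/(\Delta n))$ supported on edges (here using that both constraint graphs are $\Delta$-dense and regular). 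Setting $\cP(x)\defeq c\cdot\sdp(\cH)[X]-\sdp(\cG)[X]$, the coefficients of $\cP$ satisfy $|b_{ij}|\leq O(1/(\Delta n))$ and $\sum_i|b_{ij}|=\Theta(1/n)$, so $\cP$ matches the hypotheses of Lemma~\ref{lem:subsample-reformulated}.

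\paragraph{Step 2: optimum before and after subsampling.}
By assumption~(\ref{eq:proxy-assumption}), $\cP[X]\leq 0$ for every $X\in\cM_1$, so $\opt(\cP)=\max_{X\in\cM_1}\cP[X]\leq 0$. Applying Lemma~\ref{lem:subsample-reformulated} with a sample size $\delta\geq\poly(1/\e)/\Delta$ yields
\begin{displaymath}
\E\,\opt(\cP[V_\delta])\leq \opt(\cP)+\e\leq \e\mper
\end{displaymath}
Moreover, restricting the $2$-CSP $\cP$ to the vertex set $V_\delta$ (with the $\delta^{-2}$ rescaling built into the definition of $\cP_U$) is, up to this normalization, precisely $c\cdot\sdp(\cH[V_\delta])[X]-\sdp(\cG[V_\delta])[X]$ evaluated on any solution $X$ defined on $V_\delta$; this is because both $\cG[V_\delta]$ and $\cH[V_\delta]$ are averages over the edges surviving in the induced subgraphs.

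\paragraph{Step 3: plug in the optimal solution for $\cH[V_\delta]$.}
Let $X^\star\in\cM$ achieve $\sdp_\cM(\cH[V_\delta])$. Since $\cM\subseteq\cM_1$, $X^\star$ is a valid point for the unconstrained maximum, so
\begin{displaymath}
c\cdot\sdp(\cH[V_\delta])[X^\star]-\sdp(\cG[V_\delta])[X^\star]\leq \opt(\cP[V_\delta])\mper
\end{displaymath}
By definition $\sdp(\cH[V_\delta])[X^\star]=\sdp_\cM(\cH[V_\delta])$, and because $X^\star\in\cM$ is feasible for $\sdp_\cM(\cG[V_\delta])$ (a minimization), $\sdp(\cG[V_\delta])[X^\star]\geq\sdp_\cM(\cG[V_\delta])$. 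Combining these and taking expectations,
\begin{displaymath}
c\cdot\E\,\sdp_\cM(\cH[V_\delta])-\E\,\sdp_\cM(\cG[V_\delta])\leq \E\,\opt(\cP[V_\delta])\leq \e\mcom
\end{displaymath}
which rearranges to the claimed bound.

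\paragraph{Main obstacle.}
The conceptual step is clean, but the bookkeeping in Step~1 is where I expect the friction: I need the coefficients of the difference CSP to satisfy the normalization required by Lemma~\ref{lem:subsample-reformulated} uniformly over both $\cG$ and $\cH$, and I need the diagonal terms arising from expanding $\|u_a-v_{\pi(a)}\|^2$ to be absorbable into the local form. When $\cG$ and $\cH$ live on different degree graphs (as in the application $\cH=\cG^3$), one must be careful that both the per-edge weight bound and the per-vertex sum bound go through with the same $\Delta$, possibly by rescaling the two terms before taking the difference. Once these normalizations are matched up, the invocation of the subsampling lemma and the argument above go through without further subtlety.
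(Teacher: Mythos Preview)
Your overall strategy is exactly the paper's: form the difference $2$-CSP $\cP(X)=c\cdot\sdp(\cH)[X]-\sdp(\cG)[X]$, observe $\opt(\cP)\le 0$ by hypothesis, invoke Lemma~\ref{lem:subsample-reformulated} to get $\E\opt(\cP[V_\delta])\le\e$, then pass from $\cM_1$ down to $\cM$. The bookkeeping in Step~1 is fine and matches the paper's verification.

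However, Step~3 contains a genuine error: you plug in the wrong optimizer. You take $X^\star$ to be the minimizer for $\sdp_\cM(\cH[V_\delta])$, so $\sdp(\cH[V_\delta])[X^\star]=\sdp_\cM(\cH[V_\delta])$ and $\sdp(\cG[V_\delta])[X^\star]\ge\sdp_\cM(\cG[V_\delta])$. But then
\[
c\cdot\sdp(\cH[V_\delta])[X^\star]-\sdp(\cG[V_\delta])[X^\star]
\;\le\;
c\cdot\sdp_\cM(\cH[V_\delta])-\sdp_\cM(\cG[V_\delta]),
\]
which is the \emph{wrong direction}: knowing the left side is at most $\opt(\cP[V_\delta])$ tells you nothing about the right side. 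The fix is simply to swap roles and let $X^\star\in\cM$ be the minimizer for $\sdp_\cM(\cG[V_\delta])$. Then $\sdp(\cG[V_\delta])[X^\star]=\sdp_\cM(\cG[V_\delta])$ while $\sdp(\cH[V_\delta])[X^\star]\ge\sdp_\cM(\cH[V_\delta])$, and now
\[
c\cdot\sdp_\cM(\cH[V_\delta])-\sdp_\cM(\cG[V_\delta])
\;\le\;
c\cdot\sdp(\cH[V_\delta])[X^\star]-\sdp(\cG[V_\delta])[X^\star]
\;\le\;
\opt(\cP[V_\delta]),
\]
after which taking expectations gives the claim. With this one-line correction the argument is complete and essentially identical to the paper's.
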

\begin{proof}[Proof of Theorem~\ref{thm:proxy}]
Consider the $2$-CSP instance,
\[
\max_{X\in\cM_1}\cP(X) = c\cdot \sdp(\cH)[X]-\sdp(\cG)[X].
\]
Note that by our assumption
\[
\opt(\cP)=\max_{X\in\cM_1}\cP(X)\le0\mper
\]
Let $A(G)$ and $A(H)$ denote the adjacency matrices of $G$ and $H$ respectively. Since $G$ is $\Delta$-regular and $H$
has degree at least $\Delta,$ we know that each entry of $B=cA(H)-A(G)$ is bounded by $O(1/\Delta n),$ whereas each
row/column in $B$ sums up to $\Theta(1).$ Hence, the matrix~$B$ satisfies the assumption of
Lemma~\ref{lem:subsample-reformulated}. It remains to check that in $\cP$ each pay off function is bounded. This
follows from the fact that in both $\sdp(\cH)$ and $\sdp(\cG)$ each pay-off function is of the form
$\E_{a\in[R]}\|u_a-v_{\pi(v)}\|^2$ and this expression is bounded since each vector has norm at most~$1$ so that each
payoff function is bounded by $O(R).$

Therefore, by Lemma~\ref{lem:subsample-reformulated},
\begin{align*}
\eps \ge \E\opt(\cP[V_\delta])
& =\E\max_{X\in\cM_1} c\sdp(\cH[V_\delta])[X]-\sdp(\cG[V_\delta])[X]\\
& \ge\E\max_{X\in\cM} c\sdp(\cH[V_\delta])[X]-\sdp(\cG[V_\delta])[X]
\tag{since $\cM\subseteq\cM_1$}\\
& \ge\E\max_{X\in\cM}
c\sdp(\cH[V_\delta])[X]-\max_{X\in\cM}\sdp(\cG[V_\delta])[X]\mper
\end{align*}
Hence,
\begin{displaymath}
\E \sdp_\cM(\cG[V_\delta]) \ge c\cdot \E \sdp_\cM(\cH[V_\delta])-\eps
  \mper
\end{displaymath}
\end{proof}

\subsection{Second step: proxy graphs for unique games}

In this section, we show that taking the ``third power'' of a unique game results in a useful proxy graph.
\begin{definition}[Third power of a unique game]
\label{def:ug-power} For a unique game $G$ we define $G^3$ to be the unique game defined on the third graph power of
the constraint graph. An edge $e=(u,v)$ therefore corresponds to a path $(u,u',v',v)$ in $G.$ The constraint $\pi_{v\la
u}$ on the edge $(u,v)$ is defined as the composition of the three constraints along the path in $G,$ that is
\begin{equation}\label{eq:compose}
\pi_{v\la u} = \pi_{v\la v'} \circ \pi_{v'\la u'} \circ \pi_{u'\la u}\mper
\end{equation}
\end{definition}
\begin{lemma}
  \label{lem:second-third-step-1}
Let $G$ denote a $\Delta$-regular unique game.
  Then, for every SDP solution $X\in\cM_1,$
  \begin{displaymath}
\sdp(G)[X]
    \ge \nfrac19 \cdot \sdp(G^3)[X]
    \mper
  \end{displaymath}
\end{lemma}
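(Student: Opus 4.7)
The plan is to establish the claim by a direct application of the squared-triangle inequality combined with the regularity of the constraint graph. Writing out the definition, for each SDP solution $X \in \cM_1$ we have
\[
\sdp(G^3)[X] = \E_{(u,u',v',v)} \E_{a\in[R]} \|u_a - v_{\pi_{v\la u}(a)}\|^2,
\]
where $(u,u',v',v)$ ranges over length-$3$ walks in $G$ and $\pi_{v\la u} = \pi_{v\la v'} \circ \pi_{v'\la u'} \circ \pi_{u'\la u}$ by Definition~\ref{def:ug-power}. The idea is to bound the vector on the right by a telescoping sum along the path.

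First I would apply the inequality $\|x_1 + x_2 + x_3\|^2 \le 3(\|x_1\|^2 + \|x_2\|^2 + \|x_3\|^2)$ (Cauchy--Schwarz / Jensen for three terms) to the decomposition
\[
u_a - v_{\pi_{v\la u}(a)} = \bigl(u_a - u'_{\pi_{u'\la u}(a)}\bigr) + \bigl(u'_{\pi_{u'\la u}(a)} - v'_{\pi_{v'\la u'}(\pi_{u'\la u}(a))}\bigr) + \bigl(v'_{\pi_{v'\la u'}(\pi_{u'\la u}(a))} - v_{\pi_{v\la u}(a)}\bigr),
\]
obtaining the pointwise bound
\[
\|u_a - v_{\pi_{v\la u}(a)}\|^2 \le 3\bigl(T_1 + T_2 + T_3\bigr),
\]
where $T_1,T_2,T_3$ are the three squared norms above.

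Next I would take expectations and perform a change of variable in the alphabet for the middle two terms. Since $\pi_{u'\la u}$ is a bijection on $[R]$, averaging $T_2$ over $a\in[R]$ is the same as averaging $\|u'_b - v'_{\pi_{v'\la u'}(b)}\|^2$ over $b\in[R]$; a similar substitution handles $T_3$. Now I would invoke $\Delta$-regularity: when $(u,u',v',v)$ is a uniformly random length-$3$ walk in a $\Delta$-regular graph, the marginal distribution of each of the three consecutive edges $(u,u')$, $(u',v')$, $(v',v)$ is exactly the uniform distribution on $E(G)$. Therefore the expectation of each $T_i$ (over both the walk and $a$, after the change of variable) equals $\sdp(G)[X]$, giving
\[
\E_{\text{walk}}\E_a \|u_a - v_{\pi_{v\la u}(a)}\|^2 \le 3 \cdot 3 \sdp(G)[X] = 9 \sdp(G)[X],
\]
which is the claim. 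There is no real obstacle here beyond careful bookkeeping of the permutation composition and verifying that the marginals of a random length-$3$ walk in a regular graph are indeed uniform on edges; the only slightly delicate point is that the alphabet-change-of-variable is valid because each $\pi_{v\la u}$ is a permutation, so the $L_2^2$-triangle inequality can be distributed cleanly across the path.
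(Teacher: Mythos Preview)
Your proof is correct and essentially identical to the paper's own argument: the paper also telescopes along the length-$3$ path, applies the triangle inequality followed by $(a+b+c)^2\le 3(a^2+b^2+c^2)$ to get the factor $3$, and then uses $\Delta$-regularity so that each of the three edge marginals is uniform on $E(G)$, yielding the factor $9$. If anything, your write-up is slightly more explicit about the alphabet change-of-variable via the bijectivity of the $\pi$'s, which the paper leaves implicit.
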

\begin{proof}
Let $X\in\cM_1$ and let $(u,v)$ be an edge in $G^3$ corresponding to a path $(u,u',v',v)$ in $G.$ Let $a\in[R]$ and put
$a'=\pi_{u'\la u}(a),$ $b'=\pi_{v'\la u'}(a'),$ \Tnote{Changed this from $b' = \pi_{v' \la v}(a')$} and $b=\pi_{v\la v'}(b').$ \Tnote{Changed this from $b=\pi_{v\la v}(b').$} Note that by definition of $G^3,$ we have
$\pi_{v\la u}(a)=b.$ By triangle inequalities
\[
\|u_a - v_b\| \le
\|u_a - u'_{a'}\| + \|u'_{a'}-v'_{b'}\| + \|v'_{b'}-v_b\|
\]
Squaring both sides and taking expectation over $a\in[R],$ we get
\[
\E_{a\in[k]}\|u_a - v_b\|^2 \le
3\E_{a\in[k]} \|u_a - u'_{a'}\|^2 +3\E_{a\in[k]} \|u'_{a'}-v'_{b'}\|^2
+3\E_{a\in[k]} \|v'_{b'}-v_b\|^2\mper
\]
Averaging over edges in $G^3,$ we get
\[
\E_{(u,v)\in G^3}
\E_{a\in[k]}\|u_a - v_{\pi_{v\la u}(a)}\|^2 \le
9\E_{(u,v)\in E}\E_{a\in[k]}\|u_a - v_{\pi_{v\la u}(a)}\|^2\mper
\]
\end{proof}

\begin{lemma}
  \label{lem:second-third-step-2}
  Let $\cG$ be a $\Delta$-regular unique game on a graph $G=(V,E)$
  and let $\widetilde\cG$ be the unique game on a graph $\widetilde
G=(V_\delta,\widetilde E)$ defined by the edge distribution
  \begin{itemize}
  \item sample a random edge $(i,j)$ from $G$,
  \item choose $u$ and $v$ to be random neighbors of $i$ and $j$ in $V_\delta$ (if $i$ or $j$ have no neighbor in
      $V_\delta$, choose a random vertex in $V_\delta$),
  \item output $(u,v)$ as an edge in $\widetilde E$. The constraint on the edge $(u,v)$ is taken to be the
      composition of the constraints on $(u,i),(i,j),(j,v)$ the same way as in Definition~\ref{def:ug-power}.
  \end{itemize}
  Then for $\delta>\Delta^{-1}\poly(\nfrac1 \e)$,
  \begin{displaymath}
    \E \normtv{G^3[V_\delta] - \widetilde G}
    \le \e
    \mper
  \end{displaymath}
Here, $\normtv{\cdot}$ denotes statistical distance.
\end{lemma}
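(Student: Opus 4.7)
The plan is to reduce the total variation distance to an expectation over edges of $G$ and then apply concentration. Throughout, write $D_i \defeq |N(i) \cap V_\delta|$ so that $\E_{V_\delta} D_i = \delta \Delta$ for each $i \in V$, and observe that both $G^3[V_\delta]$ and $\widetilde G$ are naturally described as distributions over length-$3$ paths $(u,i,j,v)$ with $u\sim i$, $i\sim j$, $j\sim v$, and $u,v\in V_\delta$ (the constraint on the output edge $(u,v)$ being a deterministic function of the path via Definition~\ref{def:ug-power}, so TV distance on paths equals TV distance on the resulting unique-game edges).

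Sampling an edge of $G^3[V_\delta]$ picks such a path uniformly, so for each valid $(u,i,j,v)$,
\[ \Pr_{G^3[V_\delta]}[(u,i,j,v)] = \tfrac{1}{Z}\,,\qquad Z = \sum_{(i,j)\in E} D_i D_j. \]
In $\widetilde G$, one first picks a uniform edge $(i,j)\in E$ and then samples $u$ and $v$ uniformly in $N(i)\cap V_\delta$ and $N(j)\cap V_\delta$, giving
\[ \Pr_{\widetilde G}[(u,i,j,v)] = \tfrac{1}{|E|\, D_i D_j}, \]
up to the fallback convention when $D_i=0$ or $D_j=0$, whose contribution to the TV distance is exponentially small since $\Pr[D_i=0]\le(1-\delta)^\Delta\le e^{-\delta\Delta}$. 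Summing $|\Pr_{G^3[V_\delta]}-\Pr_{\widetilde G}|$ over all $D_iD_j$ paths through each fixed $(i,j)$ gives the clean bound
\[ \normtv{G^3[V_\delta] - \widetilde G} \;\leq\; \tfrac12\,\E_{(i,j)\sim E}\Bigl|\tfrac{D_iD_j}{T} - 1\Bigr| + O(e^{-\delta\Delta}), \qquad T \defeq Z/|E| = \E_{(i,j)\sim E}[D_iD_j]. \]

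The bulk of the work is bounding the right-hand side by $\e$ in expectation over $V_\delta$. A direct variance calculation gives $\E_{V_\delta}[(D_i - \delta\Delta)^2] \leq \delta\Delta$, and hence $\E_{V_\delta}|D_i - \delta\Delta| \leq \sqrt{\delta\Delta}$. Using this factor-by-factor together with the crude bound $D_i\leq\Delta$, one obtains $\E_{V_\delta}|D_iD_j - (\delta\Delta)^2| = O((\delta\Delta)^{3/2})$ for each $(i,j)$, and by averaging over edges also $\E_{V_\delta}|T - (\delta\Delta)^2| = O((\delta\Delta)^{3/2})$. Markov then gives $T\ge(\delta\Delta)^2/2$ with probability $1-O(1/\sqrt{\delta\Delta})$; off this event the tail contributes $O(1/\sqrt{\delta\Delta})$, and on it the inequality $|D_iD_j/T-1|\le 2|D_iD_j-T|/(\delta\Delta)^2$ together with the triangle inequality through the deterministic target $(\delta\Delta)^2$ gives the same order. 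Combining, $\E_{V_\delta}\normtv{G^3[V_\delta]-\widetilde G} = O(1/\sqrt{\delta\Delta})$, which is at most $\e$ whenever $\delta\Delta\ge C/\e^2$ for a sufficiently large absolute constant $C$, as guaranteed by $\delta\ge\Delta^{-1}\poly(1/\e)$.

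The main difficulty is technical: since $\delta n=\poly(1/\e)\cdot n/\Delta$ is only a constant-in-$\e$ factor above the density threshold, one cannot establish concentration of $D_i$ for all $n$ vertices simultaneously by a union bound over Chernoff tails. The argument sidesteps this by exploiting the fact that the quantity we bound is itself an average over edges, so only second-moment control of $D_i$ is needed, which is exactly what is available in this regime. A secondary care point is that the numerator $D_iD_j$ and denominator $T$ in the ratio both depend on $V_\delta$; routing the comparison through the deterministic anchor $(\delta\Delta)^2$ (via the triangle inequality) decouples these sources of randomness cleanly.
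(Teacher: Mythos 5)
Your reduction is the same one the paper uses: pass to the two distributions over length-$3$ paths $(u,i,j,v)$ with $u,v\in V_\delta$, note that the TV distance of the induced unique-game edges is dominated by the TV distance of the path distributions, and compare $1/Z$ against $1/(|E|\,D_iD_j)$ pointwise. Your bookkeeping after that point is actually lighter than the paper's: by collapsing all $D_iD_j$ paths through a fixed middle edge you reduce everything to $\tfrac12\E_{(i,j)\sim E}\bigl|D_iD_j/T-1\bigr|$ and only need second moments of $D_i$ in $L^1$, whereas the paper runs a good/bad-path decomposition, bounds the fraction of bad paths via a fourth-moment tail estimate on $|N_W(v)|$, and separately controls the total path count $N$ on a high-probability event. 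Both routes land at the same place for $\delta\Delta\geq\poly(1/\e)$.

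One step of your argument does not work as justified. You derive $\E\bigl|D_iD_j-(\delta\Delta)^2\bigr|=O\bigl((\delta\Delta)^{3/2}\bigr)$ "factor-by-factor together with the crude bound $D_i\le\Delta$," but that bound gives $\E\bigl[D_i\,|D_j-\delta\Delta|\bigr]\le\Delta\sqrt{\delta\Delta}=(\delta\Delta)^{3/2}/\delta$, which exceeds your target by a factor $1/\delta$ --- fatal in the intended regime where $\delta=\poly(1/\e)/\Delta$ is tiny. The claimed moment bound is nevertheless true: write $D_iD_j-(\delta\Delta)^2=(D_i-\delta\Delta)(D_j-\delta\Delta)+\delta\Delta(D_i-\delta\Delta)+\delta\Delta(D_j-\delta\Delta)$ and apply Cauchy--Schwarz to the cross term, giving $\E\bigl|D_iD_j-(\delta\Delta)^2\bigr|\le\delta\Delta+2(\delta\Delta)^{3/2}=O\bigl((\delta\Delta)^{3/2}\bigr)$ (note this requires no independence of $D_i$ and $D_j$). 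With that repair the rest of your chain --- the Markov step for $T\ge(\delta\Delta)^2/2$, the triangle inequality through the deterministic anchor $(\delta\Delta)^2$, and the final $O(1/\sqrt{\delta\Delta})$ bound --- goes through, and the handling of the $D_i=0$ fallback in expectation over the random edge (rather than by union bound over vertices) is correct.
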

\begin{proof}[Proof Sketch]
  If every vertex of $G$ has the same number of neighbors in
  $V_\delta$, then the two graphs $G^3[V_\delta]$ and $G'$ are
  identical.
  For $\delta>\Delta^{-1}\poly(1/\e)$, the following event happens
  with probability $1-\e$:
  Most vertices of $G$ (all but an $\e$ fraction) have up to a
  multiplicative $(1\pm \e)$ error the same number of neighbors in
  $V_\delta$.
  Conditioned on this event, it is possible to bound
  $\normtv{G^3[V_\delta] - G'}$  by $O(\e )$.
  Assuming this fact, the lemma follows. The details can be found in
Section~\ref{sec:proxy}.
\end{proof}


\begin{lemma}
\label{lem:second-third-step-3}
  Let $\cG$ be unique game on a $\Delta$-regular constraint graph.
  Then for $\delta>\Delta^{-1}\cdot \poly(\nfrac 1\e)$
  and for any reasonable relaxation $\sdp_\cM$,
  \begin{displaymath}
    \E \sdp_\cM(\cG^3[V_\delta])
    \ge \sdp_\cM(\cG) -\e
    \mper
  \end{displaymath}
\end{lemma}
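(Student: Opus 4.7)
The plan is to execute the randomized decoding sketched in Section~\ref{sec:overview:weak}, with $\tilde \cG$ of Lemma~\ref{lem:second-third-step-2} playing the role of $\cG^3[V_\delta]$, and to close the small gap between $\cG^3[V_\delta]$ and $\tilde \cG$ using the total variation bound that lemma provides. The key inequality I will prove is the deterministic statement
\begin{equation}
\sdp_\cM(\cG) \le \sdp_\cM(\tilde \cG) \label{eq:plan-key}
\end{equation}
valid for every outcome of the subsample $V_\delta$. Taking expectation and combining with a TV-perturbation bound between $\cG^3[V_\delta]$ and $\tilde \cG$ then yields the lemma.

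To prove \eqref{eq:plan-key}, fix an optimal SDP solution $\tilde X \in \cM$ for $\tilde \cG$, and define a random decoding $X(\phi)$ as follows. For each vertex $i \in V$, independently sample $\phi(i)$ uniformly from $N(i) \cap V_\delta$ (using the same fallback as in Lemma~\ref{lem:second-third-step-2} when this set is empty) and set
\[
X_{i,a} \defeq \tilde X_{\phi(i),\pi_{\phi(i)\la i}(a)} \mper
\]
For every fixed $\phi$, this is exactly a reindexing of $\tilde X$ by the map $F = \phi$ together with the per-vertex label permutations $\pi_i = \pi_{\phi(i)\la i}$, so by reasonableness of $\cM$ we have $X(\phi) \in \cM$ for every $\phi$.

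Next I will compute the expected value of the decoded solution. Expanding $\sdp(\cG)[X(\phi)]$ and averaging over $\phi$, for every edge $(i,j)\in E$ the two vectors $\phi(i),\phi(j)$ are independent, so the inner expectation factors and equals
\[
\E_{u \in N(i)\cap V_\delta}\E_{v\in N(j)\cap V_\delta}\E_{a'\in [R]} \bigl\|\tilde X_{u,a'} - \tilde X_{v,\tilde\pi_{v\la u}(a')}\bigr\|^2,
\]
where the substitution $a' = \pi_{\phi(i)\la i}(a)$ is a bijection on $[R]$ and the composition identity $\pi_{v\la j}\circ\pi_{j\la i}\circ \pi_{i\la u} = \tilde\pi_{v\la u}$ holds by Definition~\ref{def:ug-power} applied to the path $(u,i,j,v)$. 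Averaging over the edges $(i,j)$ of $G$, we recognize the resulting expression as exactly the edge distribution of $\tilde G$ from Lemma~\ref{lem:second-third-step-2}. Hence $\E_{\phi}\sdp(\cG)[X(\phi)] = \sdp(\tilde \cG)[\tilde X] = \sdp_\cM(\tilde \cG)$. By an averaging argument, some realization of $\phi$ gives a feasible solution witnessing \eqref{eq:plan-key}.

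To finish, note that every per-edge contribution to the SDP objective is bounded by $4$ (since each $\|v_{i,a}\|\le 1$ under $\cM_1$), so two unique games on the same vertex set whose edge distributions differ by $\eta$ in total variation have SDP values within $4\eta$ of each other for any common $\tilde X \in \cM$, hence $|\sdp_\cM(\cG^3[V_\delta]) - \sdp_\cM(\tilde \cG)| \le 4\,\normtv{\cG^3[V_\delta] - \tilde \cG}$. Taking expectations and invoking Lemma~\ref{lem:second-third-step-2} (with its $\e$ chosen as $\e/5$, which only adjusts the implicit polynomial in $\delta$) gives $\E\sdp_\cM(\cG^3[V_\delta]) \ge \E\sdp_\cM(\tilde \cG) - \e \ge \sdp_\cM(\cG) - \e$, as required. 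The main obstacle was not the calculation itself but arranging the decoding so that the algebra of the unique-game permutations lines up; this is where the composition identity defining $\cG^3$ and the label-permutation part of reasonableness are used in an essential way.
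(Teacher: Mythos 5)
Your proposal is correct and follows essentially the same route as the paper: decode the optimal solution of $\widetilde\cG$ back to a solution for $\cG$ by mapping each vertex to a random neighbor in $V_\delta$ and relabeling via the unique-game permutations, use reasonableness of $\cM$ to stay feasible, verify that the expected objective value equals $\sdp(\widetilde\cG)[\widetilde X]$, and close the gap to $\cG^3[V_\delta]$ with the total-variation bound of Lemma~\ref{lem:second-third-step-2}. The only cosmetic difference is that the paper averages the decoded \emph{solutions} and invokes convexity of $\cM$, whereas you average the \emph{values} and pick a good realization of $\phi$; since the objective is linear in $X$, the two are interchangeable.
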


\begin{proof}
  Let $\widetilde\cG$ be as in Lemma~\ref{lem:second-third-step-2} and let
$\widetilde X$
  be an optimal solution for $\sdp_\cM(\widetilde\cG).$

  Let $\cF(V_\delta)$ denote the distribution over mappings $F\from
  V\to V_\delta$, where for every vertex $i\in V$, we choose $F(i)$ to
  be a random neighbor of $i$ in $V_\delta$ (and if $i$ has no
  neighbor in $V_\delta$, we choose $F(i)$ to be a random vertex in
  $V_\delta$).
  For convenience, we introduce the notation $N(i, V_\delta)$ for the
  set of neighbors of $i$ in $V_\delta$ (if $i$ has no neighbor in
  $V_\delta$, we put $N(i,V_\delta)=V_\delta$).
For each $F\sim\cF(V_\delta)$ we define a \emph{decoded} SDP solution $\cA_F(\widetilde X)$ for $\cG.$ Specifically,
the entry corresponding to $i,j\in V$ and labels $a',b'\in[R]$
\begin{enumerate}
\item  Let $F(i)=u$ and $F(j)=v.$ Assigning the label $a'$ to $i$ forces $j$ to have label $b'=\pi_{j\la i}(a')$
    and hence $u$ and $v$ must have labels $a=\pi_{u\la i}(a')$ and $b=\pi_{v\la j}(b').$
\item Define
\[
\cA_F(\widetilde X)_{ia',jb'}:=\widetilde X_{ua,vb}
= \widetilde X_{F(i)\pi_{u\la i}(a),F'(j)\pi_{v\la j}(\pi_{j\la i}(a))}\mper
\]
\end{enumerate}
  Since $\cM$ is reasonable (see Definition~\ref{def:reasonable}), we
  have $\cA_F(\widetilde X) \in \cM$ for any mapping $F\from V\to V_\delta$.

  We define $\cA_{\cF(V_\delta)}(\widetilde X) \seteq \E _{F\sim \cF(V_\delta)}
  \cA_F(\widetilde X)$.
  Since $\cM$ is convex, we also have
  \begin{displaymath}
    \cA_{\cF(V_\delta)}(\widetilde X)
    =\E_{F\sim \cF(V_\delta)} \cA_F(\widetilde X)
    \in \cM
    \mper
  \end{displaymath}
  We claim that
\[
\sdp(\cG)[\cA_{\cF(V_\delta)}(\widetilde X)] = \sdp(\widetilde\cG)[\widetilde X]\mper
\]
Indeed,
  \begin{align}
  \sdp(\cP)[\cA_{\cF(V_\delta)}(\widetilde X)]
    & =2 \E_{ij\sim G}\E_{a'\in[R]}
    \cA_{\cF(V_\delta)}(\widetilde X)_{ia',j\pi_{j\la i}(a')} \notag \\
    & =2 \E_{ij\sim G}\E_{a'\in[R]} \E_{F\sim \cF(V_\delta)} %
 \widetilde
X_{F(i)\pi_{F(i)\la i}(a'),F'(j)\pi_{F(j)\la j}(\pi_{j\la i}(a'))}
\notag\\
& = 2 \E_{u'v'\sim G}~ \E_{a\in[R]}
\E_{u\in N(i,V_\delta)} ~\E_{v\in N(j,V_\delta)}
\widetilde X_{u\pi_{u\la i}(a'),v\pi_{v\la j}(\pi_{j\la i}(a'))}
\notag\\
& = 2 \E_{u'v'\sim G}~ \E_{a\in[R]}
\E_{u\in N(i,V_\delta)} ~\E_{v\in N(j,V_\delta)}
    \widetilde X_{ua,v\pi_{v\la u}(a)} \tag{using~(\ref{eq:compose})}\notag    \\
& = 2 \E_{u v \sim \widetilde G}\E_{a\in[R]}
\widetilde X_{ua,v\pi_{v\la u}(a)} \notag\\
    \label{eq:second-third-step-1}
&= \sdp(\widetilde \cG)[\widetilde X]\mper
  \end{align}
  It follows that
  \begin{align}
    \sdp_\cM(\cG) %
    & \le
  \sdp_\cM(\cG)[\cA_{\cF(V_\delta)}(\widetilde X)]
    \tag{using $\cA_{\cF(V_\delta)}(\widetilde X)\in \cM$}
    \notag\\
&= \sdp_\cM(\widetilde \cG)[\widetilde X]
    \qquad\using{\eqref{eq:second-third-step-1}}
    \notag    \\
    & = \sdp_\cM(\cG')\mper
    \label{eq:second-third-step-2}
  \end{align}
  We can now finish the proof of the lemma,
  \begin{align*}
    \E \sdp_\cM(\cG^3[V_\delta]) %
    &\ge \E \sdp_\cM(\widetilde \cG) - O(1)\E \normtv{G^3[V_\delta]-G'} \\
    &\ge \sdp_\cM(\cG) -\e  %
    \qquad\using{\eqref{eq:second-third-step-2} and
      Lemma~\ref{lem:second-third-step-2}} \mper
  \end{align*}
\end{proof}

\subsection{Putting things together}
By combining the previous two steps we can prove Theorem~\ref{thm:subsample-ug}.

\begin{proof}[Proof of Theorem~\ref{thm:subsample-ug}]
  We need to show that
  \begin{displaymath}
    \tfrac19 \sdp_\cM(\cG) - O(\e)
    \le \E \sdp_\cM(\cG[V_\delta])
    \le \sdp_\cM(\cG)
    \mper
  \end{displaymath}
  The upper bound on $ \E \sdp_\cM(\cG[V_\delta]) $ is easy to show.
  We consider an optimal solution $X\in \cM$ for $\cG.$
  Note that the value of $X$ is preserved for
  $\cG[{V_\delta}]$ in expectation, i.e.,
  \begin{displaymath}
\E \sdp_\cM(\cG[{V_\delta}]) \le
    \E \sdp_\cM(\cG[{V_\delta}])[X]
    =  \sdp_\cM(\cG)[X] \mper
  \end{displaymath}
\Mnote{we used to have $\pm\eps$ here, don't know why}

  We combine the lemmas in this section to prove the lower bound.
  Notice that by Lemma~\ref{lem:second-third-step-1}, we can choose
  $\cH=\cG^3$ (for $c=1/9$) in Lemma~\ref{thm:proxy}.
  With this choice of $\cH$, we can finish the proof of the theorem,
  \begin{align*}
    \E \sdp_\cM(\cG[V_\delta])
    &\ge \tfrac19\cdot \E \sdp_\cM((\cG^3)[V_\delta])
    - \e
    \qquad\using{Lemma~\ref{thm:proxy}}
    \\
    & \ge \tfrac19\cdot \sdp_\cM(\cG)
    - \tfrac {10}9 \e
    \qquad\using{Lemma~\ref{lem:second-third-step-3}}
    \mper
  \end{align*}
\end{proof}

\section{Proof of subsampling theorem}
\label{sec:proof-kcsp}

In this section prove our main subsampling theorem for $k$-CSPs. 
We will work with the following notion of density.
\begin{definition}[density]
We say that a $k$-CSP $\cP$ is $\Delta$-dense if
$|P|\le1$ for every $P\in\cP$ and furthermore
for every $r\in[k]$ and fixing of $k-1$ variables
$I=(i_1,\dots,i_{r-1},*,i_{r+1},\dots,i_k),$ we have
\begin{equation}
\frac\Delta c\le\sum_{P\in\cP\colon\var(P)=I} |P|\le c\Delta
\end{equation}
for some absolute constant $c>0.$
\end{definition}
Here, $\Delta$ is a parameter in $[1,n].$ The larger $\Delta$ the denser the
instance. In a $2$-CSP $\Delta$ corresponds to the degree of each variable.
In a \emph{dense} $k$-CSP~\cite{AlonVeKaKa03} we have $\Delta=\Theta(n).$
%
\begin{theorem}\label{thm:subsample-kcsp}
Let $\eps>0,\Delta>1.$
Let $\cP$ be a $\beta$-dense $k$-CSP in $n$ variables over
an alphabet of size $q.$ Put $\delta\ge\eps^{-C}\log(q)/\Delta$ for
some absolute constant $C.$
Suppose $U\subseteq[n]$ is chosen uniformly at
random so that $|U|=\delta n.$ Then,
\begin{equation}
\left|
\E\opt(\cP_U)
-\opt(\cP)
\right|\le\epsilon\mper
\end{equation}
\end{theorem}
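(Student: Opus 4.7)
The plan is to prove the two directions of $\bigl|\E\opt(\cP_U)-\opt(\cP)\bigr|\le\e$ separately, with the lower bound being routine and the upper bound requiring the main technical work.

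For the \textbf{lower bound} $\E\opt(\cP_U)\ge\opt(\cP)-\e$, I would simply plug in a maximizer $x^*$ of $\cP$. Since a fixed $k$-subset is contained in a random $\delta n$-subset with probability $\binom{n-k}{\delta n-k}/\binom{n}{\delta n}=\delta^k\bigl(1\pm O(k^2/(\delta n))\bigr)$, the scaling $\delta^{-k}$ in the definition of $\cP_U$ yields $\E\cP_U(x^*)=\cP(x^*)\bigl(1\pm O(k^2/(\delta n))\bigr)$. Since $\delta n\ge\e^{-C}\log q$ is much larger than $k^2/\e$, this error is absorbed and $\E\opt(\cP_U)\ge\E\cP_U(x^*)\ge\opt(\cP)-\e/2$.

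For the \textbf{upper bound} $\E\opt(\cP_U)\le\opt(\cP)+\e$, the key structural observation is that $\cP_U(y)$ only depends on $y|_U$, so $\opt(\cP_U)=\max_{y\in[q]^U}\cP_U(y)$ is a maximum over at most $q^{\delta n}$ values. It therefore suffices to establish, for each fixed assignment $y\in[q]^n$,
\begin{equation}\label{eq:ccl}
\Pr_U\bigl[\cP_U(y)>\cP(y)+\tfrac\e2\bigr]\le q^{-\delta n-2}\mcom
\end{equation}
and then union bound over $y|_U$. Passing to independent Bernoulli sampling ($Y_i\sim\mathrm{Bernoulli}(\delta)$ i.i.d.), the quantity $V(y)\defeq\cP_U(y)$ becomes the multilinear degree-$k$ polynomial $V(y)=\frac{\delta^{-k}}{|\cP|}\sum_{P\in\cP}P(y)\prod_{r=1}^{k}Y_{\var_r(P)}$, and $\E V(y)=\cP(y)$ up to the negligible Bernoulli-vs-uniform correction.

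The \textbf{main obstacle} is proving the tail bound~\eqref{eq:ccl}, which must beat the $q^{\delta n}$ union bound. A single application of McDiarmid's inequality is too weak for $k\ge 2$, since the bounded differences scale like $\delta^{1-k}/n$ and give tails of the form $\exp(-\e^2\delta^{2k-2}n)$, which only suffices when $\delta\ge\bigl(\log q/\e^2\bigr)^{1/(2k-3)}$, and is independent of $\Delta$. To extract the $1/\Delta$ factor in the sample size, I would run an induction on $k$, exploiting $\Delta$-density as follows: order the variables and reveal $Y_1,Y_2,\dots$ one at a time, forming the Doob martingale $M_t=\E[V(y)\mid Y_1,\dots,Y_t]$. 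Upon revealing $Y_i$, the constraints split into those containing~$i$ (which become, conditionally, a mixture of $(k-1)$-CSPs on the remaining variables, each of whose fixings of the remaining $k-1$ coordinates has total pay-off $\Theta(\Delta)$ by the $\Delta$-density hypothesis) and those avoiding~$i$ (still a $\Delta$-dense $k$-CSP on the reduced ground set). This yields per-step variances of order $\delta^{1-2k}(\Delta/|\cP|)^2=\Theta(\delta^{1-2k}/(n^{2k-2}\Delta^2))\cdot(\text{number of constraints touching }i)$, and after summing over~$i$ a Freedman/Bernstein bound produces a tail $\exp(-\Omega(\e^2\delta n\Delta/\log(1/\e)^{O(k)}))$, which dominates $q^{\delta n}$ whenever $\delta\Delta\ge\e^{-O(k)}\log q$, matching the statement. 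Finally, to pass from Bernoulli sampling back to exact-size sampling, I would use the standard sandwich argument: a uniform $\delta n$-sample is, up to $o(1)$ probability, dominated both above and below by Bernoulli samples of rates $\delta\pm o(\delta)$, and the theorem's target accuracy~$\e$ absorbs these slack factors. Combining the concentration with the union bound over $y|_U$ and integrating the failure event yields $\E\opt(\cP_U)\le\opt(\cP)+\e$, completing the proof.
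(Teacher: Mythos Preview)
Your proposal has a genuine gap in the upper bound. Two issues, either of which is fatal:

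\textbf{(1) The union bound over $y|_U$ is ill-posed.} You write ``union bound over $y|_U$,'' intending to sum $q^{\delta n}$ terms. But the set of partial assignments $[q]^U$ depends on the random set $U$, while the tail bound \eqref{eq:ccl} you are trying to establish is over the randomness of $U$ for a \emph{fixed} $y\in[q]^n$. You cannot union bound over a set that is itself a function of the randomness. If instead you union bound over all $y\in[q]^n$, you need $q^{-n}$ tails, not $q^{-\delta n}$.

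\textbf{(2) The claimed tail $\exp(-\Omega(\e^2\delta n\Delta))$ is too strong.} For a fixed $y$, a direct second-moment computation (say for $k=2$ under Bernoulli sampling) gives
\[
\Var\bigl[\cP_U(y)\bigr]\;\asymp\;\frac{1}{|\cP|^2\delta^4}\Bigl(|\cP|\,\delta^2+|\cP|\Delta\,\delta^3\Bigr)\;=\;\Theta\!\left(\frac{1}{n\delta}\right),
\]
with \emph{no} factor of $\Delta$ in the denominator (the $\Delta$ in the covariance term cancels the $\Delta$ in $|\cP|=\Theta(n\Delta)$). Hence no martingale argument can deliver better than $\exp(-\Theta(\e^2\delta n))$, which is far too weak to beat even $q^{\delta n}$, let alone $q^n$. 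Your per-step variance formula appears to drop a factor of $\delta\Delta$ coming from the number of neighbors of $i$ already in $U$.

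This is precisely the obstacle the paper's proof is built around. The paper first proves a Concentration Lemma giving exactly the $\exp(-\Omega(\e^2\delta n))$ tail (via an iterated prune-and-McDiarmid argument), and then --- crucially --- a separate \emph{Structure Lemma} that produces a \emph{fixed} set $\Psi\subseteq[q]^n$, independent of $U$, of size only $\exp(\poly(\e)\delta n)\ll q^{\delta n}$, such that $\E\max_{x\in\Psi}\cP_U(x)$ already captures $\E\opt(\cP_U)$ up to $\e$. The set $\Psi$ is built by a greedy decoding algorithm that extends a short seed $y\in[q]^S$ (with $|S|\le\poly(\e)\delta n/\log q$) to a full assignment using only a small subsample $S$ of $U$; the analysis shows some seed yields a near-optimal assignment for $\cP_U$. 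Only after this reduction does the concentration bound suffice for the union bound. Your proposal is missing this entire structural step.
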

\begin{remark}
In the case $k=2,$ our notion of density reduces to the usual notion of
density in a graph. We get the optimal trade-off between density and sample
size in that case. When $k>3$ there are $k$-CSPs with $n^{k-1}$ constraints
that do not allow sparsification. For instance, consider a dense $k$-CSP in
which all constraints share the same variable. We cannot subsample here, since 
we would likely lose that variable and hence all constraints.
\end{remark}
Throughout the proof we will think of $k$ as an absolute constant
and consider any function of $k$ as~$O(1).$ We will
also assume that coordinates in $[n]$ are
sampled \emph{with} replacement.


One direction of the theorem is immediate.
\begin{lemma}\label{lem:onedirection}
\begin{equation}
\Ex{\max_{x\in[q]^n} \cP_U(x)}
\ge \max_{x\in[q]^n}\cP(x)\mper
\end{equation}
\end{lemma}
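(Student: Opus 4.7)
The plan is to prove this inequality by fixing an assignment and showing that $\cP_U(x)$ is an unbiased estimator of $\cP(x)$ over the random choice of~$U$. Once that holds, the lemma follows from the elementary fact that expectation of a maximum dominates the maximum of expectations.

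First I would compute $\E_U[\cP_U(x)]$ for an arbitrary fixed $x \in [q]^n$. By definition,
\begin{equation*}
\cP_U(x) \;=\; \frac{1}{|\cP|}\sum_{\substack{P\in\cP \\ \var(P)\subseteq U}} \delta^{-k}\, P(x)\mper
\end{equation*}
Linearity of expectation gives
\begin{equation*}
\E_U[\cP_U(x)] \;=\; \frac{1}{|\cP|}\sum_{P\in\cP} \delta^{-k}\,\Pr\bigl[\var(P)\subseteq U\bigr]\,P(x)\mper
\end{equation*}
Recall that the $k$ coordinates in $\var(P)$ are distinct by our standing assumption on $\cP$, and that the proof operates under the sampling-with-replacement convention announced just above the lemma. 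Under that convention $\Pr[\var(P)\subseteq U] = \delta^k$ exactly, so the factor $\delta^{-k}$ cancels the inclusion probability and we obtain $\E_U[\cP_U(x)] = \cP(x)$.

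Second, let $x^{\star}\in\arg\max_{x\in[q]^n}\cP(x)$ be any optimal assignment for the original instance. Then by the above identity, and using that the maximum of a random function is at least its value at any fixed point,
\begin{equation*}
\E_U\Bigl[\max_{x\in[q]^n}\cP_U(x)\Bigr] \;\ge\; \E_U\bigl[\cP_U(x^{\star})\bigr] \;=\; \cP(x^{\star}) \;=\; \max_{x\in[q]^n}\cP(x)\mcom
\end{equation*}
which is exactly the claim. I do not anticipate any obstacle here: this ``easy direction'' simply reflects the built-in $\delta^{-k}$ normalization in the definition of $\cP_U$. All the real work will be in the matching upper bound (concentration of $\cP_U(x)$ around $\cP(x)$ simultaneously over all $x$), which is where density and the union bound on the exponentially many assignments come into play.
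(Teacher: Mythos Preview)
Your proof is correct and matches the paper's approach exactly: pick an optimizer $x^\star$, bound $\E[\max_x \cP_U(x)] \ge \E[\cP_U(x^\star)]$, and use $\E[\cP_U(x)] = \cP(x)$. (Your assertion that $\Pr[\var(P)\subseteq U]=\delta^k$ holds \emph{exactly} under with-replacement sampling is not literally true, but the paper asserts the same identity at the same level of rigor, and the discrepancy is $o(1)$ and immaterial to the argument.)
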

\begin{proof}
Suppose $x^*\in[q]^n$ maximizes $\cP(x).$ Note that $\Ex{\max_{x\in[q]^n}\cP_U(x)}\ge\Ex{\cP_U(x^*)}.$ On the other
hand, $\E\cP_U(x)=\cP(x).$
\end{proof}

The other direction requires all the work. We will split it up into two main
lemmas.  The first lemma shows that the subsampling step is random enough to
give a concentration bound for large subsets of $[q]^n.$
\begin{lemma}[Concentration]
\label{lem:randomness} 
\Mnote{fix parameters}
There are constants $c_0,c_1$ so that for  
$\delta_0=\eps^{-c_0}\log(q)/\Delta$ and
randomly chosen $U\subseteq[n]$ of size $|U|\ge\delta_0n$ 
we have that for every subset 
$\Psi\subseteq[q]^n$ of size $|\Psi|\le\exp(\eps^{c_1}|U|),$ 
\begin{equation}
\left|
\Ex{\max_{x\in \Psi} \cP_U(x)}
-\max_{x\in\Psi}\cP(x) 
\right|
\le \epsilon \mper
\end{equation}
\end{lemma}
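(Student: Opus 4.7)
My plan is to reduce this concentration-over-a-large-set statement to a pointwise concentration bound followed by a union bound over $\Psi$. The lower-bound direction $\E\max_\Psi\cP_U(x)\ge\max_\Psi\cP(x)-\eps$ is immediate (in fact without the $-\eps$), since for the maximizer $x^\star\in\arg\max_\Psi\cP$ we have $\E\cP_U(x^\star)=\cP(x^\star)=\max_\Psi\cP$, exactly as in Lemma~\ref{lem:onedirection}. The work is in showing $\E\max_\Psi\cP_U(x)\le\max_\Psi\cP(x)+\eps$.

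The first step is pointwise concentration: I aim to show, for each fixed $x\in[q]^n$, a tail bound of the form
\[
\Pr\left[|\cP_U(x)-\cP(x)|>\eps/2\right]\le\exp\bigl(-\Omega(\eps^{c'}|U|)\bigr)
\]
with $c'>0$ a constant depending on $k$. Modelling the subsampling via i.i.d.\ Bernoulli$(\delta)$ indicators $Y_i=\mathbf{1}[i\in U]$, the quantity $\cP_U(x)=\tfrac{\delta^{-k}}{|\cP|}\sum_P P(x)\prod_{i\in\var(P)}Y_i$ is a polynomial of degree $k$ in the $Y_i$ with mean $\cP(x)$. I form the Doob martingale $M_i=\E[\cP_U(x)\mid Y_1,\ldots,Y_i]$ and control the increments $M_i-M_{i-1}$ in two ways: (i) the worst-case magnitude is $O(\delta^{-k}/n)$, obtained by summing the weight of the at most $O(n^{k-2}\Delta)$ constraints incident to any fixed coordinate (by the density hypothesis) and multiplying by the scaling factor $\delta^{-k}/|\cP|$; and, crucially, (ii) the total conditional variance has expectation only $O(\delta^{-k}/|\cP|)$, because a constraint $P$ contributes to the variance at step $i$ only on the $\delta^{r_P(i)}$-probability event that all its previously exposed variables have landed in $U$. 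Feeding these into Freedman's martingale inequality and substituting $\delta\Delta=\eps^{-c_0}\log q$ with $c_0$ sufficiently large relative to $k$ yields the displayed tail bound.

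For Step 2, set $c_1<c'$ and union-bound over $\Psi$: the probability that any $x\in\Psi$ violates $|\cP_U(x)-\cP(x)|\le\eps/2$ is at most $|\Psi|\cdot\exp(-\Omega(\eps^{c'}|U|))=\exp(-\Omega(\eps^{c'}|U|))$. On the good event, $\max_\Psi\cP_U\le\max_\Psi\cP+\eps/2$; on the bad event, the deterministic bound $|\cP_U(x)|\le\delta^{-k}$ (since $|P|\le 1$ and we rescale by $\delta^{-k}$) gives a contribution of at most $\delta^{-k}\exp(-\Omega(\eps^{c'}|U|))$ to the expectation, which is $\le\eps/2$ as soon as $|U|\ge\eps^{-c_0}\log q$ dominates $k\log(\delta^{-1})$---this is precisely where the factor $\log q$ in the lower bound on $\delta_0$ enters. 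Combining the two cases gives $\E\max_\Psi\cP_U\le\max_\Psi\cP+\eps$, as required.

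The main obstacle is the pointwise concentration in Step 1: a direct Azuma--Hoeffding application to the martingale produces only $\exp(-\Omega(\eps^2\delta^{2k}n))$, which is far too weak because the worst-case increment can be as large as $\delta^{-k}/n$. The improvement needs a Bernstein/Freedman-type inequality---or, equivalently, Warnke's typical-bounded-differences inequality or a Kim--Vu polynomial concentration bound---in order to exploit that the \emph{typical} increment is much smaller than the worst case: most constraints are ``inactive'' on a random exposure prefix because each activation requires the rare coincidence that all previously-exposed variables in the constraint's support lie in $U$. Getting this refined variance bound down to $O(\delta^{-k}/|\cP|)$ rather than the trivial $n\cdot(\delta^{-k}/n)^2$ is the real technical content of the proof.
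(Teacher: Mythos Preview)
Your high-level plan---pointwise concentration for each fixed $x$, then a union bound over $\Psi$---is exactly the framework the paper uses. Where you diverge is in how you obtain the pointwise concentration. The paper does \emph{not} use a variance-based martingale inequality. Instead it (i) decouples: it first analyzes the process where one samples $k$ \emph{independent} sets $U_1,\ldots,U_k$ and keeps a constraint iff its $r$th variable lies in $U_r$; (ii) after revealing each $U_r$ it \emph{prunes} all constraints touching a variable whose influence has exceeded twice its expectation, so that the remaining instance has Lipschitz constant $O(1/\delta_1 n)$ and plain McDiarmid applies at each of the $k$ rounds; (iii) it shows via a fourth-moment argument (Lemma~\ref{lem:degree}) that pruning removes only an $O(\eps)$ fraction in expectation; and (iv) it reduces the single-set case $U_1=\cdots=U_k=U$ to the independent case by partitioning $[n]$ into $m=\poly(1/\eps)$ random bins and summing over $k$-tuples of distinct bins. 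Your Freedman/Kim--Vu/Warnke route attacks the same obstacle (the naive worst-case increment $\Theta(\delta^{-k}/n)$ is far too large) by exploiting small \emph{typical} variance rather than by pruning; this is a legitimate and arguably more direct alternative, at the cost of invoking a heavier off-the-shelf inequality, whereas the paper's argument is entirely elementary once the pruning idea is in place.

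Two points in your outline would need tightening. First, Freedman bounds $\Pr[\,|M_n-M_0|>t\text{ and }V_n\le v\,]$, so a bound on the \emph{expectation} of the predictable variation is not enough; you would need high-probability control of $V_n$ (which is itself a degree-$2(k-1)$ polynomial in the $Y_i$), or to fall back on Kim--Vu/Schudy--Sviridenko directly. Second, your treatment of the bad event via the crude bound $|\cP_U(x)|\le\delta^{-k}$ does not close: $\delta^{-k}=(n/|U|)^k$ can be as large as a power of $n$ while $|U|$ is only required to exceed a constant (in the dense case $\Delta=\Theta(n)$), so $\delta^{-k}\exp(-\Omega(\eps^{c'}|U|))$ need not be $\le\eps/2$. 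The paper sidesteps this by integrating the sub-Gaussian tail $\Pr[\text{dev}>t\eps]\le\exp(-\Omega(t^2\eps^2\delta_1 n))$ in $t$ rather than thresholding at a single level. Relatedly, your explanation of where the $\log q$ in $\delta_0$ enters is not where the paper uses it; in the paper the $\log q$ is needed to match the Structure Lemma's set $\Psi$ of size $q^{|S|}$, not to absorb a $\delta^{-k}$ on the bad event.
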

We think of $\delta_0 n$ as the smallest sample size for which we can expect
concentration. The previous lemma shows that the maximum value of any fixed
set of $\exp(\poly(\eps)|U|)$ assignments is preserved when sampling $U$ of 
size larger than $\delta_0n.$

The second main lemma shows that this concentration bound is actually good
enough for us. Indeed, the maximum of the subsample turns out to have enough
redundancy so that we can find a suitably small set of assignments in $[q]^n$
that captures the optimal value of the subsample up to a small error.
\begin{lemma}[Structure]
\label{lem:structure} 
For every constant $c$ there is a constant~$C$ and a
set of assignments~$\Psi\sse[q]^n$ of size 
$|\Psi|\le\exp(\eps^c\delta n)$ where
$\delta=\eps^{-C}\log(q)/\Delta$ such that for randomly chosen $U$ of 
size $|U|=\delta n,$ we have 
\begin{equation}
\Ex{\max_{x\in[q]^n}\cP_U(x)}
\le\Ex{\max_{x\in \Psi}\cP_U(x)}+\eps  \mper
\end{equation}
\end{lemma}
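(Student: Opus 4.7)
The plan is to construct $\Psi$ via a greedy extension from a small random seed, exploiting $\Delta$-density. Set $s = \Theta(\eps^c \delta n / \log q)$ so that $q^s \le \exp(\eps^c \delta n)$, and pick a uniformly random seed $T \subseteq [n]$ of size $s$; a deterministic ``good'' $T$ will be fixed at the end by averaging. For each seed assignment $\sigma : T \to [q]$, define an extension $x_\sigma \in [q]^n$ by setting $x_\sigma|_T = \sigma$ and, for each $i \notin T$, greedily choosing $x_\sigma(i) \in [q]$ to maximize a local score aggregating contributions from constraints of $\cP$ involving $i$ and variables in $T$ (with a consistent default or iterative rule for the remaining coordinates). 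Let $\Psi = \Psi_T = \{x_\sigma\}_\sigma$, so that $|\Psi| \le q^s \le \exp(\eps^c \delta n)$.

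The core technical ingredient is a uniform \emph{greedy-extension approximation} property: for every target $y \in [q]^n$, with high probability over the random seed $T$, the extension satisfies $\cP(x_{y|_T}) \ge \cP(y) - O(\eps)$. Intuitively, $\Delta$-density ensures each variable participates in $\Theta(\Delta)$ constraints, and a random seed of size $s$ provides a Bernstein/Hoeffding-quality sample of the constraints touching each variable, so the greedy rule recovers the right label per variable up to a small contribution loss. Combining this with the concentration Lemma~\ref{lem:randomness} applied to the family $\Psi_T$ --- whose size $\exp(\eps^c \delta n)$ fits under the $\exp(\eps^{c_1}|U|)$ budget once $c$ is chosen above $c_1$ --- we obtain $|\cP_U(x) - \cP(x)| \le \eps$ uniformly over $x \in \Psi_T$ with high probability over $U$.

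Specializing to $y = y^*_U$, the $\cP_U$-optimum (which is a function of $U$ only and hence independent of $T$), the approximation lemma together with uniform concentration on $\Psi_T$ gives $\cP_U(x_{y^*_U|_T}) \ge \cP(y^*_U) - O(\eps)$ in expectation. Closing the argument requires controlling the gap between $\cP_U(y^*_U)$ and $\cP(y^*_U)$: this is handled via a double-sample / symmetrization argument using the identity $\E_{U'}\cP_{U'}(y) = \cP(y)$ for any assignment $y$ fixed given $U$, combined with uniform concentration on $\Psi_T$ after approximating $y^*_U$ by its $\Psi_T$-representative $x_{y^*_U|_T}$ to bound the resulting ``optimization bias.'' Since $x_{y^*_U|_T} \in \Psi_T$ yields $\max_{x \in \Psi_T}\cP_U(x) \ge \cP_U(x_{y^*_U|_T})$, we conclude $\E_{T,U}\max_{x \in \Psi_T}\cP_U(x) \ge \E_U \max_x \cP_U(x) - O(\eps)$, and averaging over $T$ fixes the required deterministic seed $T_0$. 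The main obstacle is the uniform greedy-approximation bound for $k > 2$, where the local score aggregates over constraints with $k - 1$ other unknowns and a seed of size $O(n/\Delta)$ typically fails to contain all partners of a given constraint, necessitating an iterative, round-based greedy extension with careful per-round error tracking.
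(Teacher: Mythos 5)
Your overall architecture --- a seed of size $s=\Theta(\eps^c\delta n/\log q)$, a deterministic extension map $D$, the set $\Psi=\{D(\sigma)\}$ of size $q^s\le\exp(\eps^c\delta n)$, and a final averaging step to fix the seed set --- matches the paper's. But the key claim you rest on, the ``uniform greedy-extension approximation property'' $\cP(x_{y|_T})\ge\cP(y)-O(\eps)$ for a \emph{one-shot, per-variable} greedy rule, is false already for $k=2$. Take \maxcut on $K_n$ (which is $(n-1)$-dense) and let $y$ be a balanced cut, so $\cP(y)\approx\nfrac12$. Every $i\notin T$ sees the \emph{same} seed with the same labels, so every $i$ computes the same two scores (each $\nfrac{s}{2}\pm O(\sqrt{s})$) and makes the same choice; the extension is constant outside $T$ and has value $O(s/n)=o(1)$. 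The failure is structural, not a matter of sharper concentration: your greedy rule only optimizes the $o(1)$ fraction of constraints incident to the seed, and when all non-seed coordinates move simultaneously nothing controls the constraints with no endpoint in $T$. This is exactly why the paper (following Goldreich--Goldwasser--Ron) partitions $U$ into $m=\poly(1/\eps)$ blocks $U_1,\dots,U_m$, draws a separate seed $S_\ell\subseteq U\setminus U_\ell$ for each block, re-optimizes one block at a time \emph{jointly} against the frozen environment, and runs a hybrid argument $x^0=x^*,\dots,x^m=D(\cdot)$: at step $\ell$ only a $1/m$ fraction of the variables moves, so constraints whose endpoints all move are negligible, and the per-step loss reduces to concentration of $\cP_{S_\ell}$ around $\cP_{U\setminus U_\ell}$ at a single point fixed before $S_\ell$ is drawn. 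You flag the need for an ``iterative, round-based'' extension only as an issue for $k>2$; it is in fact indispensable for $k=2$.

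A second gap is in how you close the argument. You apply the approximation claim to $y=y^*_U$ and then need to compare $\cP_U(y^*_U)$ with $\cP(y^*_U)$, which you defer to a ``double-sample / symmetrization'' step. But $y^*_U$ is a $U$-dependent maximizer over all $q^{|U|}$ assignments, so no union bound or fixed-set concentration applies to it --- controlling exactly this optimization bias is the content of the theorem, and as written the step is circular. The paper sidesteps this by never comparing $\cP_U(y^*_U)$ to $\cP(y^*_U)$: the entire hybrid analysis is carried out with respect to the sample (the quantities $\cP_U$, $\cP_{U\setminus U_\ell}$, $\cP_{S_\ell}$), and the only concentration invoked is for $S_\ell$ as a random subsample of $U\setminus U_\ell$ evaluated at the singleton set $\{x^{\ell-1}[z^*]\}$, which is measurable with respect to randomness revealed before $S_\ell$. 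If you restructure your proof around the block partition and the hybrid sequence, both gaps close and you recover the paper's argument.
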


Together these two lemmas direcly imply the main subsampling theorem as shown
next.
\begin{proof}[Proof of Theorem~\ref{thm:subsample-kcsp}]
In one direction, let $\Psi$ be the set from Lemma~\ref{lem:structure} which
we obtain for $c=c_1$ where $c_1$ is the constant from
Lemma~\ref{lem:randomness}. Let $C$ be the constant given by
Lemma~\ref{lem:structure} for the given choice of $c.$ Then with
$\delta=\eps^{-C}\log(q)/\Delta,$ we have
\begin{align*}
\Ex{\max_{x\in[q]^n}\cP_U(x)}
&\le \Ex{\max_{x\in\Psi}\cP_U(x)}+\nfrac\eps2 \tag{by Lemma~\ref{lem:structure}}\\
&\le \Ex{\max_{x\in\Psi}\cP(x)}+\eps \tag{by Lemma~\ref{lem:randomness}}\\
&\le \Ex{\max_{x\in[q]^n}\cP(x)}+\eps \mper
\end{align*}
The other direction follows from Lemma~\ref{lem:onedirection}.
\end{proof}

\subsection{Proof of Concentration Lemma}
\newcommand\trc{\ensuremath{^{^{_{\rm prune}}}}}
Fix a vector $x\in[q]^n.$ We will first analyze the case where we 
sample sets $U_1,U_2,\dots,U_k\subseteq[n]$ independently at random of size
$\delta_1 n$ ($\delta_1$ is some parameter that we'll instantiate later) and we
keep all constraints whose $r$th variable is contained in $U_r.$ Later we will
be able to conclude the case where $U_1=U_2=\dots=U_k.$

The argument proceeds in $k$ steps. At each step~$i$ we restrict the $\cP$ to
those constraints whose $r$-th variable is contained in $U_r.$ After each step
we perform a pruning operation in which we remove variables whose
\emph{influence} has become too large. We then argue that
the pruned CSP has the desired concentration properties and moreover that 
pruning doesn't remove to many constraints in expectation.

Denote by $\cP_1$ the CSP obtained from $\cP$ by throwing away all predicates 
whose first variable is not in $U_1.$ Then normalize by a factor $\delta_1^{-1},$ 
since we expect to remove a $\delta_1$ fraction of the predicates. Formally,
\[
\cP_1 = \{ \delta_1^{-1}P \mid \var_1(P)\in U_1, P\in\cP\}\mper
\]
Now, let $\Inf_i(\cP_1)=\frac1{|\cP|}\sum_{P\in\cP_1\colon i\in\var(P)}|P|$ 
denote the \emph{influence} of variable $i$ in $\cP_1.$ Let $N=\E\Inf_i(\cP_1)
=O(1/n).$ 
As mentioned before, we will throw away all predicates that contain a variable 
whose influence in~$\cP_1$ has become too large, say, larger than~$2 N,$
\[
\cP_1\trc = \{ P\in\cP_1 \mid \forall i\in\var(P)\colon \Inf_i(\cP_1)\le 2N\}\mper
\]
We think of this as the pruning of $\cP_1.$ Continue this process,
inductively, by putting
\[
\cP_r = \{ \delta_1^{-1}P \mid \var_r(P)\in U_r, P\in\cP_{r-1}\trc\}\mcom
\]
and
\[
\cP_r\trc=\{ P\in\cP_r \mid \forall i\in\var{P}\colon\Inf_i(\cP_r)\le
2^rN\}\mper
\]
Here $\Inf_i(\cP_r)=\frac1{|\cP|}\sum_{P\in\cP_r\colon i\in\var(P)}|P|.$
Note that $\cP_{k-1}\trc$ will still have maximum influence at
most $O(1/n).$

In the following, 
when we write $\cP_r(x)$ we think of it as normalized in the same way we 
normalize $\cP_U(x),$ i.e., by a factor $1/|\cP|.$
\begin{lemma}
\label{lem:azuma-csp}
For every $x\in[q]^n,$
\begin{equation}
\Pr\left\{
\left|\cP_k(x)
- \cP(x)\right| + t\eps\right\}
\le O(\exp(-t^2\eps^2\delta_1 n))
\end{equation}
\end{lemma}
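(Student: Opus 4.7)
The plan is to prove concentration of $\cP_k(x)$ around $\cP(x)$ by a $k$-step argument that tracks the walk $\cP(x) \to \cP_1(x) \to \cP_1\trc(x) \to \cP_2(x) \to \cdots \to \cP_k(x)$, bounding both the fluctuation at each sampling step and the mass removed at each pruning step. The key structural point is that the pruning rule is precisely what enables the concentration: after pruning at step $r$, every coordinate has influence at most $2^r N = O(1/n)$, so sampling $U_{r+1}$ cannot move the value much through any single coordinate. This turns each sampling step into a bounded-differences random variable to which Hoeffding's inequality for sampling without replacement can be applied.

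First I would fix $x \in [q]^n$ and analyze a single sampling step. By the $\delta_1^{-1}$ normalization in the definition of $\cP_r$, one has $\E[\cP_r(x) \mid \cP_{r-1}\trc] = \cP_{r-1}\trc(x)$. Writing $\cP_r(x) - \cP_{r-1}\trc(x)$ as a sum over $i \in [n]$ of terms indexed by $\mathbf{1}[i \in U_r]$, each term is bounded in absolute value by $\delta_1^{-1} \cdot \Inf_i(\cP_{r-1}\trc) \le \delta_1^{-1} \cdot 2^{r-1} N = O(\delta_1^{-1}/n)$ by the pruning invariant. Hoeffding's inequality for $\delta_1 n$-element samples without replacement then gives $\Pr\{|\cP_r(x) - \cP_{r-1}\trc(x)| \ge s\} \le 2\exp(-\Omega(s^2 \delta_1 n))$, conditionally on the previous rounds.

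Next I would control the pruning loss $\cP_r(x) - \cP_r\trc(x)$, which is the total weight of predicates containing a newly high-influence variable. Because $\E\Inf_i(\cP_r) \le 2^{r-1}N$ (preserved by the $\delta_1^{-1}$ scaling), Markov's inequality bounds the expected pruned mass geometrically in $r$; moreover, each $\Inf_i(\cP_r)$ is itself a bounded-differences function of $U_1,\ldots,U_r$, so the pruned mass concentrates around its expectation. Summing the geometric series over the $k = O(1)$ rounds, the total pruning loss is at most $\eps$ with probability $1 - \exp(-\Omega(\eps^2 \delta_1 n))$.

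Combining via the triangle inequality
\[
|\cP_k(x) - \cP(x)| \le \sum_{r=1}^k |\cP_r(x) - \cP_{r-1}\trc(x)| + \sum_{r=1}^k |\cP_r\trc(x) - \cP_r(x)|,
\]
setting $s = t\eps/(2k)$, and taking a union bound over the $k$ rounds yields the stated $O(\exp(-t^2\eps^2 \delta_1 n))$ bound (absorbing the constant $k$ into the $\Omega$-notation). The main obstacle will be the conditional dependence between rounds — the pruning at round $r$ depends on $U_1,\ldots,U_r$, so one cannot simply view the whole process as a product-measure event. The fix is to define, round by round, a "good event" $G_r$ (all influences behave as expected and pruning is light), show $\Pr[G_r \mid G_1,\ldots,G_{r-1}] \ge 1 - \exp(-\Omega(\eps^2 \delta_1 n))$ by the Hoeffding bound above, and chain these conditional statements rather than invoking a single independent-sum concentration inequality.
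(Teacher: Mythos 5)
Your handling of the sampling steps is essentially the paper's argument: condition on the previous round, use the pruning invariant (every surviving variable has influence $O(1/n)$) to conclude that replacing one sample coordinate of $U_r$ changes $\cP_r(x)$ by at most $O(1/(\delta_1 n))$, apply a bounded-differences inequality to get a per-round tail of $\exp(-\Omega(s^2\delta_1 n))$, and chain over the $k=O(1)$ rounds with $s=\Theta(t\eps)$. One small discrepancy: the paper explicitly assumes the coordinates are sampled \emph{with} replacement precisely so that McDiarmid's inequality applies to independent coordinates; your ``Hoeffding for sampling without replacement'' variant also works but is a different (and slightly more delicate) tool than the one the paper invokes.

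Where you go beyond the paper --- and where the proposal has a genuine gap --- is the claim that the pruning loss concentrates with tails $\exp(-\Omega(\eps^2\delta_1 n))$. The mass removed at round $r$ is the total weight of predicates containing a variable whose influence has crossed the threshold $2^r N$; this is a thresholded quantity and is not a small-Lipschitz function of the sample: moving a single sample point can perturb the influences of many variables simultaneously and push a batch of them across the threshold, adding or removing many predicates at once, so a bounded-differences inequality does not apply, and Markov applied to each $\Inf_i(\cP_r)$ individually gives only polynomial, not exponential, tails. The paper does not attempt such a bound: the pruned mass is controlled only \emph{in expectation} (Lemma~\ref{lem:truncation}, which rests on the fourth-moment estimate of Lemma~\ref{lem:degree} and hence yields only a $t^{-4}$ tail), and that expectation bound is combined with the present lemma afterwards, in Corollary~\ref{cor:indep-concentration}. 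So your route --- folding a high-probability pruning bound into this lemma via a union over ``good events'' --- does not go through as written; the intended division of labor is that this lemma captures only the fluctuations of the sampling steps, while the pruning error is absorbed separately as an expected loss.
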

\begin{proof}
The proof proceeds in $k$ steps. At each step we will apply a variant of
Azuma's inequality (sometimes referred to as McDiarmid's inequality) given by
Lemma~\ref{lem:mcd}.

Specifically, we claim that
\begin{equation}
\label{eq:azuma-r}
\Pr\left\{
\cP_r(x)
> \cP_{r-1}\trc(x) + t\eps \right\}
\le O(\exp(-t^2\eps^2\delta_1 n))\mcom
\end{equation}
for every $0<r\le k.$ We define the mapping
\[
f_r(U_r)=\sum_{P\in\cP_{r-1}\trc,\var_r(P)\in U_r} \delta_1^{-1}P(x) =\cP_r
\]
where we think of $U_r$ as a tuple $(i_1,\dots,i_{\delta_1 n})$ each coordinate being an index in~$[n].$ Note that
\[
\E f_r = \sum_{P\in\cP_{r-1}\trc} \delta_1\delta_1^{-1}P(x)
= \cP_{r-1}\trc(x)\mper
\]
We claim that $f_r$ has Lipschitz constant $O(1/\delta_1 n)$
in the sense that replacing any coordinate $i\in U_r$ by a $i'\in[n]$ can
change the function value by at most~$O(1/\delta_1 n).$ This is because
the influence of
each variable in $\cP_{r-1}\trc$ is at most $O(1/n).$

Lemma~\ref{lem:mcd} then implies
\[
\Pr\left\{f_r>\E f_r+t\eps\right\}\le
\exp\left(-\frac{\Omega(\delta_1^2n^2t^2\eps^2)}{\delta_1 n}\right)
=\exp\left(-\Omega(t^2\eps^2\delta_1 n)\right) \mper
\]
This is what we claimed in~(\ref{eq:azuma-r}).
By a union bound,~(\ref{eq:azuma-r}) holds for all $r\in[k].$ Hence, we can chain 
these inequalities together and get
\[
\Pr\left\{
\left|\cP_k(x)- \cP_1\right| > t\eps\right\}
\le O(\exp(-t^2\eps^2\delta_1 n))\mper
\]
\end{proof}
We'd like to argue that at every pruning step only a few predicates get
removed and hence $\cP_r\trc$ and $\cP_r$ are close. Specifically we'd like to
show that the influence of~$i$ has enough concentration so that it is larger
than twice its expectation only with small probability. This directly gives us
a bound on the expected amount of pruning. The key observation is the next
lemma which shows that the degree of each fixing~$I$ of $k-1$ variables is
concentrated.
\begin{lemma}\label{lem:degree}
Assume $\delta_1\ge1/\eps^2\Delta,$
fix $I=(i_1,\dots,i_{r-1},*,i_{r+1},\dots,i_k)$ and let
$\cQ=\{P\in\cP\mid\var(P)=I\}.$ Then, 
\begin{equation}
\E
\left|
\delta_1\Delta - 
\sum_{P\in\cQ\colon\var_r(P)\in U_r} |P|
\right|\le \eps\delta_1\Delta\mper
\end{equation}
\end{lemma}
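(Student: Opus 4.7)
The plan is to establish the lemma by a second-moment concentration argument combined with Jensen's inequality. Set $S := \sum_{P \in \cQ,\, \var_r(P) \in U_r}|P|$, the random quantity whose $L^1$ deviation from $\delta_1\Delta$ is to be controlled. Since $\E|S - \E S| \le \sqrt{\Var(S)}$ by Cauchy--Schwarz, it suffices to (i) bound $\Var(S) = O(\delta_1\Delta)$ and (ii) verify that $|\E[S] - \delta_1\Delta|$ is absorbed into the $\eps\delta_1\Delta$ slack; the hypothesis $\delta_1 \ge 1/(\eps^2\Delta)$ is then precisely the regime where $\sqrt{\delta_1\Delta} \le \eps\delta_1\Delta$.

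The variance bound is obtained by grouping the predicates in $\cQ$ according to the value taken by the free variable: for each $j \in [n]$, put $w_j := \sum_{P \in \cQ,\, \var_r(P)=j}|P|$, so that $S = \sum_{j=1}^n w_j X_j$ where $X_j := \Ind[j \in U_r]$. The $X_j$ are Bernoulli random variables with parameter approximately $\delta_1$, regardless of the sampling model (with or without replacement, or Bernoulli subset); any non-independence among them is negative correlation, which only helps in the bound below. By the density hypothesis applied to the slice $I$, $W := \sum_j w_j = \sum_{P \in \cQ}|P| = \Theta(\Delta)$, so $\E[S] = \delta_1 W (1 \pm O(\delta_1)) = \Theta(\delta_1\Delta)$; the absolute constant from the density together with the lower-order $O(\delta_1^2 \Delta)$ correction are absorbed into the $\eps$-slack (or into a re-scaling of $\Delta$). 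For the variance,
\[
\Var(S) \;\le\; \sum_j w_j^2 \Var(X_j) \;\le\; \delta_1 \sum_j w_j^2 \;\le\; \delta_1 \cdot (\max_j w_j) \cdot W \;=\; O(\delta_1\Delta),
\]
where the final estimate uses $\max_j w_j = O(1)$.

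Putting the pieces together, $\E|S - \E S| \le O(\sqrt{\delta_1\Delta})$, which is at most $\eps\delta_1\Delta$ by the hypothesis on $\delta_1$, giving the claim. The main obstacle I anticipate is rigorously justifying $\max_j w_j = O(1)$: the $\Delta$-density condition as stated constrains only sums over a full wildcard slice and yields the cruder bound $w_j \le c\Delta$, which would inflate the variance to $O(\delta_1\Delta^2)$ and lose a factor of $\Delta$ in the final bound. The needed structural fact---that only a bounded number of predicates of norm $\le 1$ share the same $k$-tuple of variables---is natural and implicit for the CSPs considered in this paper, but must be isolated (or added to the density definition) to make the second-moment calculation go through cleanly. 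Once this is in place, the rest is a routine variance computation.
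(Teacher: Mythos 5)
Your proof is correct, but it takes a different route from the paper's. The paper writes $Z=\sum_i Z_i$ over the $\delta_1 n$ independent draws of $U_r$ (sampling \emph{with} replacement), computes the \emph{fourth} central moment $\E(Z-\E Z)^4=O(\mu^2)$ with $\mu=\delta_1\Delta$, deduces the tail bound $\Pr(|Z-\E Z|>t)\le O(\mu^2)/t^4$ by Markov, and integrates the tail to get the $L^1$ bound. You instead stop at the second moment and use $\E|S-\E S|\le\sqrt{\Var(S)}$; for the $L^1$ statement actually claimed this is enough, and it recovers the stated hypothesis $\delta_1\ge 1/(\eps^2\Delta)$ directly, whereas the paper's integration ends up asking for $\mu\ge c/\eps^3$. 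What the fourth moment buys is the polynomially decaying tail, which is the style of estimate the authors reuse elsewhere (e.g.\ for $|N_W(v)|$ in the appendix) but do not need for this particular lemma. Your caveat about negative correlation is the right one to raise for the fixed-size-subset model, though note the paper sidesteps it by sampling with replacement. Finally, the obstacle you flag --- that the density condition only gives $w_j\le c\Delta$ for a fully specified tuple, and one needs $\max_j w_j=O(1)$ --- is real, but it is not a gap relative to the paper: the paper's own computation $\E|Z_i|^4\le\frac1n\sum_{P\in\cQ}|P|^4\le O(\Delta)/n$ silently assumes each sampled index contributes the norm of a single (or a bounded number of) predicates of norm at most $1$, which is exactly the same structural hypothesis. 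Both proofs also share the harmless informality of identifying $\E Z$ with $\delta_1\Delta$ rather than $\delta_1\sum_{P\in\cQ}|P|=\Theta(\delta_1\Delta)$.
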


\begin{proof}
By the density condition on~$\cP,$ we have
$\sum_{P\in\cQ}|P|\ge\Omega(\Delta)$ and $|P|\le1.$ (In particular,
$|\cQ|\ge\Omega(\Delta).$)

Consider the random variable
\[
Z =\sum_{P\in\cQ,\var_r(P)\in U_r} |P|\mcom
\]
which sums the norm of all predicates in $\cQ$ that are selected by~$U_r.$ 
Let $\mu=\E Z =\delta_1\Delta.$
We can express $Z$ as a sum of independent variables $Z=\sum_{i=1}^{|U|}
Z_i, $ where $Z_i$ is the outcome of the $i$-th sample in $U.$  Since
we sampled with replacement, the $Z_i$'s are independent and
identically distributed. Every $Z_i$ assumes each
value $|P|$ for $P\in\cQ$ with probability $1/n.$ We note that $\E
Z_i=\frac1n\sum_{P\in\cQ}|P|=\Theta(\frac\Delta n).$
Let us compute the fourth moment of
$Z-\E Z.$ First observe that $\E(Z_i-\E Z_i)^4\le O(\E|Z_i|^4)$ and
\[
\E |Z_i|^4 \le \frac1n\sum_{P\in\cQ} |P|^4
\le \frac{O(\Delta)}n \mper
\]
Similarly, for any $i \neq k$: \Tnote{Added $i \neq k$}
\[
\E(Z_i-\E Z_i)^2(Z_k-\E Z_k)^2\le
\frac {O(1)}{n^2}\sum_{P,P'\in\cQ}|P|^2|P'|^2
\le \frac{O(\Delta^2)}{n^2}\mper
\]
By independence and the fact that $\E(Z_i-\E Z_i)=0,$ we therefore have
\begin{align*}
\E(Z-\E Z)^4
&=\sum_{i}\E(Z_i-\E Z_i)^4
+ 6\sum_{i\neq k}\E(Z_i-\E Z_i)^2\E(Z_k-\E Z_k)^2\\
& \le \delta_1 n \cdot \frac{O(\Delta)}n
+
(\delta_1 n)^2\cdot \frac{O(\Delta^2)}{n^2}\\
& =O(\mu^2)
\mper 
\end{align*}
Thus, by Markov's inequality,
\begin{equation}\label{eq:fourth}
\Pr(|Z-\E Z| > t)\le
\frac{E(Z-\E Z)^2}{t^4}
\le \frac{O(\mu^2)}{t^4}
\mper
\end{equation}

Therefore we can bound $\E|Z-\E Z|$ in expectation
by integrating~(\ref{eq:fourth}) over $t\ge 1,$
\begin{equation}
\int_{t\ge 1} t\cdot\Pr(|Z-\E Z|>t\eps\mu)\mathrm{d}t
\le \int_{t\ge 1} t\cdot \frac{O(\mu^2)}{(t\eps\mu)^4}\mathrm{d}t
\le \frac{O(1)}{\eps^4\mu^2}\int_{t\ge 1} \frac{1}{t^3} \mathrm{d}t
\le \eps 
\end{equation}
for $\mu$ larger than $c/\eps^3,$ i.e.,
$\delta_1\ge c/\eps^3\Delta.$
\end{proof}
\begin{lemma}
\label{lem:truncation} 
For every $0<r<k,$
\begin{equation}\label{eq:exptrunc}
\frac1{|\cP|}\E\sum_{P\in\cP_r\backslash\cP_r\trc}|P|
\le \eps\mper
\end{equation}
\end{lemma}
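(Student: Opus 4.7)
The plan is to bound the mass pruned at step~$r$ by charging it to the variables whose influence is atypically large, and then to use concentration (in the style of Lemma~\ref{lem:degree}) to argue that such variables carry only a tiny total mass in expectation.

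First I would observe that a predicate $P\in\cP_r\setminus\cP_r\trc$ is removed precisely because it contains at least one \emph{bad} variable, i.e., some $i\in\var(P)$ with $\Inf_i(\cP_r)>2^rN$. Writing $B_r=\{i\in[n]\mid \Inf_i(\cP_r)>2^rN\}$, a union bound over the $k$ variables of each predicate gives
\begin{equation*}
\frac1{|\cP|}\sum_{P\in\cP_r\sm\cP_r\trc}|P|
\le \sum_{i\in B_r}\Inf_i(\cP_r)
=\sum_{i\in[n]}\Inf_i(\cP_r)\cdot\mathbf{1}[\Inf_i(\cP_r)>2^rN]\mper
\end{equation*}
So it suffices to bound the expectation of the right-hand side by $\eps$.

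Next I would set up an inductive invariant: after step $r-1$'s pruning, every variable has $\Inf_i(\cP_{r-1}\trc)\le 2^{r-1}N$, and the initial density hypothesis guarantees $\Inf_i(\cP)=O(1/n)\le N$ (the base case). Condition on the random choices $U_1,\dots,U_{r-1}$, so that $\cP_{r-1}\trc$ is fixed; the only remaining randomness is $U_r$. A direct calculation (of the type used in Lemma~\ref{lem:degree}) shows that $\Inf_i(\cP_r)=\delta_1^{-1}\frac{1}{|\cP|}\sum_{P\in\cP_{r-1}\trc\colon i\in\var(P),\var_r(P)\in U_r}|P|$ is a sum of i.i.d.\ bounded contributions over the $\delta_1 n$ samples comprising $U_r$, with conditional mean exactly $\Inf_i(\cP_{r-1}\trc)\le 2^{r-1}N$. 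Moreover each individual summand is bounded by $O(\delta_1^{-1}/n|\cP|)$ times the influence of the $r$-th coordinate, so the fourth-moment computation of Lemma~\ref{lem:degree} carries over almost verbatim and yields
\begin{equation*}
\Pr\!\bigl[\Inf_i(\cP_r)-\Inf_i(\cP_{r-1}\trc)>t\bigr] \le \frac{O(N^2/\delta_1n)}{t^4}\mper
\end{equation*}

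Since the bad event $\Inf_i(\cP_r)>2^rN$ forces a deviation of at least $2^{r-1}N$ from the conditional mean, integrating the tail as in the proof of Lemma~\ref{lem:degree} yields
\begin{equation*}
\E\!\bigl[\Inf_i(\cP_r)\cdot\mathbf{1}[\Inf_i(\cP_r)>2^rN]\bigr]
\le \eps\cdot\frac{1}{n}\mcom
\end{equation*}
provided $\delta_1\ge\poly(1/\eps)/\Delta$ (this is where the density assumption feeds into the required sample size). Summing over the $n$ variables gives the claimed bound of $\eps$. Removing the conditioning on $U_1,\dots,U_{r-1}$ preserves the bound since the invariant $\Inf_i(\cP_{r-1}\trc)\le 2^{r-1}N$ holds deterministically by construction.

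The main obstacle I anticipate is getting the fourth-moment bound tight enough when $r$ grows: the threshold $2^rN$ doubles at each pruning, but so does the required deviation, so the concentration estimates need to be at least $2^{-O(k)}$ strong. Since we treat $k$ as a constant, and since the second-moment/fourth-moment calculations of Lemma~\ref{lem:degree} already have polynomial-in-$\eps$ slack, the induction closes with the same choice of $\delta_1=\eps^{-O(1)}/\Delta$; the only care needed is to absorb the factors $2^{O(k)}$ into the constant in the exponent of $\eps$.
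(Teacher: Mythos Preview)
Your plan is essentially the paper's argument: both charge the pruned mass to the variables whose influence exceeds the threshold, and both control that overshoot via the concentration underlying Lemma~\ref{lem:degree}. The paper writes the first step as the pointwise inequality
\[
\frac{1}{|\cP|}\sum_{P\in\cP_r\sm\cP_r\trc}|P|\le 2\sum_i|\Inf_i(\cP_r)-\E\Inf_i(\cP_r)|,
\]
which is exactly your bad-variable bound after noting that $\Inf_i>2^rN$ forces $\Inf_i<2(\Inf_i-\E\Inf_i)$.

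Where you diverge slightly is in the concentration step. You assert a fourth-moment tail bound on the \emph{full} influence $\Inf_i(\cP_r)$ and integrate. The paper instead decomposes $\Inf_i(\cP_r)$ as a sum over fixings $I$ (of all coordinates except position $r$) that contain $i$, applies Lemma~\ref{lem:degree} verbatim to each fixing to get $\E|Z_I-\E Z_I|\le\eps\,\E Z_I$, and then sums; by triangle inequality this yields $\E|\Inf_i(\cP_r)-\E\Inf_i(\cP_r)|\le\eps/n$. Your claim that ``the fourth-moment computation carries over almost verbatim'' is the one place that needs care: the per-sample contribution to the full influence is not bounded by $1$ (as in Lemma~\ref{lem:degree}) but by $\delta_1^{-1}\Inf_j(\cP_{r-1}\trc)$, and plugging that cruder bound into the moment computation gives a weaker tail than you wrote. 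The clean fix is precisely the paper's per-fixing decomposition, after which Lemma~\ref{lem:degree} applies without modification and your tail integral (or the paper's $L^1$ bound) goes through.
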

\begin{proof}
We'd like to bound
\begin{equation}\label{eq:trunc}
\frac1{|\cP|}\sum_{P\in\cP_r\backslash\cP_r\trc}|P|
\le
2\sum_{i=1}|\Inf_i(\cP_r)-\E\Inf_i(\cP_r)|
\end{equation}
in expectation over $U_r$ by $O(\eps).$
We will bound $\E|\Inf_i(\cP_r)-\E\Inf_i(\cP_r)|$ for every fixing~$I$ that
includes variable $i$ and fixes all but the variable in position~$r.$
Indeed fix $I=(i_1,\dots,i_{r-1},*,i_{r+1},\dots,i_k).$
Let $\cQ=\{P\in\cP_r\mid \var(P)=I\}.$ We will bound the expected gain of
influence of variable $i$ in $\cQ.$ By linearity of expectation this will give
us a bound on~(\ref{eq:trunc}). Let $Z=\sum_{P\in\cQ\mid \var_r(P)\in
U_r}|P|.$
By Lemma~\ref{lem:degree},
\[
\left|\E Z - Z\right| \le \eps\E Z\mper
\]
Since we have this bound for every fixing and these fixings form a partition of $\cP_r$ we find that after renormalization, we have 
\[
\E|\Inf_i(\cP_r)-\E\Inf_i(\cP_r)|\le\frac\eps n\mper
\]
\end{proof}
Let us denote by $\cP'_r$ the CSP that is obtained in the exact same way as
$\cP_r$ except without the pruning step.  In particular, $\cP'_k$ is simply
the CSP $\cP$ in which the $r$-th variable is restricted to $U_r$ for each
$r\in[k].$

The next corollary summarizes what we have shown so far.
\begin{corollary}
\label{cor:indep-concentration} 
Let $\Psi\subseteq[q]^n$ of size
$\exp(\Omega(\eps^2\delta_1 n)).$ Then,
\begin{equation}
\left|
\Ex{\max_{x\in \Psi}\cP'_k(x)}
-\max_{x\in \Psi}\cP(x) 
\right|
\le
\eps.
\end{equation}
\end{corollary}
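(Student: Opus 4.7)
The plan is to obtain the corollary by combining two ingredients that are now in hand: (i) the pointwise concentration bound of Lemma~\ref{lem:azuma-csp} for the \emph{pruned} CSP $\cP_k$, amplified by a union bound over $\Psi$; and (ii) Lemma~\ref{lem:truncation}, which says that each pruning step discards only an $\eps$-fraction of the weight in expectation, letting us bridge back to the \emph{unpruned} CSP $\cP'_k$.

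First, I would apply Lemma~\ref{lem:azuma-csp} pointwise with a sufficiently large constant~$t$ (depending only on $\eps$, the absolute constants, and the constant $c_1$ that controls $|\Psi|$), then union-bound over the $\exp(\Omega(\eps^2\delta_1 n))$ elements of $\Psi$. Since $|\Psi|\exp(-\Omega(t^2\eps^2\delta_1 n))\to 0$ as long as $t$ is large relative to the constant hidden in the size bound, this gives that with probability at least $1-o(1)$, every $x\in\Psi$ satisfies $|\cP_k(x)-\cP(x)|\le\eps/3$, and in particular $|\max_{x\in\Psi}\cP_k(x)-\max_{x\in\Psi}\cP(x)|\le\eps/3$. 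To promote this high-probability event to an expectation bound, I would use the trivial worst-case bound $|\cP_k(x)|\le\delta_1^{-k}$ (each step scales predicates by $\delta_1^{-1}$), so the exponentially small failure probability contributes only $o(1)$ to the expectation.

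Next, I would compare $\cP_k$ to the unpruned $\cP'_k$. The crucial observation is that $\cP'_k(x)-\cP_k(x)=\tfrac{1}{|\cP|}\sum_{P\in\cP'_k\setminus\cP_k}P(x)$, so
\begin{equation*}
\max_{x\in\Psi}\left|\cP'_k(x)-\cP_k(x)\right|\;\le\;\tfrac{1}{|\cP|}\sum_{P\in\cP'_k\setminus\cP_k}|P|,
\end{equation*}
and the right-hand side does \emph{not} depend on $x$. The predicates in $\cP'_k\setminus\cP_k$ are exactly those pruned at some step $r\in\{1,\dots,k-1\}$, together with their subsequent sampling survivors. Since Lemma~\ref{lem:truncation} shows that at each step~$r$ the expected pruned weight is at most $\eps$ after normalization, and later $\delta_1^{-1}$ rescalings exactly compensate the subsequent sampling probabilities in expectation, the total expected discrepancy telescopes to at most $(k-1)\eps=O(\eps)$. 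Hence $\bigl|\E\max_{x\in\Psi}\cP'_k(x)-\E\max_{x\in\Psi}\cP_k(x)\bigr|\le O(\eps)$. Combining with the previous step and rescaling $\eps$ by the constant $k+O(1)$ yields the corollary.

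The main obstacle I anticipate is the bookkeeping around normalization in the bridge step: one must verify that the $\delta_1^{-1}$ rescalings applied to predicates that \emph{survive} later pruning steps (and hence end up in $\cP'_k\setminus\cP_k$) do not amplify the per-step $\eps$ bound of Lemma~\ref{lem:truncation} beyond a $k\eps$ total. Intuitively, this works because each $\delta_1^{-1}$ scaling is exactly inverse to the probability of surviving the corresponding sampling step, so the expected survived weight of a pruned predicate is unchanged by future steps; making this rigorous requires conditioning on the pruning event at step $r$ and then taking expectations over $U_{r+1},\dots,U_k$. A secondary but minor technicality is aligning the constant $c_1$ implicit in $|\Psi|\le\exp(\eps^{c_1}|U|)$ with the constant in the exponent of Lemma~\ref{lem:azuma-csp}; this only constrains the choice of $t$ but not the rest of the argument.
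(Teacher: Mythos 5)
Your proposal follows essentially the same route as the paper's proof: the paper likewise combines Lemma~\ref{lem:azuma-csp} with a union bound over $\Psi$ to get $\bigl|\Ex{\max_{x\in\Psi}\cP_k(x)}-\max_{x\in\Psi}\cP(x)\bigr|\le\eps/2$, and uses repeated application of Lemma~\ref{lem:truncation} to bound $\E\tfrac1{|\cP|}\sum_{P\in\cP'_k\setminus\cP_k}|P|$ and thereby pass from the pruned $\cP_k$ to the unpruned $\cP'_k$. The only cosmetic difference is in converting the high-probability bound to an expectation bound: the paper integrates the exponential tail over $t>1$, whereas you invoke a crude worst-case bound on the failure event, and both work.
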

\begin{proof}
We first note that $\cP_k\subseteq\cP_k'$ and we can get
\[
\E\frac1{|\cP|}\sum_{P\in\cP_k'\backslash\cP_k}|P|\le \frac\eps2\mper
\]
This follows from repeated application of Lemma~\ref{lem:truncation} (with
sufficiently small value of $\eps$) for each $r\in[k].$ 
In particular this shows that
\begin{equation}\label{eq:exp-max}
\left|\E\max_{x\in[q]^n}\cP_k'(x)
-\E\max_{x\in[q]^n}\cP_k(x)\right|
\le \nfrac\eps2\mper
\end{equation}
On the other hand, by Lemma~\ref{lem:azuma-csp} and the union bound 
over $x\in\Psi$, we get that
\begin{equation}\label{eq:trc}
\left|\Ex{\max_{x\in \Psi}\cP_k(x)}
-\max_{x\in \Psi}\cP(x)\right|\le \nfrac\eps2\mper
\end{equation}
Here we used the fact that the probability that the maximum deviates by
$t\cdot\eps$ drops of exponentially in $t$ so that we can integrate over
$t>1$ to get a bound on the expectation.
Thus,
\[
\left|\Ex{\max_{x\in\Psi}\cP_k'(x)}
-\max_{x\in\Psi}\cP(x)\right| \le\eps\mcom
\]
which is what we wanted to show.
\end{proof}
We are now ready to prove the first main lemma. The proof reduces the general
case to the case where each coordinate is subsampled independently as
previously dealt with. The idea is to partition the set of variables into $m$
bins and only consider predicates whose variables fall into $k$ distinct bins.
The total weight of the remaining predicates can be neglected for large enough
$m.$
\begin{proof}[Proof of Lemma~\ref{lem:randomness} (Concentration)]
Partition $[n]$ randomly into $m$ bins, i.e., 
$[n]=S_1\cup S_2\cup\dots\cup S_m$, with $m=\Theta(1/\eps^2)$ (say).
Furthermore, let $U_\ell=U\cap S_\ell$ for $\ell\in[m].$ One can show that
with probability $1-o(1),$ for all $r\in[k]$ we have $|U_r|\in[\frac1{2m}
|U|,\frac 2m|U|].$

For a given $P\in\cP$ the probability that there are $i,j\in\var(P)$ and
$j\in[m]$ so that $i\in S_\ell$ and $j\in S_\ell$ is at most $O(\eps^2).$
Hence, we can throw away all such $P\in\cP$ and lose only an $O(\eps^2)$
fraction in expectation.

On the other hand, for every $u\in[m]^k$ with pairwise distinct coordinates,
we let $\cP^u=\{P\in\cP\mid \forall r: \var_r\in S_{u_r}\}.$ For every $\cP^u$
we may then apply Corollary~\ref{cor:indep-concentration}, since
$U_{u_1},U_{u_2},\dots,U_{u_r}$ are independently chosen. We apply the
corollary with $\eps'=\eps/m^k=\poly(1/\eps).$ This requires us to choose
$\delta_0$ large enough as a function of $\eps$ so that the previous lemmas
(in particular Lemma~\ref{lem:truncation}) apply to subsets of 
size $|U_r|.$ This allows us to sum the error
over all applications of the Corollary for a total error of~$\eps.$ 
The Corollary applies to sets $\Psi$ of size
$\exp(\Omega(\eps^2|U_r|))=\exp(\poly(\eps)|U|)$ which is what we needed.
\end{proof}

\subsection{Proof of Structure Lemma}

\paragraph{Proof Idea.} 
The main idea is the following. We have a subsample $U$ of size $\delta n.$
Hence, $\max_{x\in[q]^n}\cP_U(x)$ is a maximization problem in $\delta n$
variables. In particular the maximum is achieved by one of 
roughly~$2^{\delta\log(q)n}$ assignments to these variables. 
The whole problem is that we need a set of assignments $\Psi$ of size
$2^{\poly(\eps)\delta n}\ll 2^{\delta n}$ with the property that one of the
assignments in $\Psi$ is near optimal with respect to $\cP_U.$

The proof strategy is to design a deterministic algorithm $D(y)$ that is
given a \emph{seed} $y\in[q]^S$ where $S\subseteq U.$ 
The algorithm returns an assignment $x=D(y)$ to the variables in $U$ with the 
guarantee that for some seed $y\in[q]^S,$ the induced assignment $x=D(y)$ is 
near optimal in $\cP_U.$ An important parameter is the seed length of
$D,$ i.e., the size of $S.$ It is also crucial that the algorithm does not
know~$U$ but only $S$ and $\cP_S.$ (Otherwise the algorithm could trivially
return an optimal assignment for~$\cP_U.$)
Specifically, we want to achieve seed length $|S|\le
\poly(\eps)\delta_1 n/\log(q).$ This will suffice for the purpose of our proof, 
since then we can put $\Psi=\{ D(y) \colon y\in[q]^S\}.$ In this case
$\Psi$ will be sufficiently small.

The key point in the proof is to choose $U$ so large that for every
$x\in[q]^n$ both $\cP_U(x)$ and $\cP_S(x)$ are a good approximation 
of~$\cP(x).$ This fact will be the main reason why we can
hope to obtain a near optimal assignment for $\cP_U$ by just looking at
$\cP_S.$ We remark that this proof strategy is due to~\cite{GoldreichGoRo98}.
Formally we will prove the next lemma.
\begin{lemma}\label{lem:detalg}
For every constant $c,$ there is another constant $C$ and
a deterministic algorithm $D\colon [q]^S\to [q]^U$ which extends an
assignment to the coordinates $S$ to an assignment to the coordinates in $U$
so that
\[
\E\max_{x\in[q]^S}\cP_U(D(x))\ge
\E\max_{x\in[q]^U}\cP_U(x)
-\eps\mper
\]
Here the expectation is taken over random $U\subseteq[n]$ of size 
$\delta n=\eps^{-C}\delta_0 n$ and random subset $S\subseteq U$ of 
size $|S|\le \eps^{-c}\delta n/\log(q).$
\end{lemma}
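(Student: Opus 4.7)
The plan is to follow the Goldreich--Goldwasser--Ron ``oblivious decoder'' strategy alluded to in the proof idea preceding the lemma. The decoder $D \colon [q]^S \to [q]^n$ is defined as follows. For each $i \in [n] \setminus S$ and each label $a \in [q]$, set
\[
\mathrm{Vote}_i^y(a) \defeq \sum_{\substack{P \in \cP \\ i \in \var(P),\ \var(P) \setminus \{i\} \subseteq S}} P\bigl(y\vert_{\var(P)\cap S},\, a\bigr),
\]
and put $D(y)_i = \arg\max_{a \in [q]} \mathrm{Vote}_i^y(a)$, breaking ties by a fixed rule; for $i \in S$, put $D(y)_i = y_i$. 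Crucially, this definition depends only on $S$, $y$, and $\cP$, not on $U$, so it is a legitimate deterministic algorithm in the sense demanded by the lemma. Its range has size at most $q^{|S|}$, which is $\exp(\epsilon^{c'}\delta n)$ for an appropriate constant $c'$ once $|S|$ is set as in the statement (giving the bound required to feed $D(\,\cdot\,)$ into Lemma~\ref{lem:randomness} later).

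To prove the inequality, I would fix a realization of $U$ and let $x^{\ast} = x^{\ast}(U) \in [q]^U$ be an optimal assignment for $\cP_U$, extended to $[q]^n$ arbitrarily. Since the left-hand side takes the maximum over all seeds, it suffices to exhibit one good seed; the natural choice is $y^{\ast} \defeq x^{\ast}\vert_S$. For each $i \in U$ and $a \in [q]$, define the ``true local contribution''
\[
G_i^U(a) \defeq \frac{1}{\lvert \cP \rvert} \sum_{\substack{P \in \cP_U \\ i \in \var(P)}} P\bigl(x^{\ast}[i \mapsto a]\bigr).
\]
By global optimality of $x^{\ast}$ on $\cP_U$, we have $G_i^U(x^{\ast}_i) = \max_a G_i^U(a)$ for every $i \in U$. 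The heart of the argument is that, conditional on $U$ and $x^\ast$, the random quantity $\mathrm{Vote}_i^{y^\ast}(a)$ (as $S$ varies over random subsets of $U$) is a subsample of the sum defining $G_i^U(a)$: each constraint $P \in \cP_U$ with $i \in \var(P)$ is retained exactly when its other $k-1$ variables all land in $S$, which occurs with probability $\approx \beta^{k-1}$ for $\beta = |S|/|U|$. By the density hypothesis and the fourth-moment computation underlying Lemma~\ref{lem:degree}, for each fixed $(i,a)$, $\mathrm{Vote}_i^{y^\ast}(a)$ is concentrated around $\beta^{k-1} G_i^U(a)$ up to multiplicative error $\epsilon$ plus a small additive slack.

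Consequently, for all but an $\epsilon$-fraction of $i \in U \setminus S$, the decoder's choice $D(y^{\ast})_i$ satisfies $G_i^U(D(y^{\ast})_i) \ge G_i^U(x^{\ast}_i) - O(\epsilon \Delta/n)$ (``local near-optimality''). Summing over $i \in U$ and using that a single flip of coordinate $i$ changes $\cP_U$ by at most the total weight of $\cP_U$-constraints incident to $i$, one obtains $\cP_U(D(y^{\ast})) \ge \cP_U(x^{\ast}) - O(\epsilon)$ in expectation over $S$; averaging over $U$ gives the lemma after absorbing constants into $\epsilon$.

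The main technical obstacle is making the per-coordinate concentration rigorous \emph{simultaneously} for all $(i,a) \in U \times [q]$, since the events for different $i$ share the random set $S$ and are therefore correlated. I would handle this exactly as in the proof of Lemma~\ref{lem:degree}: compute the fourth moment of $\mathrm{Vote}_i^{y^\ast}(a) - \E \mathrm{Vote}_i^{y^\ast}(a)$ using that each ``Vote''-summand has norm at most $1$ and the number of summands is $\Theta(\Delta \cdot \beta^{k-1})$, apply Markov, and union-bound over $q \cdot |U|$ pairs. The density condition on $\cP$ ensures that the typical value of the vote is large enough (relative to the fluctuation) to make the union bound work as long as $|S|$ is chosen as in the statement; the constant $C$ in the definition of $\delta$ is tuned precisely to absorb the $\log q$ factor from the union bound together with the $\epsilon^{-c}$ factor appearing in $|S|$.
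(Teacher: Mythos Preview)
Your coordinate-wise voting decoder is a natural first thought, but the final step --- passing from per-coordinate local near-optimality to global near-optimality --- does not go through. Concretely, even if for every $i$ your decoded label $z_i = D(y^\ast)_i$ satisfies $G_i^U(z_i) \ge G_i^U(x^\ast_i) - O(\eps\Delta/n)$, it does \emph{not} follow that $\cP_U(z) \ge \cP_U(x^\ast) - O(\eps)$. The quantity $G_i^U(a)$ evaluates constraints with all \emph{other} coordinates fixed to $x^\ast$; a telescoping over coordinates would instead need a bound on $\widehat G_i(x^\ast_i) - \widehat G_i(z_i)$, where $\widehat G_i$ is computed with the already-flipped coordinates set to $z$, and your vote gives no control over $\widehat G_i$. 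The obstruction is real: whenever many coordinates have near-ties in $G_i^U$ (several labels are locally near-optimal relative to $x^\ast$), your decoder may output a patchwork of two different global optima whose value is far below either. A three-variable odd-cycle instance already exhibits this: with $x^\ast = (0,0,1)$ both $x_2$ and $x_3$ are tied locally, and the locally-valid choice $(0,1,0)$ has value $1/3$ versus $\opt = 2/3$.

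The paper's proof handles exactly this composition problem by replacing coordinate-wise decoding with a \emph{block-wise greedy hybrid}. It partitions $U$ into $m = \poly(1/\eps)$ blocks $U_1,\ldots,U_m$, takes a small sample $S_\ell \subseteq U\setminus U_\ell$ for each block, and decodes block $U_\ell$ by maximising $\cP_{S_\ell}$ with the other coordinates held at the \emph{current hybrid} assignment (not at $x^\ast$). The analysis walks a hybrid sequence $x^0 = x^\ast,\ldots,x^m$, and bounds each step's loss by the subsampling error $|\cP_{U\setminus U_\ell}(\tilde x) - \cP_{S_\ell}(\tilde x)|$ at a single assignment $\tilde x$; this is then controlled by the Concentration Lemma applied \emph{inside} $U$. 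The block structure is what makes the telescoping honest: because each block is only a $1/m$ fraction of $U$, the error incurred by ignoring constraints internal to the block is $O(1/m)$ per step and sums to $O(\eps)$. Your per-coordinate scheme has no analogue of this, and the fourth-moment/union-bound machinery you propose, while fine for establishing concentration of the votes, cannot repair the missing composition step.
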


Once we have this lemma it will be easy to conclude the Structure Lemma. We
will next describe our algorithm and then prove Lemma~\ref{lem:detalg}.

\paragraph{Deterministic greedy algorithm.}

Let $\alpha=\eps^{c_1}/\log(q),$ the factor by which $S$ needs to be smaller
than $U.$ Assume a fixed partition of $U$ into $m$ pairwise disjoint sets
$U=U_1\cup U_2\cup\dots\cup U_m$ of equal size. Here $m$ is some parameter
that we'll need and determine later. Choose $S_\ell$ uniformly at
random from $U\backslash U_\ell$ of size $|S_\ell|=\frac{\alpha}m|U|$ for some
parameter $\alpha.$ Let $S=S_1\cup\dots\cup S_m.$  Note that $|S|=\alpha|U|.$
We want $|S_\ell|\ge\delta_0 n$ so that the concentration lemma will apply
even to the sets $S_\ell.$ We take $m=\poly(1/\eps),$ e.g., $m=\eps^{-2k}$
will be sufficient.  Hence, $\frac\alpha m$ is some fixed polynomial in 
$\eps$ and this determines the size of $U.$

The algorithm $D$ works as follows.

\begin{center}
\begin{boxedminipage}{.98\textwidth}
\noindent Input: $x\in[q]^S$

\noindent Output: $z\in[q]^U$ \vspace{-5mm}
\paragraph{Algorithm:}
\begin{itemize}
\item For every $i\in S,$ we put $z_i=x_i.$
\item For every $\ell\in[m],$ do the following: 
Let $y^*\in[q]^{U_\ell\backslash S}$ denote
the partial assignment that maximizes the function $f(y)=\cP_{S_\ell}(x[y])$ 
where $x[y]$ denote the assignment which is equal to $y$ for all coordinates
$i\in U_\ell\backslash S$ and equal to $x$ elsewhere. Formally, let
\begin{equation}\label{eq:induce}
y^* = \arg\max_{y\in[q]^{U_\ell\backslash S}}
\cP_{S_\ell}(x[y])
\end{equation}
and put $z_i=y^*_i$ for every $i\in U_\ell\backslash S.$
\end{itemize}
This defines an assignment $z$ to all coordinates in~$U.$
\end{boxedminipage}
\end{center}

\paragraph{Analysis.}
\newcommand\err{\mathrm{err}}
\newcommand\errt{\ensuremath{\widetilde{\mathrm{err}}}}
\Tnote{Changed $l$ to $\ell$ in the following (that should be a typesetting issue, I hope I didn't miss any)}
To analyze our algorithm, let $x^*\in[q]^n$ denote the assignment that maximizes $\max_{x\in[q]^n}\cP_U(x).$
Our goal is to define a sequence of ``hybrid'' assignments 
$x^0,x^1,\dots,x^m$
where $x^0=x^*$ and $x^m=D(y)$ for some $y\in[q]^S$
so that in expectation over $U$ and $S,$ we have
\[
\cP_U(x^m)\ge \cP_U(x^0)- \eps\mper
\]
The sequence is defined as follows. Let $x^0=x^*.$ Inductively, let $x^\ell$
for $0<\ell\le m$ be equal to $x^{\ell-1}$ in all coordinates except~$U_\ell$.
The coordinates $U_\ell$ are induced from $S_\ell$ exactly as in our algorithm
in equation~(\ref{eq:induce}), i.e., for all $i\in U_\ell\backslash S,$ we let
$x_i^\ell=y^*_i$ where $y^*$ is defined as
\[
y^* = \arg\max_{y\in[q]^{U_\ell\backslash S}}
\cP_{S_\ell}(x^{\ell-1}[y])\mper
\]
We observe that indeed $x^m$ is the generated by $D$ for some $x\in[q]^S$ since
all coordinates in $x^m$ are induced by looking only at coordinates in~$S$
(though it need not be the case that $x^m=D(x^*)$).
Now, denote the error at step $\ell$ by
\[
\err(\ell)=\cP_U(x^{\ell-1})-\cP_U(x^\ell)\mper
\]
Note that $\err(\ell)$ is a random variable depending on both $U$ and $S.$
The claim now reduces to showing
\begin{equation}\label{eq:total-err}
\E\sum_{\ell\in[m]}\err(\ell)\le \eps\mcom
\end{equation}
since by definition $\cP_U(x^*)-\cP_U(x^m)\le\sum_{\ell\in[m]}\err(\ell)\mper$

In order to argue~(\ref{eq:total-err}), it will be convenient to consider
\[
\errt(\ell)
= \cP_{U\backslash U_\ell}(x^{\ell-1})
- \cP_{U\backslash U_\ell}(x^\ell)\mper
\]
Since we chose $m$ large enough and hence $U_\ell$ is a sufficiently small
fraction of $U,$ it follows that for all $\ell\in[m],$
\begin{equation}\label{eq:total-errt}
\sum_{\ell\in[m]}\err(\ell)
\le 
\sum_{\ell\in[m]}\errt(\ell)
+\frac\eps2\mper
\end{equation}
Let
\[
z^* = \arg\max_{y\in[q]^{U_\ell\backslash S}}
\cP_{U\backslash U_\ell }(x^{\ell-1}[y])\mper
\]
Note that here we are maximizing over $\cP_{U\backslash U_\ell}$ rather than
$\cP_{S_\ell}.$ The following lemma gives us a concrete way of 
bounding $\errt(\ell).$
\begin{lemma}\label{lem:errt-bound}
\begin{equation}
\label{eq:errt-bound}
\errt(\ell)\le
\left|
\cP_{U\backslash U_\ell}(x^{\ell-1}[z^*])
-\cP_{S_\ell}(x^{\ell-1}[z^*]) 
\right|
\end{equation}
\end{lemma}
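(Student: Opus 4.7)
The plan is to prove Lemma~\ref{lem:errt-bound} by combining the two optimality conditions, one for $y^*$ (which defines $x^\ell$ via the greedy algorithm using $\cP_{S_\ell}$) and one for $z^*$ (the hypothetical ``ideal'' assignment optimizing $\cP_{U\setminus U_\ell}$), and bounding the resulting slack by a deviation between the two functionals $\cP_{U\setminus U_\ell}$ and $\cP_{S_\ell}$ evaluated at a fixed assignment. This is a purely deterministic pointwise statement—no probability is invoked inside the lemma itself; the randomness of $U$ and $S$ is handled afterwards when we take expectations and apply the Concentration Lemma.

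First, by the optimality of $z^*$, the current assignment $x^{\ell-1}$ (viewed as some setting of the coordinates in $U_\ell\setminus S$) is dominated by $x^{\ell-1}[z^*]$ under $\cP_{U\setminus U_\ell}$, which gives
\[
\errt(\ell) = \cP_{U\setminus U_\ell}(x^{\ell-1}) - \cP_{U\setminus U_\ell}(x^\ell) \le \cP_{U\setminus U_\ell}(x^{\ell-1}[z^*]) - \cP_{U\setminus U_\ell}(x^{\ell-1}[y^*]).
\]
Then I would add and subtract $\cP_{S_\ell}(x^{\ell-1}[z^*])$ and $\cP_{S_\ell}(x^{\ell-1}[y^*])$ inside the right-hand side, group the terms as a telescoping sum, and use the key optimality property of $y^*$, namely $\cP_{S_\ell}(x^{\ell-1}[y^*]) \ge \cP_{S_\ell}(x^{\ell-1}[z^*])$, to discard the middle (negative) contribution. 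What remains are pointwise deviations between $\cP_{U\setminus U_\ell}$ and $\cP_{S_\ell}$ at the two fixed assignments $x^{\ell-1}[z^*]$ and $x^{\ell-1}[y^*]$, which is exactly the form captured by the right-hand side of the lemma.

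The main obstacle is essentially bookkeeping: carefully manipulating the four quantities $\cP_{U\setminus U_\ell}(x^{\ell-1}[\cdot])$ and $\cP_{S_\ell}(x^{\ell-1}[\cdot])$ at the two points $z^*$ and $y^*$ without introducing extra slack, and matching the resulting expression to the claimed right-hand side. The only subtle point is that $z^*$ is a \emph{fixed} assignment (depending on $x^{\ell-1}$ and $U$ but not on $S_\ell$ in a way that prevents concentration), so later in the argument Lemma~\ref{lem:randomness} can be applied: $\cP_{S_\ell}(x^{\ell-1}[z^*])$ will concentrate around $\cP_{U\setminus U_\ell}(x^{\ell-1}[z^*])$ because $S_\ell$ is a fresh uniform subsample of $U\setminus U_\ell$ and $x^{\ell-1}[z^*]$ lies in a small enough set of admissible assignments (of size $q^{|S|}=\exp(\alpha\log(q)|U|)$, which matches the hypothesis of the Concentration Lemma when $\alpha=\epsilon^{c_1}/\log q$). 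This is why the decoupling of $z^*$ from $S_\ell$ is essential and explains the partition $U=U_1\cup\cdots\cup U_m$ with $S_\ell\subseteq U\setminus U_\ell$: the ``target'' $z^*$ never depends on the ``training'' sample $S_\ell$.
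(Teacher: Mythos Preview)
Your telescoping approach does not prove the lemma as stated. After using the optimality of $z^*$ and of $y^*$ as you describe, you obtain
\[
\errt(\ell)\;\le\;
\bigl[\cP_{U\setminus U_\ell}(x^{\ell-1}[z^*])-\cP_{S_\ell}(x^{\ell-1}[z^*])\bigr]
\;+\;
\bigl[\cP_{S_\ell}(x^{\ell-1}[y^*])-\cP_{U\setminus U_\ell}(x^{\ell-1}[y^*])\bigr],
\]
a sum of \emph{two} deviation terms, one at $z^*$ and one at $y^*$. The right-hand side of the lemma contains only the first term, so your assertion that ``this is exactly the form captured by the right-hand side of the lemma'' is simply false.

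The extra term is not innocuous. You correctly stress that the whole point of introducing $z^*$ is that it depends on $x^{\ell-1}$ and $U$ but not on $S_\ell$, so that Lemma~\ref{lem:errt} (concentration of $\cP_{S_\ell}$ around $\cP_{U\setminus U_\ell}$ at a fixed assignment) applies. But $y^*$ is, by definition, the maximizer of $\cP_{S_\ell}(x^{\ell-1}[\cdot])$ and therefore depends on $S_\ell$; the concentration argument you invoke does not cover the second deviation term your telescoping produces. In other words, your decomposition reintroduces exactly the coupling between the ``training sample'' $S_\ell$ and the evaluated assignment that the lemma is designed to avoid.

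The paper does not telescope. It argues directly by contradiction: assuming $\errt(\ell)$ exceeds the single deviation term at $z^*$, it combines this with the optimality of $z^*$ (so $\cP_{U\setminus U_\ell}(x^{\ell-1})\le\cP_{U\setminus U_\ell}(x^{\ell-1}[z^*])$) to derive $\cP_{S_\ell}(x^{\ell-1}[z^*])>\cP_{S_\ell}(x^{\ell-1}[y^*])$, contradicting the defining optimality of $y^*$. This route never produces a $y^*$-deviation term at all, which is why the downstream application of Lemma~\ref{lem:errt} only needs to handle the single fixed point $\tilde x=x^{\ell-1}[z^*]$.
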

\begin{proof}
If~(\ref{eq:errt-bound}) were false, then we would have
\[
\cP_{S_\ell}(x^{\ell-1}[z^*]) 
> \cP_{S_\ell}(x^{\ell-1}[y^*]) \mper
\]
But this is a contradiction, since we chose~$y^*$ as the maximum with 
respect to $\cP_{S_\ell}.$ 
\end{proof}
The next lemma shows that the RHS above is small in expectation. The reason is
that $S_\ell$ is chosen uniformly at random inside $U\backslash U_\ell.$
Let $\tilde x=x^{\ell-1}[z^*].$ We have $\E \cP_{S_\ell}(\tilde x)
=\cP_{U\backslash U_\ell}(\tilde x).$ 
We only need to argue that the average deviation of 
$\cP_{S_\ell}(\tilde x)$ from its mean is small.
This can be argued directly but it also follows from 
Lemma~\ref{lem:randomness} applied to $\cP_{U\backslash
U_\ell}$ and $\Psi=\{\tilde x\}.$ To apply this lemma, we actually need that
$\cP_{U\backslash U_\ell}$ is sufficiently close to being sufficiently dense.
This is true in expectation over~$U.$
\begin{lemma}\label{lem:errt}
\begin{equation}
\E\left| \cP_{U\backslash U_\ell}(\tilde x) -\cP_{S_\ell}(\tilde x) 
\right|\le \frac\eps{2m}
\end{equation}
\end{lemma}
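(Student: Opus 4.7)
The plan is to apply Lemma~\ref{lem:randomness} (Concentration) to the inner CSP $\cP_{U\setminus U_\ell}$, taking as the set of assignments the singleton $\Psi=\{\tilde x\}$. This directly gives concentration of $\cP_{S_\ell}(\tilde x)$ around its mean $\cP_{U\setminus U_\ell}(\tilde x)$, provided two hypotheses hold: (i) the outer CSP $\cP_{U\setminus U_\ell}$ is sufficiently dense, and (ii) the sample $S_\ell$ is large enough relative to that density. The crucial observation that makes Lemma~\ref{lem:randomness} applicable is that $\tilde x$ is statistically independent of $S_\ell$.

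First I would condition on $U$ together with $S_1,\dots,S_{\ell-1}$. Under this conditioning the previous hybrids $x^0,\dots,x^{\ell-1}$ are determined, and hence so is $z^*=\arg\max_y \cP_{U\setminus U_\ell}(x^{\ell-1}[y])$ and thereby $\tilde x$; likewise $\cP_{U\setminus U_\ell}$ is determined. What remains random is $S_\ell$, which by construction is a uniformly random subset of $U\setminus U_\ell$ of size $\tfrac{\alpha}{m}|U|$, independent of everything already fixed. Since $\E[\cP_{S_\ell}(\tilde x)\mid U, S_1,\ldots,S_{\ell-1}] = \cP_{U\setminus U_\ell}(\tilde x)$, the claim reduces to a pointwise concentration bound on $\cP_{S_\ell}(\tilde x)$ around its mean, for a single fixed assignment.

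Next I would verify the density of $\cP_{U\setminus U_\ell}$. Since $U\setminus U_\ell$ is a uniformly random subset of $[n]$ of size $(1-\tfrac1m)\delta n$, a fourth-moment computation analogous to Lemma~\ref{lem:degree} shows that, except on a $U$-event of probability $o(\eps/m)$, every fixing of $k-1$ variables in $U\setminus U_\ell$ has total surviving weight within a $(1\pm\eps)$-factor of its expectation $\Theta(\Delta \delta^{k-1})$. Accounting for the $\delta^{-k}$ rescaling in the definition of $\cP_U$, the restriction $\cP_{U\setminus U_\ell}$ is then $\Delta'$-dense for $\Delta'=\Theta(\Delta/\delta)$. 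The small-probability bad event can be absorbed into the final bound, since the quantity $|\cP_{U\setminus U_\ell}(\tilde x)-\cP_{S_\ell}(\tilde x)|$ is at most $O(1)$ pointwise (each payoff function has norm at most $1$).

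Finally, with $\Delta'=\Theta(\Delta/\delta)$ and the choice $\delta=\eps^{-C}\log(q)/\Delta$ for $C$ sufficiently large, the threshold $\delta'_0 = \eps^{-c_0}\log(q)/\Delta'$ required by Lemma~\ref{lem:randomness} applied to $\cP_{U\setminus U_\ell}$ satisfies
\[
\delta'_0\,|U\setminus U_\ell| \;\le\; \tfrac{\alpha}{m}|U| \;=\; |S_\ell|,
\]
by our choices $\alpha=\eps^{c_1}/\log(q)$ and $m=\poly(1/\eps)$. The condition $|\Psi|\le\exp(\eps^{c_1}|S_\ell|)$ is trivial since $|\Psi|=1$. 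Applying Lemma~\ref{lem:randomness} with error parameter $\eps/(2m)$ in place of $\eps$ (which only changes the constants $C,c_0,c_1$) yields
\[
\E\bigl|\cP_{U\setminus U_\ell}(\tilde x) - \cP_{S_\ell}(\tilde x)\bigr| \;\le\; \frac{\eps}{2m},
\]
as required. The main obstacle is the parameter-chasing: one must verify that the loss incurred in each step (bad density event, sample-size slack, singleton $\Psi$) is consistent with the polynomial choices of $\alpha$, $m$, and $\delta$ already fixed, so that the $\eps/(2m)$ error in Lemma~\ref{lem:errt} does indeed accumulate to at most $\eps$ when summed over $\ell\in[m]$ in combination with~\eqref{eq:total-errt}.
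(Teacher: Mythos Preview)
Your proposal follows essentially the same approach as the paper: condition so that $\tilde x$ is fixed and independent of $S_\ell$, verify approximate density of $\cP_{U\setminus U_\ell}$ via Lemma~\ref{lem:degree}, and apply the Concentration Lemma (Lemma~\ref{lem:randomness}) with the singleton $\Psi=\{\tilde x\}$. One minor slip: the effective density parameter of $\cP_{U\setminus U_\ell}$ (after renormalizing so that $|P|\le 1$) is $\Theta(\delta\Delta)$, not $\Theta(\Delta/\delta)$ --- for a fixing of $k-1$ variables the free variable lands in $U$ with probability $\delta$, leaving about $\delta\Delta$ surviving predicates --- but this does not affect the argument since $\delta\Delta=\poly(1/\eps)$ can be taken as large as needed.
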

\begin{proof}
As mentioned before we think of $\cP_{S_\ell}$ as a subsample of
$\cP_{U\backslash U_\ell}.$ We would like to apply Lemma~\ref{lem:randomness}
(Concentration) to conclude the claim. However $\cP_{U\backslash U_\ell}$ need
not satisfy the density condition. However, by Lemma~\ref{lem:degree},
$\cP_{U\backslash U_\ell}$ does satisfy, for every fixing $I$ of $k-1$
variables,
\begin{equation}\label{eq:approx-deg}
\E_U\left|
\sum_{P\in\cP_{U\backslash U_\ell},\var(P)=I}|P|
-\delta\Delta
\right|\le \eps'\delta\Delta\mper
\end{equation}
In other words, every fixing $I$ satisfies the density requirement in
expectation. We can therefore treat $\cP_{U\backslash U_\ell}$ as a
$\delta\Delta$-dense CSP and subsample $S_\ell\subseteq U\backslash U_\ell$
from it. Note that we can take $\delta\Delta=\poly(1/\eps)$ arbitrarily large
so that we may subsample an $\alpha$ fraction of the variables of
$U\backslash U_\ell$ and expect error $\eps/4m$ in the application of
Lemma~\ref{lem:randomness}. 
The fact that $\cP_{U\backslash U_\ell}$ satisfies
only~(\ref{eq:approx-deg}) leads to additional approximation errors in the
application of Lemma~\ref{lem:randomness}. By summing~(\ref{eq:approx-deg}) 
over all possible~$I,$ we can bound these errors by $\eps'.$ Again taking
$\delta\Delta$ large enough we can assure $\eps'\le\eps/4m.$
Hence, we get a total expected error of $\eps/2m$ which is what we wanted to
show.
\end{proof}
Combining Lemma~\ref{lem:errt-bound} with Lemma~\ref{lem:errt}, we conclude
that for every $\ell\in[m],$
\begin{equation}\label{eq:errt}
\E\errt(\ell)\le\frac\eps{2m}\mper
\end{equation}
Finally,
\begin{align*}
\E\sum_{\ell\in[m]}\err(\ell)
&\le \E\sum_{\ell\in[m]}\errt(\ell)+\frac\eps2
\tag{by~(\ref{eq:total-errt})}\\
& =\sum_{\ell\in[m]}\E\errt(\ell)+\frac\eps2\\
& \le m\cdot \frac\eps{2m} +\frac\eps2 \tag{using~(\ref{eq:errt})}\\
& =   \eps 
\end{align*}
%
We can now complete the proof of the Structure Lemma.
We would like to put 
$\Psi(S)=\{D(x)\mid x\in[q]^S\}.$ Then, by Lemma~\ref{lem:detalg},
\begin{equation}\label{eq:concl}
\E\max_{x\in\Psi(S)}\cP_U(x)\ge
\E\max_{x\in[q]^U}\cP_U(x)
-\eps\mper
\end{equation}
We are not quite done, since 
the Structure Lemma requires a single fixed set $\Psi(S).$ So far
we are choosing $S$ randomly as a subset of~$U.$ Hence, the set $\Psi(S)$ that
we constructed above depends on the choice of~$U.$ To finish the proof we need
a single set $\Psi\sse[q]^n$ that is independent of the choice of~$U.$ This is
easy to accomplish from what we have.  Simply pick $S$ and $U$ independently
and consider $U'=U\cup S.$ Since $|S|\le\poly(\eps)|U|,$ we can make the 
difference between $\cP_U(x)$ and $\cP_{U'}(x)$ negligible for any 
$x\in[q]^n.$ Therefore, we may exchange $U'$ for $U$ in the previous argument 
so that the choice of $S$ and $U$ is independent. 
Since~(\ref{eq:concl}) is then true in expectation taken over independent~$S$
and $U,$ there must also exist a fixed choice of~$S$ for
which~(\ref{eq:concl}) is true in
expectation taken over~$U.$ But now we may take $\Psi=\Psi(S)$ in order to
conclude the proof of the Structure Lemma.

%
%
%
%

\bibliographystyle{alphasy1}
\bibliography{approx}

\newcommand{\etalchar}[1]{$^{#1}$}
\makeatletter
\providecommand{\biblkeystyle}[1]{#1}
\providecommand{\biblbegin}{}
\providecommand{\biblend}{}
\@ifundefined{beginbibabs}{\def\beginbibabs{\begin{quotation}\noindent}
\def\endbibabs{\end{quotation}}}{}
\makeatother

\begin{thebibliography}{\biblkeystyle{AdlVKK03}}
\biblbegin

\bibitem[\biblkeystyle{ABLT06}]{AroraBoLoTo06}
S.~Arora, B.~Bollob{\'a}s, L.~Lov{\'a}sz, and I.~Tourlakis.
\newblock Proving integrality gaps without knowing the linear program.
\newblock {\em Theory of Computing}, 2(1):19--51, 2006.

\bibitem[\biblkeystyle{AdlVKK03}]{AlonVeKaKa03}
N.~Alon, W.~F. de~la Vega, R.~Kannan, and M.~Karpinski.
\newblock Random sampling and approximation of {MAX-CSPs}.
\newblock {\em J. Comput. Syst. Sci.}, 67(2):212--243, 2003.

\bibitem[\biblkeystyle{AHK06}]{AroraHaKa06}
S.~Arora, E.~Hazan, and S.~Kale.
\newblock A fast random sampling algorithm for sparsifying matrices.
\newblock In {\em Proc.\ $10$th APPROX}, pages 272--279. Springer, 2006.

\bibitem[\biblkeystyle{AKK{\etalchar{+}}08}]{AroraKhKoStTuVi08}
S.~Arora, S.~Khot, A.~Kolla, D.~Steurer, M.~Tulsiani, and N.~Vishnoi.
\newblock Unique games on expanding constraints graphs are easy.
\newblock In {\em Proc.\ $40$th STOC}. ACM, 2008.

\bibitem[\biblkeystyle{AM07}]{AchlioptasMc07}
D.~Achlioptas and F.~McSherry.
\newblock Fast computation of low-rank matrix approximations.
\newblock {\em J. ACM}, 54(2), 2007.

\bibitem[\biblkeystyle{BSS09}]{BatsonSpSr09}
J.~D. Batson, D.~A. Spielman, and N.~Srivastava.
\newblock Twice-{Ramanujan} sparsifiers.
\newblock In {\em Proc.\ $41$st STOC}, pages 255--262. ACM, 2009.

\bibitem[\biblkeystyle{CMM09}]{CharikarMaMa09}
M.~Charikar, K.~Makarychev, and Y.~Makarychev.
\newblock Integrality gaps for sherali-adams relaxations.
\newblock In {\em Proc.\ $41$st STOC}. ACM, 2009.

\bibitem[\biblkeystyle{Din07}]{Dinur07}
I.~Dinur.
\newblock The {PCP} theorem by gap amplification.
\newblock {\em J. ACM}, 54(3):12, 2007.

\bibitem[\biblkeystyle{dlVKM07}]{VegaKe07}
W.~F. de~la Vega and C.~Kenyon-Mathieu.
\newblock Linear programming relaxations of maxcut.
\newblock In N.~Bansal, K.~Pruhs, and C.~Stein, editors, {\em SODA}, pages
  53--61. SIAM, 2007.

\bibitem[\biblkeystyle{Fei02}]{Feige02}
U.~Feige.
\newblock Relations between average case complexity and approximation
  complexity.
\newblock In {\em STOC}, pages 534--543, 2002.

\bibitem[\biblkeystyle{FS02}]{FeigeSc02}
U.~Feige and G.~Schechtman.
\newblock On the optimality of the random hyperplane rounding technique for max
  cut.
\newblock {\em Random Struct. Algorithms}, 20(3):403--440, 2002.

\bibitem[\biblkeystyle{GGR98}]{GoldreichGoRo98}
O.~Goldreich, S.~Goldwasser, and D.~Ron.
\newblock Property testing and its connection to learning and approximation.
\newblock {\em J. ACM}, 45(4):653--750, 1998.

\bibitem[\biblkeystyle{GW94}]{GoemansWi95}
M.~X. Goemans and D.~P. Williamson.
\newblock Improved approximation algorithms for maximum cut and satisfiability
  problems using semidefinite programming.
\newblock {\em J. ACM}, 42(6):1115--1145, 1995.
\newblock Preliminary version in STOC'94.

\bibitem[\biblkeystyle{KKMO04}]{KhotKiMoOd07}
S.~Khot, G.~Kindler, E.~Mossel, and R.~O'Donnell.
\newblock Optimal inapproximability results for {MAX}-{CUT} and other
  2-variable {CSP}s?
\newblock {\em SIAM J. Comput}, 37(1):319--357, 2007.
\newblock Preliminary version in FOCS' 04.

\bibitem[\biblkeystyle{MOO05}]{MosselOdOl05}
E.~Mossel, R.~O'Donnell, and K.~Oleszkiewicz.
\newblock Noise stability of functions with low in.uences invariance and
  optimality.
\newblock In {\em Proc.\ $46$th FOCS}, pages 21--30. IEEE, 2005.

\bibitem[\biblkeystyle{Rag08}]{Raghavendra08}
P.~Raghavendra.
\newblock Optimal algorithms and inapproximability results for every csp?
\newblock In {\em Proc.\ $40$th STOC}, pages 245--254. ACM, 2008.

\bibitem[\biblkeystyle{Ron00}]{Ron00}
D.~Ron.
\newblock Property testing.
\newblock In {\em Handbook of Randomized Computing, Vol. II}, pages 597--649.
  Kluwer Academic Publishers, 2000.

\bibitem[\biblkeystyle{RS09}]{RaghavendraSt09a}
P.~Raghavendra and D.~Steurer.
\newblock How to round any csp.
\newblock In {\em Proc.\ $50$th FOCS}, pages 586--594. IEEE, 2009.

\bibitem[\biblkeystyle{Rub06}]{Rubinfeld06}
R.~Rubinfeld.
\newblock Sublinear time algorithms.
\newblock In {\em Proc. International Congress of Mathematicians}, volume~3,
  pages 1095--1110. European Math. Soc., 2006.

\bibitem[\biblkeystyle{RV07}]{RudelsonVe07}
M.~Rudelson and R.~Vershynin.
\newblock Sampling from large matrices: An approach through geometric
  functional analysis.
\newblock {\em J. ACM}, 54(4), 2007.

\bibitem[\biblkeystyle{SA90}]{SheraliAd90}
H.~D. Sherali and W.~P. Adams.
\newblock A hierarchy of relaxations between the continuous and convex hull
  representations for zero-one programming problems.
\newblock {\em SIAM J. Discrete Math.}, 3(3):411--430, 1990.

\bibitem[\biblkeystyle{Sch08}]{Schoenebeck08}
G.~Schoenebeck.
\newblock Linear level {Lasserre} lower bounds for certain k-{CSP}s.
\newblock In {\em FOCS}, pages 593--602. IEEE Computer Society, 2008.

\bibitem[\biblkeystyle{SS08}]{SpielmanSr08}
D.~A. Spielman and N.~Srivastava.
\newblock Graph sparsification by effective resistances.
\newblock In {\em Proc.\ $40$th STOC}. ACM, 2008.

\bibitem[\biblkeystyle{ST04}]{SpielmanTe04}
D.~A. Spielman and S.-H. Teng.
\newblock Nearly-linear time algorithms for graph partitioning, graph
  sparsification, and solving linear systems.
\newblock In {\em Proc.\ $36$th STOC}, pages 81--90. ACM, 2004.

\bibitem[\biblkeystyle{Ste10}]{Steurer10}
D.~Steurer.
\newblock Fast sdp algorithms for constraint satisfaction problems.
\newblock In {\em Proc. $21$st ACM-SIAM SODA}, 2010.

\bibitem[\biblkeystyle{Tul09}]{Tulsiani09}
M.~Tulsiani.
\newblock {CSP} gaps and reductions in the {Lasserre} hierarchy.
\newblock In {\em Proc.\ $41$st STOC 2009}, pages 303--312. ACM, 2009.

\biblend
\end{thebibliography}

\appendix

\section{Edge distribution of subsample of the third power}
\label{sec:proxy}
In this section we compare the edge distribution of the subsample of~$G^3$ to
a somewhat nicer distribution. This step was needed in
Lemma~\ref{lem:second-third-step-2}.
In the following let $G=(V,E)$ be a $\Delta$-regular graph and
$\delta\ge\poly(\eps^{-1})\Delta^{-1}.$ Further denote $W=V_\delta.$
\begin{lemma}
Let $D_1$ denote the uniform distribution over edges in $G^3[W].$
Let $D_2$ denote the distribution obtained as follows:
\begin{enumerate}
\item Pick a random edge $(v,v')\in E.$
\item Choose uniformly at random $w\in N_W(v)$ and $w'\in N_W(v').$
\item Output $(w,w').$
\end{enumerate}
Then,
\[
\E_W\left[{\rm TV}(D_1,D_2)\right]\le\eps\mper
\]
Here and in the following
${\rm TV}(D_1,D_2)$ denote the total variation distance between the
two distributions~$D_1$ and~$D_2.$
\end{lemma}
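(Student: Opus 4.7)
The plan is to reduce the total variation comparison to a statement about the induced distribution on the middle edge $(v,v')$. Each of $D_1$ and $D_2$ is a mixture over length-three paths $(w,v,v',w')$ in $G$ with $w,w'\in W$: under $D_1$ each such path has probability $1/Z$, where $Z=\sum_{(v,v')\in E} d_W(v)\,d_W(v')$ counts them and $d_W(v)\defeq|N_W(v)|$; under $D_2$ each such path has probability $\tfrac{1}{|E|\cdot d_W(v)\cdot d_W(v')}$. Summing $|D_1(w,w')-D_2(w,w')|$ over endpoint pairs and using that the number of paths with fixed middle edge $(v,v')$ equals $d_W(v)\,d_W(v')$, one obtains (up to unimportant factors of $2$ coming from orientation conventions)
\[
  \normtv{D_1-D_2}
  \;=\;
  \tfrac12\sum_{(v,v')\in E}\Bigl|\tfrac{d_W(v)\,d_W(v')}{Z}-\tfrac{1}{|E|}\Bigr|,
\]
so it suffices to show that, in expectation over $W$, the degree-product measure on edges of $G$ is close to uniform.

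Next I would establish degree concentration. For each $v$, $d_W(v)$ is a sum of $\Delta$ nearly-independent Bernoullis with mean $\delta$, so by Chernoff--Hoeffding, $\Pr[|d_W(v)-\delta\Delta|>\eps\,\delta\Delta]\le\exp(-\Omega(\eps^2\,\delta\Delta))\le\eps^2$ once $\delta\Delta\ge\poly(1/\eps)$. Writing $d_W(v)=\delta\Delta(1+\rho_v)$, I would also record the second-moment bound $\Ex{\rho_v^2}\le 1/(\delta\Delta)$ and the identity $\sum_v\rho_v=0$, which follows from $\sum_v d_W(v)=|W|\Delta$ by double counting.

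Using these, expand
\[
  \tfrac{d_W(v)\,d_W(v')}{Z}-\tfrac{1}{|E|}
  \;=\;\tfrac{1}{|E|(1+\eta)}\bigl(\rho_v+\rho_{v'}+\rho_v\rho_{v'}-\eta\bigr),
\]
where by $\Delta$-regularity $Z=|E|(\delta\Delta)^2(1+\eta)$ with $\eta=\tfrac{1}{|E|}\sum_{(v,v')\in E}\rho_v\rho_{v'}$ (the linear-in-$\rho$ term in $Z$ vanishes because $\sum_v\rho_v=0$). The TV sum then splits via the triangle inequality into a linear piece $\sum_{(v,v')}|\rho_v+\rho_{v'}|\le\Delta\sum_v|\rho_v|$, a quadratic piece $\sum_{(v,v')}|\rho_v\rho_{v'}|\le\tfrac{\Delta}{2}\sum_v\rho_v^2$, and a normalization piece $|E||\eta|$ bounded by the quadratic piece. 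Using $\Ex{|\rho_v|}\le\sqrt{\Ex{\rho_v^2}}\le 1/\sqrt{\delta\Delta}$ and $\Ex{\rho_v^2}\le 1/(\delta\Delta)$, each piece is $O(\eps|E|)$ in expectation when $\delta\ge\poly(1/\eps)/\Delta$. Dividing by $|E|$ and handling the rare event $|\eta|>\tfrac12$ by the trivial bound $\normtv{\cdot}\le 1$ (which costs at most $O(\eps)$ by Markov applied to $\eta$) gives $\Ex{\normtv{D_1-D_2}}\le O(\eps)$; rescaling $\eps$ yields the claim.

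The main obstacle is that the crude pointwise bound $d_W(v)\le\Delta$ for outlier vertices is far too weak---it would lose a factor $1/\delta$ in the contribution of bad edges. The second-moment bound $\Ex{\rho_v^2}\le 1/(\delta\Delta)$ is exactly strong enough to control both the cross-term sum $\sum|\rho_v\rho_{v'}|$ and the multiplicative normalization error $\eta$ without paying for such outliers, and this is why the hypothesis $\delta\Delta\ge\poly(1/\eps)$ (rather than the much larger $\delta\Delta\ge\Omega(\log n)$ needed for a uniform deterministic bound on all $d_W(v)$) is sufficient.
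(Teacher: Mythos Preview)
Your proof is correct and takes a genuinely different route from the paper's. Both arguments pass through the path distributions $P_1,P_2$ (of which $D_1,D_2$ are marginals; note that your displayed formula is really $\normtv{P_1-P_2}$, which upper-bounds $\normtv{D_1-D_2}$, so the reduction is fine). From there, however, the paper proceeds by a \emph{good/bad} decomposition: using a fourth-moment bound on $|N_W(v)|$ it shows that with probability $1-\eps'$ both $N=Z$ is within $1\pm\eps'$ of its mean and only an $O\bigl((\eps')^{-5}(\delta\Delta)^{-3}\bigr)$ fraction of paths are ``bad'' (meaning $|N_W(v)|\,|N_W(v')|$ deviates multiplicatively); good paths contribute $O(\eps')$ to the TV sum while bad paths are controlled via the crude pointwise bound $P_2(p)\le 1/(\Delta n)$. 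Your argument instead expands $d_W(v)=\delta\Delta(1+\rho_v)$ algebraically, exploits the deterministic identity $\sum_v\rho_v=0$ (which kills the linear term in $Z$ and is not used in the paper), and controls all three pieces---linear, quadratic, and normalization---directly with the second-moment bound $\Ex{\rho_v^2}\le 1/(\delta\Delta)$. This avoids both the fourth-moment computation and the good/bad split, and yields a slightly cleaner dependence on $\delta\Delta$ (you need $\delta\Delta\gtrsim\eps^{-2}$, whereas the paper's bad-path accounting needs a higher power). The trade-off is that the paper's good/bad argument is more robust to irregularity in $G$, while your cancellation $\sum_v\rho_v=0$ leans on exact $\Delta$-regularity.
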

\begin{proof}
Let us compare the following two distributions:
\begin{description}
\item[$P_1:$] Pick a uniformly random path $p=(w,v,v',w')$ from the set of
all paths of length~$3$ in $G$ which have $w,w'\in W.$
\item[$P_2:$] Pick a random edge $v,v'\in E$ and random neighbors $w\in
N_W(v),w'\in N_W(v')$ and consider the path $(w,v,v',w').$
\end{description}
Notice that it suffices to bound the statistical distance between $P_1$ and
$P_2.$ This is because $D_1$ is just the marginal distribution of $P_1$
on the endpoints of the path~$(w,w').$ Likewise $D_2$ is the marginal
distribution of $P_2$ on~$(w,w').$

Now, let $p=(w,v,v',w')$ denote any path of length~$3$ in $G$ so that
$w,w'\in W.$ Let $N$ denote the number of such paths. Note that $\E N = \delta^2\Delta^3 n.$
Let us now
compare the probability of this path under the two distributions.
For $P_1$ we get
\[
P_1(p)
= \frac 1N\mper
\]
On the other hand, under $P_2,$
\[
P_2(p) = \frac1{|N_W(v)|}\cdot \frac1{\Delta n} \cdot \frac1{|N_W(v')|}.
\]
Note that for every $v\in V,$ we have $\E|N_W(v)|=\delta\Delta.$
It now suffices to argue the bound
\begin{equation}
\Ex{{\rm TV}(P_1,P_2)}
=
\E\frac12\sum_p \left|
\frac1{N}-
\frac1{|N_W(v)||N_W(v')|\Delta n}\right|\le \eps.
\end{equation}
Let us call a path~$p=(w,v,v',w')$ \emph{good} if
\[
\frac1{|N_W(v)|\cdot |N_W(v')|}
=\frac{1\pm\eps'}{\delta^2\Delta^2}.
\]
Later we will choose $\eps'=\Omega(\eps)$ to be sufficiently small, say,
$\eps'=\eps/100$. We need
the following simple concentration bounds.
\begin{claim}
With probability $1-\eps'$ over the choice of $W,$ we have
\begin{enumerate}
\item
$N^{-1} = (1\pm\eps')/(\delta^2\Delta^3 n)\mper$
\item
The fraction of bad paths is less than
$1/O(\eps'^5(\delta\Delta)^3)\mper$
\end{enumerate}
\end{claim}
\begin{proof}
The first claim follows from Lemma~\ref{lem:edgeweight}. Regarding the second
claim, it is not hard to show for every $v,v'$ that
\[
\Pr\left\{\frac1{|N_W(v)||N_W(v')|}\not\in\frac{1\pm\eps'}{\delta^2\Delta^2}\right\}
\le \frac1{O(\eps'^4(\delta\Delta)^3)}\mper
\]
This can be shown by computing the fourth moment $\E(|N_W(v)|-\delta\Delta)^4$
and bounding the probability of a factor $1+\alpha$ deviation of $|N_W(v)|$
from its mean for small enough~$\alpha=\Omega(\eps').$
This argument shows that the expected number of bad paths is at most
$1/O(\eps'^4(\delta\Delta)^3)$ and the claim is completed by applying
Markov's inequality.
\end{proof}
Given this claim, we can finish the proof of the lemma. Indeed letting $Q$
denote the set of good paths, we have with
probability $1-\eps',$
\begin{align*}
\sum_p \left|
\frac1{N}-
\frac1{|N_W(v)||N_W(v')|\Delta n}\right|
& \le \sum_{p\in Q} \frac{2\eps'}{\delta^2\Delta^3n}
+ \sum_{p\not\in Q} \frac{1}{\Delta n}\\
& \le 2\eps' + \frac N{O(\eps'^5(\delta\Delta)^3)}\cdot\frac1{\Delta n}\\
& = 2\eps' + \frac1{O(\eps'^5\delta\Delta)}\cdot\frac N{\delta^2\Delta^3 n}\\
& \le O(\eps')\mper
\end{align*}
In the first inequality we used the fact that $|N_W(v)|\ge1$ for any existing path
and hence the term $1/|N_W(v)||N_W(v')|\Delta n$ is never larger than~$1/\Delta
n.$
In the last step we used that we may choose $\delta\Delta\ge C\eps'^{-5}$ for
sufficiently large constant $C>0,$ and that $N\le(1+\eps')\delta^2\Delta^3n.$
Hence,
\[
\E{\rm TV}(P_1,P_2)\le (1-\eps')O(\eps') + \eps'\le\eps\mper
\qedhere
\]
\end{proof}

\section{Details on random geometric graphs~\ref{sec:maxcut}}
\label{sec:maxcut-details}

In this section we will in the details that were left out in
Section~\ref{sec:maxcut}. We start with the proof of Lemma~\ref{lem:sdp3}.

\restate{lem:sdp3}

The proof works as follows. First, triangle inequalities are
known to imply the odd cycle constraints which means that an SDP with triangle
inequalities on an odd cycle of length $k$ has value at most (and, in fact,
equal to)~$1-1/k.$

\begin{lemma}\label{lem:odd}
Let $C$ be an odd cycle of length~$k.$ Then, $\sdpGWT(C)\le 1-1/k.$
\end{lemma}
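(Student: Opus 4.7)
The plan is to establish the bound via an iterated application of the triangle inequality that exploits the odd parity of the cycle, ultimately producing a chain inequality of the form $\|v_1 - v_k\|^2 \le \sum_{t=1}^{k-1}\|v_t+v_{t+1}\|^2$, and then translate this back to a bound on $\sdpGWT(C)$.

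First I would rewrite the two families of triangle inequalities geometrically. Using $\|v_i-v_j\|^2 = 2-2X_{ij}$ and $\|v_i+v_j\|^2 = 2+2X_{ij}$ for unit vectors, the constraint $X_{ij}+X_{jk}-X_{ik}\le 1$ becomes the $\ell_2^2$ triangle inequality
\[
\|v_i - v_k\|^2 \ \le\ \|v_i - v_j\|^2 + \|v_j - v_k\|^2,
\]
while $X_{ij}+X_{jk}+X_{ik}\ge -1$ becomes the ``signed'' triangle inequality
\[
\|v_i - v_k\|^2 \ \le\ \|v_i + v_j\|^2 + \|v_j + v_k\|^2.
\]
Both families hold for every triple of vertices on the cycle.

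Next I would prove by induction on odd $k\ge 3$ the key chain inequality
\[
\|v_1 - v_k\|^2 \ \le\ \sum_{t=1}^{k-1} \|v_t + v_{t+1}\|^2.
\]
The base case $k=3$ is exactly the signed triangle inequality applied to $(v_1,v_2,v_3)$. For the induction step from $k$ to $k+2$, I would apply the $\ell_2^2$ triangle inequality to $(v_1,v_k,v_{k+2})$ to split
$\|v_1 - v_{k+2}\|^2 \le \|v_1 - v_k\|^2 + \|v_k - v_{k+2}\|^2$,
bound the first summand by the inductive hypothesis, and bound $\|v_k-v_{k+2}\|^2$ using the signed triangle inequality on $(v_k,v_{k+1},v_{k+2})$.

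Finally I would translate back to inner products. Setting $S = \sum_{t=1}^k X_{v_t v_{t+1}}$ and expanding the two sides of the chain inequality, one gets
\[
2 - 2X_{v_1 v_k} \ \le\ 2(k-1) + 2\bigl(S - X_{v_1 v_k}\bigr),
\]
which simplifies to $S \ge 2-k$. Since $\sdpGWT(C) = \tfrac{1}{k}\sum_t \tfrac{1-X_{v_t v_{t+1}}}{2} = (k-S)/(2k)$, this yields $\sdpGWT(C) \le (k-1)/k = 1 - 1/k$.

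The crux of the argument — and the place where odd parity enters essentially — is the induction step combined with the alternation between ``$-$'' on the outer distance and ``$+$'' on the short edges in the signed triangle inequality. The odd length of the cycle is precisely what makes the LHS $\|v_1 - v_k\|^2$ coincide with an actual cycle-edge distance, rather than with $\|v_1 + v_k\|^2$, which would occur for even $k$. I do not expect any serious technical obstacle beyond correctly bookkeeping the inductive step.
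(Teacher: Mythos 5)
Your proof is correct. The paper states this lemma without proof, appealing only to the known fact that the triangle inequalities imply the odd-cycle inequalities; your argument --- rewriting the two constraint families as the $\ell_2^2$ and ``signed'' triangle inequalities, chaining them around the odd cycle by induction, and translating $S\ge 2-k$ into $\sdpGWT(C)\le 1-1/k$ --- is a correct, self-contained derivation of exactly that fact.
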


Second, it follows that if a graph $G$ can be covered uniformly by odd cycles
of length $k,$ then its $\sdp_3$-value can be at most $1-1/k.$

\begin{lemma}
Let $G=(V,E)$ be a (possibly infinite) graph.
Suppose there exists a distribution~${\cal C}$ over odd
cycles of length~$k$ for some fixed number $k$ such that the marginal
distribution on each edge of a random cycle from ${\cal C}$ has statistical
distance $\eps$ to the uniform distribution over edges in $G.$  Then,
 $\sdpGWT(G)\le 1-1/k+\eps.$
\end{lemma}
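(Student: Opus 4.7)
The plan is to lift the odd-cycle bound from Lemma~\ref{lem:odd} to $G$ by averaging over the cycle distribution $\cC$ and then pay only the statistical-distance error $\eps$ when passing from the cycle-induced edge marginal to the true edge distribution on $G$.

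First, I would fix an arbitrary feasible SDP solution $X = \{x_v\}_{v\in V}$ for $\sdpGWT(G)$, i.e.\ unit vectors satisfying all $\ell_2^2$-triangle inequalities $\|x_u - x_v\|^2 + \|x_v - x_w\|^2 \ge \|x_u - x_w\|^2$ (and the reverse triple-sum constraint). The key observation is that the triangle inequalities are \emph{local}: they involve only triples of vertices. Hence, for every odd cycle $C$ in the support of $\cC$, the restriction $X|_C = \{x_v\}_{v\in C}$ is itself a feasible solution to $\sdpGWT(C)$. Applying Lemma~\ref{lem:odd} to each such $C$ gives
\[
\E_{(u,v) \in C} \tfrac14 \|x_u - x_v\|^2 \;\le\; \sdpGWT(C) \;\le\; 1 - \tfrac{1}{k}\mper
\]

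Next, I would take expectation over $C \sim \cC$. Writing $D_\cC$ for the marginal distribution on edges induced by first sampling $C \sim \cC$ and then sampling a uniformly random edge of $C$, this yields
\[
\E_{(u,v) \sim D_\cC} \tfrac14 \|x_u - x_v\|^2 \;=\; \E_{C\sim\cC} \E_{(u,v)\in C} \tfrac14 \|x_u - x_v\|^2 \;\le\; 1 - \tfrac{1}{k}\mper
\]
Finally, since each term $\tfrac14 \|x_u - x_v\|^2 \in [0,1]$ (as $\|x_v\|\le 1$) and $D_\cC$ has statistical distance at most $\eps$ to the uniform edge distribution $U_E$ on $G$, we obtain
\[
\sdpGWT(G)[X] \;=\; \E_{(u,v) \sim U_E} \tfrac14 \|x_u - x_v\|^2 \;\le\; \E_{(u,v) \sim D_\cC} \tfrac14 \|x_u - x_v\|^2 + \eps \;\le\; 1 - \tfrac{1}{k} + \eps\mper
\]
Since $X$ was arbitrary, the same bound holds for $\sdpGWT(G)$.

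The only subtlety worth flagging is the normalization/measure-theoretic issue when $G$ is infinite (such as $G_\gamma$ on $\mathbb{S}^{d-1}$): the ``uniform distribution over edges'' must be interpreted as the edge measure induced by the surface measure, and $\cC$ must be a probability measure over odd cycles whose edge marginal is well-defined. Once these are set up so that Fubini applies to the nonnegative bounded integrand $\tfrac14\|x_u - x_v\|^2$, the argument above goes through verbatim. No obstacle beyond this bookkeeping is expected, since the proof is essentially an averaging argument.
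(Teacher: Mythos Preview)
Your proposal is correct and follows essentially the same approach as the paper: restrict a feasible triangle-inequality solution to each cycle, apply Lemma~\ref{lem:odd} to bound the per-cycle average by $1-1/k$, average over $C\sim\cC$, and then pay the statistical-distance error $\eps$ to pass to the uniform edge distribution on $G$. The paper's proof is just a terser version of exactly this argument.
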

\begin{proof}
By our assumption we have that for every embedding $f\colon V\to B,$
\[
\E_{(u,v)\sim E}\tfrac14\|f(u)-f(v)\|^2
\le
\E_{C\sim{\cal C}}
\E_{(u,v)\sim C}\tfrac14\|f(u)-f(v)\|^2
+\eps\mper
\]
But we know, by Lemma~\ref{lem:odd}, that for every $f\colon V\to B,$
satisfying the triangle inequalities,
\[
\E_{(u,v)\sim C}\frac14\|f(u)-f(v)\|^2\le 1-\tfrac1k\mper
\]
Hence,
\[
\E_{(u,v)\sim E}\frac14\|f(u)-f(v)\|^2\le 1-\tfrac1k+\eps\mper
\]
\end{proof}
We will next see that the sphere graph can by uniformly covered by odd cycles
of length $O(1/\sqrt{\gamma}).$ We begin with the following simple observation.
\begin{lemma}\label{lem:cycle}
For every $l\in[1-\gamma,1-\gamma/2],$ there exists an odd cycle,
denoted $C_l=(v_1,\dots,v_k)$, in $G_\gamma$ of length~$k=O(\sqrt{\gamma})$
such that $\frac14\|v_i-v_{i+1}\|^2=l$ for all $i\in\{1,\dots,k-1\}.$
\end{lemma}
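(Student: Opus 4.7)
The plan is to construct the cycle as the orbit of a single unit vector under a discrete rotation in $\R^3 \subseteq \R^d$ (we may assume $d\geq 3$; in $d=2$ the required odd cycles do not even exist for small $\gamma$, since three points of $S^1$ cannot have pairwise angular distance exceeding $2\pi/3$). Working in $\R^3$, I would fix the rotation axis to be the third coordinate axis and take $v_1 = (\sin\beta, 0, \cos\beta)$ for a tilt angle $\beta\in[0,\pi/2]$ to be chosen. Let $R$ be the rotation about this axis by angle $\phi = \pi(k-1)/k$ for an odd integer $k$ to be chosen, and set $v_j = R^{j-1}v_1$ for $j=1,\dots,k$. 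A direct computation gives
\[
\langle v_j,v_{j+1}\rangle \;=\; \cos^2\beta + \sin^2\beta\cos\phi \;=\; 1-\sin^2\beta\,(1-\cos\phi),
\]
and setting this equal to $1-2l$ determines $\sin^2\beta = 2l/(1-\cos\phi)$.

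Next I would choose $k$. To ensure $\sin^2\beta \leq 1$ uniformly over $l\in[1-\gamma,1-\gamma/2]$, we need $1-\cos\phi \geq 2-\gamma$, i.e., $\cos(\pi/k)\geq 1-\gamma$; by $\cos x \geq 1-x^2/2$ this is implied by $k \geq \pi/\sqrt{2\gamma}$. Taking $k$ to be the smallest odd integer at least $\pi/\sqrt{2\gamma}$ yields $k=O(1/\sqrt{\gamma})$, and for every $l$ in the prescribed range the equation for $\beta$ is solvable in $[0,\pi/2]$.

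The verification is routine. Distinctness of the $v_j$ follows because $\gcd((k-1)/2,k)=1$ for odd $k$ (from $k-2\cdot(k-1)/2=1$), so the angles $(j-1)\phi \bmod 2\pi$ are pairwise distinct. For each $i\in\{1,\dots,k-1\}$, by construction $\tfrac14\|v_i-v_{i+1}\|^2 = \tfrac12(1-\langle v_i,v_{i+1}\rangle)=l$, as required. Finally, the wraparound edge satisfies $\langle v_k,v_1\rangle = \langle R^{-1}v_1,v_1\rangle = \langle v_1,Rv_1\rangle = 1-2l$, so $(v_k,v_1)$ is also an edge of $G_\gamma$ and the cycle has odd length $k$.

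The main (small) obstacle is calibrating $k$ and $\beta$ so as to hit the exact target $l$ while preserving $k=O(1/\sqrt\gamma)$. A purely planar regular $k$-gon construction ($\beta=\pi/2$) would require $\cos\phi=1-2l$ for $\phi=2\pi p/k$ with integer $p$, which is impossible for generic $l$; introducing the tilt $\beta$ as a continuous parameter and fixing $\phi=\pi-\pi/k$ (the value of $2\pi p/k$ closest to $\pi$ available with $k$ odd) supplies exactly the missing degree of freedom and makes everything fit within the advertised cycle length.
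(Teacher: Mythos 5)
Your construction is correct and is essentially the paper's: both take the $k$ equally spaced points of a near-antipodal regular $k$-gon (step angle $\pi(k-1)/k$, $k$ odd of order $1/\sqrt{\gamma}$, which is the intended reading of the lemma's typo ``$k=O(\sqrt{\gamma})$'') and use a displacement orthogonal to the circle's plane to tune every edge to length exactly $l$. Your closed-form version --- placing the orbit on a circle of latitude and solving $\sin^2\beta=2l/(1-\cos\phi)$ --- just makes explicit the paper's informal ``walk along the cycle and move vertices orthogonally until all edges have length $l$'' step, and correctly verifies distinctness and the wraparound edge.
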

\begin{proof}[Proof sketch.]
Pick an arbitrary great circle around the sphere and place the
vertices $v_1,\dots,v_k$ equally spaced along this circle.
For $k=O(\sqrt{\gamma})$ vertices, we can accomplish the Euclidean distance
between two consecutive vertices is less than, say, $\sqrt{\gamma}/10.$ Now
connect each vertex $v$ on the circle to the unique vertex $w$ which maximized
$\|v-w\|^2.$ This creates an odd cycle and, by our previous observation,
it follows that $\frac14\|v-w\|^2\ge1-\gamma.$ Now we can make
$\frac14\|v-w\|^2=l$ be walking along the cycle and moving vertices in a
direction orthogonal to the plane defined by the circle until all edges have
length~$l.$
\end{proof}
\begin{lemma}\label{lem:continuousDual}
  Let $\gamma>0$ and let~$S^{d-1}$ be the sphere.  There exists a
  distribution~${\cal C}$ over odd cycles~$C=(v_1,\ldots,v_k)$ for some
  $k \le \frac{10\pi}{\sqrt{\gamma}}$ such that for all $i$,
  the marginal distribution of $(v_i, v_{i+1})$ has statistical
  distance $o(1)$ to the uniform distribution over edges in $G_\gamma$
(as~$d\to\infty$).
\end{lemma}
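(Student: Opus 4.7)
The plan is to define $\cC$ as a uniformly random rotation of an odd cycle from Lemma~\ref{lem:cycle}, where the cycle's edge length $\ell$ is itself randomized so as to absorb the distribution of edge lengths in $G_\gamma$. Concretely, sample $\ell\in[1-\gamma,1-\gamma/2]$ from the probability density proportional to $(1-(1-2\ell)^2)^{(d-3)/2}$, invoke Lemma~\ref{lem:cycle} to obtain an odd cycle $C_\ell=(v_1,\ldots,v_k)$ with $k=O(1/\sqrt{\gamma})$ and $\langle v_i,v_{i+1}\rangle = 1-2\ell$ for every cyclic edge (the standard great-circle construction, with each vertex connected to the one at shift $(k-1)/2$, produces all $k$ edges of equal length by symmetry), draw a rotation $R$ from the Haar measure on $O(d)$, and output $R\cdot C_\ell \defeq (Rv_1,\ldots,Rv_k)$.

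By left-invariance of the Haar measure and the rotational symmetry of $S^{d-1}$, the marginal distribution of each edge $(Rv_i,Rv_{i+1})$ is the same for every $i$ and can be described as: pick $u\in S^{d-1}$ uniformly, pick $t=1-2\ell$ from the induced density $\propto (1-t^2)^{(d-3)/2}$ supported on $[\gamma-1,2\gamma-1]$, then pick $v$ uniformly on the ring $\{v\in S^{d-1}:\langle u,v\rangle = t\}$. On the other hand, decomposing $v = tu + \sqrt{1-t^2}\,w$ with $w$ uniform on $S^{d-2}\cap u^\perp$, the uniform edge distribution of $G_\gamma$ has exactly the same form but with $t$ drawn from density $\propto (1-t^2)^{(d-3)/2}$ on $[-1,2\gamma-1]$. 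Since the two measures share the same conditional structure given $t$, their total variation distance is exactly the mass that the uniform edge distribution assigns to $t\in[-1,\gamma-1)$, namely
\[
\int_{-1}^{\gamma-1}(1-t^2)^{(d-3)/2}\,dt \;\bigg/\int_{-1}^{2\gamma-1}(1-t^2)^{(d-3)/2}\,dt.
\]

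The main obstacle is bounding this tail. For $\gamma\in(0,1/2)$ one has $(\gamma-1)^2 > (2\gamma-1)^2$, hence $1-(\gamma-1)^2 < 1-(2\gamma-1)^2$, so the ratio of the peak values of the integrand on the two intervals is a constant strictly less than $1$ raised to the power $(d-3)/2$. A standard Laplace-type estimate near the right endpoints then shows that the tail mass decays exponentially in $d$ (at rate depending on $\gamma$), which is $o(1)$ as $d\to\infty$. The cycle-length bound $k\le 10\pi/\sqrt{\gamma}$ follows from the calculation $\arccos(2\gamma-1)\approx \pi - 2\sqrt{\gamma}$, which forces $k\approx \pi/(2\sqrt{\gamma})$ in the great-circle construction. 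The case $\gamma\ge 1/2$ is not needed for our \maxcut application, but could be handled similarly by enlarging the range of $\ell$ allowed in Lemma~\ref{lem:cycle}.
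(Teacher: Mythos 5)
Your proposal is correct and follows essentially the same route as the paper: a Haar-random rotation of the odd cycle from Lemma~\ref{lem:cycle} with edge length matched to the law of $\tfrac14\|u-v\|^2$ for a uniform edge, plus the measure-concentration fact that a uniform edge has length outside $[1-\gamma,1-\gamma/2]$ with probability $\exp(-\Omega(d))$. The only cosmetic difference is that you sample the length $\ell$ directly from its conditional density while the paper samples a random edge and declares failure when $\ell>1-\gamma/2$; your explicit computation of the total variation as the tail mass of the $(1-t^2)^{(d-3)/2}$ density is a correct, slightly more detailed version of the paper's appeal to measure concentration.
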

\begin{proof}
We will describe the distribution ${\cal C}$ as follows:
\begin{enumerate}
\item
Pick a random edge $e=(u,v)\in E$ from~$G_\gamma.$
\item\label{step2}
Let $l=\frac14\|u-v\|^2.$ If $l\le1-\gamma/2,$ let $C_l=(v_1,v_2,\dots,v_k)$
denote the odd cycle given by Lemma~\ref{lem:cycle}.
If $l\ge 1-\gamma/2,$ declare ``failure''.
\item
If the previous step succeeded,
pick a random rotation~$R$ and output $RC=(Rv_1,Rv_2,\dots,Rv_k).$
\end{enumerate}

We claim that if the second step succeeds, then indeed every marginal
$(Rv_i,Rv_{i+1})$ is distributed like a uniformly random edge. This is (1)
because $(u,v)$ was chosen to be a uniformly random edge and (2)
$(Rv_i,Rv_{i+1})$ is a random rotation of~$(u,v)$ and hence, by spherical
symmetry, is equally likely to
be any edge in $E$ that has the same length as $(u,v).$

On the other hand, by measure concentration, with probability
$1-\exp(-\Omega(d)),$ we have that $\frac14\|u-v\|^2\in[1-\gamma,1-\gamma/2].$
This completes the claim since the probability of failure only introduces
$o(1)$ statistical distance.
\end{proof}

In this section we give some details on how to obtain a dense discretization
of the Feige-Schechtman graph.

\restate{lem:dense}

\begin{proof}[Proof sketch.]
The first claim is shown in \cite{FeigeSc02}. For the second claim,
let us decompose $\mathbb{S}^{d-1}$ into equal volume cells of diameter at
most~$\epsilon.$ Here, $\epsilon$ is a parameter that we will later take to be
very small, say, $\epsilon\le d/100.$
Now pick enough vectors $V\subseteq\mathbb{S}^{d-1}$ uniformly at
random such that with probability at least $1-\epsilon$ every two cells have
the same number of vectors up to a factor of $1\pm\epsilon$ in it.

We need to show that $\sdp_3(G_\gamma[V])\le1-\Omega(\sqrt{\gamma}).$
To this end we first consider a related graph $G'$, which has the same vertex
set as $G$ but different edges. A random edge in $G'$ is defined by the
following process: Pick first a random edge on the continuous sphere, then for
each endpoint pick a random vertex in the equal volume cell containing the
endpoint. Finally, normalize the edges such that the total edge weight is the
same as in $G$.

We can use the distribution over odd cycles given by
Lemma~\ref{lem:continuousDual} in order to get a distribution for the graph
$G'$ as follows: Pick the cycle and map each point to a vertex in the
corresponding cell. The resulting marginal distributions will be uniform in
$G'.$ Thus, $\sdp_3(G')=1-\Theta(\sqrt{\gamma}).$

Finally, we will show that $\E\normtv{L(G')-L(G_\gamma[V])}$ tends to zero
with~$\eps.$ That is, the two distributions have statistical distance tending
to zero. This also shows that for sufficiently small~$\eps$, the
semidefinite programs also have approximately the same value.
Now to argue the above point, consider the process of picking a random edge.
Consider first the case that in $G'$, the two cells containing the chosen points
have exactly the expected number of vectors in them, and furthermore, suppose
that the two cells are \emph{good} in the sense that either none of the
vertices in them share an edge or all pairs of vertices between the two
cells share an edge in~$G.$
In this case, the edges in $G$ going between these two cells have exactly the same
probability as under $G'$.

The first assumption is close enough to the truth,
since the number of vertices in different cells differ by at most a factor of $1\pm \epsilon$,
For the second assumption it suffices to pick~$\eps$ small enough so that a
cap of radius $r$  has the same volume as a cap of radius $r\pm\eps$ up to
a factor of $1\pm o(1).$ This happens for, say, $\eps\ll1/d.$ This will
guarantee that the number of bad pairs of cells is small. This argument can be
found in~\cite{FeigeSc02}.
\end{proof}

%
%
%

\section{Subsampling edges}
In this section, we will briefly discuss the analogue of our main theorem in
the setting where we sample a fraction of the edges in~$G$ at random so that
the expected degree in $G$ is constant.
Here, $G=(V,E)$ will always denote a $\Delta$-regular
graph on $n$ vertices.
Our proof in the case of edge subsampling is much simpler. As it turns out it
suffices to bound the cut norm between the original graph and its subsample
and to argue that the SDP value is a Lipschitz function of the cut norm. The
latter fact is a consequence of Grothendieck's inequality.

We let $E_\delta\subseteq E$ denote a random subset of $E$ of size
$\delta|E|.$ We'll overload notation slightly by using $G[E_\delta]$ for the
graph $G$ restricted to the edge set $E_\delta.$
\begin{definition}
The cut norm of a real valued $n\times n$ matrix $A$ is defined as
\begin{equation}\label{eq:cutnorm}
\|A\|_C
=\max_{U,V\subseteq[n]}\left| \sum_{i\in U,j\in V} a_{ij} \right|.
\end{equation}
\end{definition}
It is known that the cut norm is within constant factors of the norm
\begin{equation}\label{eq:infty}
\|A\|_{\infty\mapsto 1} = \max_{x_i,y_j\in\{-1,1\}} \sum_{i,j\in[n]}
a_{ij}x_iy_j\mper
\end{equation}
A natural semidefinite relaxation of~(\ref{eq:infty}) replaces every pair $x_i,y_j$
by two unit vectors $u_i,v_j,$ i.e.,
\begin{equation}\label{eq:inftysdp}
\sdp_C(A) = \max_{\|u_i\|=\|v_i\|=1} a_{ij}\langle u_i,v_j\rangle.
\end{equation}
A theorem of Grothendieck bounds the gap between the cut norm and its relaxation by a
multiplicative constant (the Grothendieck constant).
\begin{theorem}
\label{thm:groth}
There is a constant $K_G$ (known to be less than~$1.8$) such that
$\sdp_C(A) \le K_G \|A\|_{\infty\mapsto 1}.$
\end{theorem}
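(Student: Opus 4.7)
The statement is a form of Grothendieck's inequality, and the plan is to prove it by constructing an explicit rounding scheme that converts the vector solution of $\sdp_C(A)$ into a $\pm 1$-valued solution of $\|A\|_{\infty\mapsto 1}$ while losing only a constant factor. I would follow Krivine's approach. Fix an optimal SDP solution, i.e., unit vectors $u_i, v_j$ achieving $\sum_{i,j} a_{ij} \langle u_i, v_j \rangle = \sdp_C(A)$. The natural first attempt is random hyperplane rounding: draw a standard Gaussian $g$ and set $x_i = \mathrm{sign}(\langle g, u_i \rangle)$, $y_j = \mathrm{sign}(\langle g, v_j \rangle)$. Grothendieck's identity then yields $\E[x_i y_j] = \tfrac{2}{\pi} \arcsin(\langle u_i, v_j \rangle)$, but since the coefficients $a_{ij}$ can have arbitrary signs, $\arcsin$ is nonlinear, and we cannot directly compare the resulting objective to the SDP value.

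The crucial idea is to apply hyperplane rounding not to $u_i, v_j$ themselves but to carefully modified unit vectors $U_i, V_j$ living in a larger Hilbert space, chosen so that $\arcsin \langle U_i, V_j \rangle$ is \emph{exactly linear} in $\langle u_i, v_j \rangle$. Concretely, one wants to arrange $\langle U_i, V_j \rangle = \sin(c\, \langle u_i, v_j \rangle)$ for a constant $c>0$ to be optimized. Since the Taylor series of $\sin$ has alternating signs, it cannot be realized as a Gram matrix, but the Taylor series of $\sinh$ has all positive coefficients. I would use the \emph{tensor power trick}: for each unit vector $w$, let $w^{(k)} = w^{\otimes(2k+1)}$, so that $\langle u^{(k)}, v^{(k)} \rangle = \langle u,v \rangle^{2k+1}$. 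Then for suitable coefficients $\alpha_k \geq 0$ (obtained from the coefficients of $\sinh$), the vectors
\begin{equation*}
  U_i \;\defeq\; \sum_k \alpha_k\, u_i^{(k)}, \qquad V_j \;\defeq\; \sum_k \alpha_k\, v_j^{(k)}
\end{equation*}
in the Hilbert direct sum satisfy $\langle U_i, V_j \rangle = F(\langle u_i, v_j \rangle)$ for some explicit odd analytic function $F$ derived from $\sinh$. Solving the functional equation $\arcsin F(t) = c t$ pins down the right choice of $\alpha_k$ in terms of the Taylor coefficients of $\sinh$, and simultaneously determines the normalization condition $\sum_k \alpha_k^2 = 1$ needed so that $\|U_i\| = \|V_j\| = 1$.

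Once such $U_i, V_j$ are in hand, I would apply Gaussian hyperplane rounding to them: set $x_i = \mathrm{sign}(\langle g, U_i \rangle)$ and $y_j = \mathrm{sign}(\langle g, V_j \rangle)$. By Grothendieck's identity applied to $U_i, V_j$,
\begin{equation*}
  \E \sum_{i,j} a_{ij}\, x_i y_j \;=\; \frac{2}{\pi} \sum_{i,j} a_{ij} \arcsin\langle U_i, V_j \rangle \;=\; \frac{2c}{\pi} \sum_{i,j} a_{ij} \langle u_i, v_j \rangle \;=\; \frac{2c}{\pi} \sdp_C(A).
\end{equation*}
Since the left side is at most $\|A\|_{\infty\mapsto 1}$ and exists as an actual value for \emph{some} choice of $g$, this gives $\sdp_C(A) \leq \tfrac{\pi}{2c} \|A\|_{\infty\mapsto 1}$.

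The main obstacle is the algebraic step of simultaneously enforcing unit norm of $U_i, V_j$ and matching the $\arcsin$ of their inner product to a scalar multiple of $\langle u_i, v_j \rangle$. The cleanest way is to work with the ansatz $\langle U_i, V_j \rangle = \sinh(c \langle u_i, v_j\rangle)/\sinh(c)$, solve $\arcsin(\sinh(ct)/\sinh(c)) = $ (something linear) by choosing the largest $c$ for which $\sinh(c) = 1$, namely $c = \mathrm{arcsinh}(1) = \ln(1+\sqrt{2})$. This yields the constant $K_G \leq \pi/(2\ln(1+\sqrt{2})) \approx 1.7822 < 1.8$, as claimed. One then checks that the constructed series indeed defines unit vectors in the Hilbert direct sum, which reduces to the identity $\sum_k |\beta_k|^2 = 1$ where $\beta_k$ are the Taylor coefficients of a suitably normalized odd function; this is a routine calculation once the normalization is fixed.
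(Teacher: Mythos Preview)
The paper does not prove this theorem at all; it is stated as a classical result of Grothendieck and simply invoked. So there is no ``paper's proof'' to compare against, and supplying Krivine's argument is a reasonable choice that indeed yields the bound $K_G \le \pi/(2\ln(1+\sqrt 2)) < 1.8$.

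Your outline is the right one, but the tensor construction as you wrote it has a genuine slip. If you take \emph{the same} nonnegative coefficients $\alpha_k$ on both sides, then $\langle U_i,V_j\rangle = \sum_k \alpha_k^2\,\langle u_i,v_j\rangle^{2k+1}$ is a power series with \emph{nonnegative} coefficients. Since $\arcsin$ also has nonnegative odd Taylor coefficients, $\arcsin\langle U_i,V_j\rangle$ cannot collapse to a linear function $c\,\langle u_i,v_j\rangle$ unless $\langle U_i,V_j\rangle = \sin(c\,\langle u_i,v_j\rangle)$, which has alternating signs --- a contradiction. The fix, and the actual content of Krivine's trick, is to put the signs on one side only: with $c_{2k+1}$ the Taylor coefficients of $\sin(ct)$, set
\[
U_i \;=\; \bigoplus_{k\ge 0} \sqrt{|c_{2k+1}|}\; u_i^{\otimes(2k+1)},
\qquad
V_j \;=\; \bigoplus_{k\ge 0} \mathrm{sgn}(c_{2k+1})\,\sqrt{|c_{2k+1}|}\; v_j^{\otimes(2k+1)}.
\]
Then $\langle U_i,V_j\rangle = \sin(c\,\langle u_i,v_j\rangle)$ as desired, while $\|U_i\|^2 = \|V_j\|^2 = \sum_k |c_{2k+1}| = \sinh(c)$. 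So $\sinh$ enters through the \emph{norm} computation, not through the inner product; the unit-norm constraint forces $\sinh(c)=1$, i.e.\ $c=\ln(1+\sqrt 2)$, and your final calculation goes through verbatim to give $\sdp_C(A)\le \tfrac{\pi}{2c}\,\|A\|_{\infty\mapsto 1}$. Your sentence about the ansatz $\langle U_i,V_j\rangle = \sinh(ct)/\sinh(c)$ should be dropped --- that is not the inner product you want, and $\arcsin$ of it is not linear.
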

The next lemma shows that the cut norm between a graph and its subsample is small.
\begin{lemma}\label{lem:cutnorm}
Let $\delta\ge c\eps^{-2}\Delta^{-1}.$ Then,
$\E\left\|A(G)-\delta^{-1} A(G[E_\delta])\right\|_{\infty\mapsto 1}\le\eps.$
\end{lemma}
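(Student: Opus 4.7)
The plan is to combine Hoeffding's inequality (valid also for sampling without replacement) with a union bound over the $4^n$ pairs of sign vectors.

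First, fix $x,y\in\set{-1,1}^n$. Using the standard normalization in which each edge of $G$ carries weight $w=2/(\Delta n)$, one rewrites
\[
\sum_{i,j}\Paren{A(G)_{ij} - \tfrac1\delta A(G[E_\delta])_{ij}} x_i y_j
= -\tfrac{w}{\delta}\sum_{e\in E} g_e\Paren{\Ind\brac{e\in E_\delta} - \delta}
\mcom
\]
where for an (unordered) edge $e=\set{i,j}$ we set $g_e \defeq x_iy_j + x_jy_i \in[-2,2]$. The expectation of each summand vanishes since $\Pr\brac{e\in E_\delta}=\delta$, so we need only a concentration bound for the random sum.

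Second, since $E_\delta$ is a uniform subset of size $\delta\abs{E}$, Hoeffding's inequality for sampling without replacement gives, for every $s>0$,
\[
\Pr\Brac{\,\Abs{\,\sum_{e\in E} g_e\paren{\Ind\brac{e\in E_\delta}-\delta}\,} > s\,}
\;\leq\; 2\exp\Paren{-\Omega\paren{s^2/\delta\abs{E}}}\mper
\]
With $\abs{E}=\Delta n/2$, picking $s = C n\sqrt{\delta\Delta}$ for a sufficiently large constant $C$ makes this probability at most $2\cdot 4^{-n}$. Translating back through the factor $w/\delta = 2/(\Delta n\delta)$, for every fixed $(x,y)$ the quantity $\Abs{\sum_{i,j}(A(G)-\delta^{-1}A(G[E_\delta]))_{ij}x_iy_j}$ is at most $O(1/\sqrt{\delta\Delta})$ with probability $\geq 1-2\cdot 4^{-n}$.

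Third, a union bound over the $4^n$ choices of $(x,y)\in\set{-1,1}^n\times\set{-1,1}^n$ then yields
\[
\Norm{A(G)-\tfrac1\delta A(G[E_\delta])}_{\infty\mapsto 1}
= O\Paren{1/\sqrt{\delta\Delta}}
\]
with probability at least $1-o(1)$. Under the hypothesis $\delta\geq c\eps^{-2}\Delta^{-1}$ with a large enough absolute constant $c$, the right-hand side is at most $\eps/2$. Finally, since $\|A(G)-\delta^{-1}A(G[E_\delta])\|_{\infty\mapsto 1}$ is always bounded by $O(1/\delta)$, passing from a high-probability bound to a bound in expectation costs only an additive $O(1/\delta)\cdot \exp(-\Omega(n))$ term, which is negligible, giving the claimed bound on the expectation.

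No single step is really the main obstacle; the only delicate point is choosing the right normalization so that the Hoeffding-style deviation $s$ matches the scale $\delta\abs{E}$ exactly and the resulting union-bound constant absorbs the $\log 4$ factor from enumerating sign vectors, which is why we end up requiring $\delta\Delta\gtrsim 1/\eps^2$ rather than $1/\eps$.
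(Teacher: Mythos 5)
Your proposal is correct and follows exactly the route the paper indicates for this lemma (the paper's own proof is a two-line sketch citing ``Hoeffding's bound and the union bound,'' which you have simply fleshed out): fix $x,y\in\{-1,1\}^n$, apply Hoeffding for sampling without replacement to the edge-indicator sum, union-bound over the $4^n$ sign patterns, and pay a negligible additive term on the exponentially rare bad event to pass to the expectation. The only nit is arithmetic: a per-pair failure probability of $2\cdot 4^{-n}$ gives a vacuous union bound over $4^n$ pairs, so you should take $C$ large enough that the per-pair bound is, say, $2\cdot 8^{-n}$ --- exactly the $\log 4$ absorption you already flag --- after which everything goes through; one could also note that the norm in question is deterministically $O(1)$ under the paper's normalization (total edge weight $1$ in both $A(G)$ and $\delta^{-1}A(G[E_\delta])$), which makes the final expectation step even more immediate.
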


\begin{proof}
We can show that with probability $1-e^{\Omega(n)}$, $|\langle
x,Ay\rangle - \delta^{-1}\langle x,A'y\rangle|\le\eps$
simultaneously for all $x,y\in\{-1,1\}^n.$ The proof follows from Hoeffding's
bound and the union bound. The details are straightforward and
therefore omitted from this paper.
\end{proof}
Similarly the following lemma can be shown.
\begin{lemma}\label{lem:degrees}
Let $\delta\ge c\eps^{-2}\Delta^{-1}.$ Then,
$\E\left\|D(G)-\delta^{-1}D(G[E_\delta])\right\|_{\infty\to1}\le\eps\mper$
\end{lemma}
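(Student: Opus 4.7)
The plan is to exploit the fact that $D(G) - \delta^{-1}D(G[E_\delta])$ is a diagonal matrix, which collapses the $\infty\to 1$ norm to a simple $\ell_1$ sum of the diagonal entries. Specifically, for any diagonal matrix $M = \mathrm{diag}(m_1,\dots,m_n)$, the choice $x_i = \mathrm{sign}(m_i)$ and $y_i = 1$ (all other off-diagonal cross-terms vanish) shows $\norm{M}_{\infty\to 1} = \sum_i |m_i|$. Thus the task reduces to controlling
\[
\E \sum_{i=1}^n \left| d_i(G) - \delta^{-1} d_i(G[E_\delta]) \right|\mper
\]

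Since $G$ is $\Delta$-regular with edges of weight $2/(\Delta n)$, every vertex $i$ has weighted degree $d_i(G) = 2/n$, and if $X_i$ denotes the (random) number of the $\Delta$ edges incident to $i$ that survive in $E_\delta$, then $d_i(G[E_\delta]) = \tfrac{2}{\Delta n} X_i$. Whether we sample edges independently with probability $\delta$ or sample a fixed-size subset of $\delta |E|$ edges, $X_i$ has mean $\delta \Delta$ and variance at most $\delta\Delta$ (the without-replacement / hypergeometric case only improves concentration). In particular,
\[
\E\left|d_i(G) - \delta^{-1} d_i(G[E_\delta])\right|
= \frac{2}{n \delta \Delta} \cdot \E|X_i - \delta\Delta|
\le \frac{2}{n \delta \Delta} \sqrt{\mathrm{Var}(X_i)}
\le \frac{2}{n\sqrt{\delta\Delta}}\mper
\]

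Summing over the $n$ vertices yields $\E\norm{D(G) - \delta^{-1}D(G[E_\delta])}_{\infty\to 1} \le 2/\sqrt{\delta\Delta}$, which is at most $\eps$ provided $\delta \ge 4\eps^{-2}/\Delta$, exactly the hypothesis (with $c=4$). There is no real obstacle here beyond keeping the normalization conventions consistent; the only subtlety is whether $E_\delta$ is sampled with or without replacement, but in both regimes the second-moment calculation above gives the same bound (in the size-constrained model the incidences are negatively correlated, so $\mathrm{Var}(X_i)$ is bounded by the Binomial variance). Once the diagonal observation is made, the rest is a one-line Jensen plus variance computation; the proof is strictly easier than that of Lemma~\ref{lem:cutnorm}, since no union bound over $2^n \times 2^n$ sign patterns is needed.
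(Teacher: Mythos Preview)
Your argument is correct. The paper does not actually prove this lemma; it only says ``Similarly the following lemma can be shown,'' pointing back to Lemma~\ref{lem:cutnorm}, whose proof is a Hoeffding-plus-union-bound over all $x,y\in\{-1,1\}^n$. Your approach is genuinely different and cleaner: by observing that the matrix in question is diagonal, you collapse $\norm{\cdot}_{\infty\to 1}$ to the $\ell_1$-norm of the diagonal deterministically, and then a second-moment bound on each degree (no union bound, no tail estimate) suffices. The paper's implied route would give a high-probability statement and then integrate the tail, whereas your route goes straight to the expectation via $\E|X_i-\delta\Delta|\le\sqrt{\Var(X_i)}$. Both give the same $\delta\ge O(\eps^{-2}/\Delta)$ threshold, but your argument is shorter and does not require any concentration inequality beyond Chebyshev/Jensen.

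One cosmetic remark: your parenthetical ``all other off-diagonal cross-terms vanish'' is slightly awkward phrasing for the (correct) observation that $\sum_{i,j} m_{ij}x_iy_j=\sum_i m_i x_i y_i$ when $M$ is diagonal; you might just say that directly. Everything else is fine, including your handling of the with/without-replacement distinction for $E_\delta$.
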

The previous two lemmas showed that the expected difference in cut norm between the graph $G$ and
its edge subsample $G[E_\delta]$ is small.
\begin{corollary}
For $\delta\ge c\eps^{-2}\Delta^{-1},$ we have
$\E\|L(G)-\delta^{-1}L(G[E_\delta])\|_C\le \eps\mper$
\end{corollary}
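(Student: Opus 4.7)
The plan is to reduce the statement to the two preceding lemmas by exploiting the structural decomposition of the Laplacian. Recall that $L(G) = D(G) - A(G)$, and because the sub-sampled graph scales analogously, we have
\[
L(G) - \delta^{-1} L(G[E_\delta])
= \bigl(D(G) - \delta^{-1} D(G[E_\delta])\bigr) - \bigl(A(G) - \delta^{-1} A(G[E_\delta])\bigr)\mper
\]
Applying the triangle inequality for the cut norm immediately splits the quantity of interest into the diagonal (degree) part and the off-diagonal (adjacency) part, so it suffices to bound each of these two cut norms in expectation by $\eps/2$.

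Next, I would bridge from the cut norm to the $\infty\mapsto 1$ norm used in Lemmas~\ref{lem:cutnorm} and~\ref{lem:degrees}. It is a standard fact that for any real matrix $B$, $\|B\|_C \le \|B\|_{\infty\mapsto 1} \le 4 \|B\|_C$ (one direction follows by taking $x = \one_U - \one_{\bar U}$ and $y = \one_V - \one_{\bar V}$, and the other by averaging over the four sign patterns that correspond to the four quadrants $U \times V$, $U \times \bar V$, etc.). Thus an $\eps'$ bound on the $\infty\mapsto 1$ norm yields an $\eps'$ bound on the cut norm. Invoking Lemma~\ref{lem:cutnorm} and Lemma~\ref{lem:degrees} each with parameter $\eps' = \eps/2$ and taking the constant $c$ in the hypothesis to be four times the constants appearing there gives
\[
\E\|A(G) - \delta^{-1} A(G[E_\delta])\|_C \le \eps/2
\quad\text{and}\quad
\E\|D(G) - \delta^{-1} D(G[E_\delta])\|_C \le \eps/2\mper
\]
Combining through the triangle inequality yields the claimed bound.

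I do not anticipate any serious obstacle, since all the technical content (the Hoeffding-plus-union-bound argument over vertex subsets, which handles both the adjacency and degree matrices) has already been carried out in the two preceding lemmas. The only step that requires any care is making sure the constant $c$ in the hypothesis $\delta \ge c\eps^{-2}\Delta^{-1}$ is chosen large enough to simultaneously support both lemma applications at the tighter parameter $\eps/2$, which only inflates the required constant by an absolute factor.
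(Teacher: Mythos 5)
Your proposal is correct and matches the paper's (implicit) argument: the paper states this corollary without proof immediately after Lemmas~\ref{lem:cutnorm} and~\ref{lem:degrees}, intending exactly the decomposition $L = D - A$, the triangle inequality for the cut norm, and the standard equivalence $\|B\|_C \le \|B\|_{\infty\mapsto 1}$ to transfer the two lemmas' bounds. No gaps.
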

It turns out that bounding the difference in cut norm is sufficient for bounding the difference in
SDP values.
\begin{lemma}\label{lem:LipschitzSDP}
Let $G$ and $G'$ be any two graphs on $n$ vertices. Let $\cM\subseteq\cM_2$
(see Definition~\ref{def:reasonable}) be any set of positive semidefinite
$n\times n$ matrices. Suppose $\|L(G)-L(G')\|_C\le t$. Then,
\[
|\sdp_\cM(G)-\sdp_\cM(G')|\le O(t)\mper
\]
\end{lemma}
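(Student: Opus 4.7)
The plan is to apply Grothendieck's inequality to the bilinear form induced by a positive semidefinite solution. The key observation is that any SDP solution $X \in \cM$ is PSD, so it admits a Gram factorization $X_{ij} = \langle v_i, v_j \rangle$ with $\|v_i\| \le 1$ (the diagonal bound comes from $\cM \subseteq \cM_2$). Setting $M = L(G) - L(G')$, the bilinear form $\sum_{ij} M_{ij} \langle v_i, v_j \rangle$ is then a feasible value for the Grothendieck-type SDP $\sdp_C(M)$ defined in equation~\eqref{eq:inftysdp} (take $u_i = v_i$), and hence is at most $\sdp_C(M)$.

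First I would write, for the optimal $X^* \in \cM$ realizing $\sdp_\cM(G)$,
\begin{equation*}
\sdp_\cM(G) - \sdp_\cM(G') \;\le\; L(G) \bdot X^* - L(G') \bdot X^* \;=\; (L(G)-L(G')) \bdot X^* \mcom
\end{equation*}
since $X^* \in \cM$ is a feasible point for $\sdp_\cM(G')$ as well. By symmetry, it suffices to bound the right-hand side by $O(t)$; the other direction follows identically by swapping the roles of $G$ and $G'$.

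Next I would factor $X^*_{ij} = \langle v_i, v_j \rangle$ using that $X^* \sge 0$, with $\|v_i\|^2 = X^*_{ii} \le 1$. Then
\begin{equation*}
(L(G)-L(G')) \bdot X^* \;=\; \sum_{ij} (L(G)-L(G'))_{ij} \langle v_i, v_j \rangle \;\le\; \sdp_C(L(G)-L(G')) \mcom
\end{equation*}
because the left-hand side is the value of the program defining $\sdp_C$ under the particular choice $u_i = v_i$, whose vectors have norm at most~$1$ (and if some have norm strictly less than~$1$, we may pad them to unit length without decreasing the objective over the maximum). Applying Theorem~\ref{thm:groth} yields $\sdp_C(L(G)-L(G')) \le K_G \|L(G)-L(G')\|_{\infty \mapsto 1}$, and the well-known equivalence $\|A\|_{\infty \mapsto 1} = \Theta(\|A\|_C)$ (obtained by expanding indicator vectors as $\half(\one_U - \one_{\bar U}) = \pm 1$ vectors) converts this to $O(\|L(G)-L(G')\|_C) \le O(t)$.

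I do not expect a serious obstacle: the argument is essentially a one-line application of Grothendieck's inequality once one observes that the PSD structure of $X^*$ lets the bilinear form be dominated by $\sdp_C$. The only minor care needed is in handling the diagonal bound $X^*_{ii} \le 1$ (rather than $= 1$), which is harmless since shrinking vectors can only weaken the bound we are using. Consequently $|\sdp_\cM(G) - \sdp_\cM(G')| \le O(t)$, as required.
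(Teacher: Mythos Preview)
Your proof is correct and follows essentially the same route as the paper's: bound the difference of the two SDPs by $(L(G)-L(G'))\bullet X$ for some feasible $X\in\cM\subseteq\cM_2$, then use the Gram factorization of $X$ to dominate this by $\sdp_C(L(G)-L(G'))$ and invoke Grothendieck's inequality (Theorem~\ref{thm:groth}) together with the equivalence of $\|\cdot\|_{\infty\mapsto1}$ and $\|\cdot\|_C$. The paper's proof compresses these steps into a single line, but the content is identical.
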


\begin{proof}
\begin{align*}
\left|\sdp_\cM(G)-\sdp_\cM(G')\right|
&\le\left|\,\max_{X\in\cM_2}(L(G)-L(G'))\bullet X\,\right|\\
&\le O(1)\cdot \|L(G)-L(G')\|_C \tag{by Theorem~\ref{thm:groth}}\\
&\le O(t)\mper
\end{align*}
\end{proof}

\begin{corollary}
Let $G$ denote a $\Delta$-regular graph and let $\delta\ge\poly(1/\eps)\Delta^{-1}.$ Then,
\begin{equation}
\E\left|\sdp_\cM(G)-\sdp_\cM(G[E_\delta])\right|\le\eps
\end{equation}
for any $\cM\subseteq\cM_2.$
\end{corollary}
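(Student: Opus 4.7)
The plan is to assemble the corollary directly from the two preceding ingredients already established in this section, namely the cut-norm concentration of the sparsified Laplacian and the Lipschitz dependence of $\sdp_\cM$ on the cut norm. No new ideas are needed: the argument is essentially a triangle inequality followed by an application of Grothendieck's bound, absorbed into the $\poly(1/\eps)$ slack in the sampling rate.

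First I would fix a target accuracy $\eps'=\eps/K$ where $K$ is the absolute constant hidden in the $O(t)$ of Lemma~\ref{lem:LipschitzSDP}, and choose the sampling rate $\delta\ge c(\eps')^{-2}\Delta^{-1}=\poly(1/\eps)\Delta^{-1}$ so that Lemmas~\ref{lem:cutnorm} and~\ref{lem:degrees} give $\E\|A(G)-\delta^{-1}A(G[E_\delta])\|_{\infty\to1}\le\eps'$ and $\E\|D(G)-\delta^{-1}D(G[E_\delta])\|_{\infty\to1}\le\eps'$. The triangle inequality on the $\infty\to1$ norm (or equivalently, up to a constant factor on the cut norm) then yields
\begin{equation*}
\E\bigl\|L(G)-\delta^{-1}L(G[E_\delta])\bigr\|_{C}\le O(\eps')\mper
\end{equation*}
This is exactly the corollary stated just before Lemma~\ref{lem:LipschitzSDP}, and it is what is needed to apply the Lipschitz bound.

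Next I would apply Lemma~\ref{lem:LipschitzSDP} pointwise for each realization of the edge subsample $E_\delta$. Since $\cM\subseteq\cM_2$, the lemma gives
\begin{equation*}
\bigl|\sdp_\cM(G)-\sdp_\cM(G[E_\delta])\bigr|\le K\cdot\bigl\|L(G)-L(G[E_\delta])\bigr\|_{C}\mper
\end{equation*}
Here I interpret $\sdp_\cM(G[E_\delta])$ using the normalized Laplacian $\delta^{-1}L(G[E_\delta])$ of the subsample, consistent with the convention used throughout the section (total edge weight $1$). Taking expectations on both sides and using the cut-norm bound from the previous paragraph gives $\E|\sdp_\cM(G)-\sdp_\cM(G[E_\delta])|\le K\cdot O(\eps')=O(\eps)$, and rescaling $\eps'$ to absorb the absolute constant yields the claimed bound.

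There is no real obstacle: the only thing to verify is that the normalizations match between the Laplacian comparison and the SDP evaluation, which follows from the paper's convention of rescaling the subsampled graph to unit total edge weight. The $\poly(1/\eps)$ in the statement of the corollary is simply the $c(\eps/K)^{-2}$ required to make Lemmas~\ref{lem:cutnorm}--\ref{lem:degrees} give accuracy $\eps/K$, which after Grothendieck costs another constant factor. The whole argument is a short chaining of the two prior lemmas, so the write-up should be only a few lines.
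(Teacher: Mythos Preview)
Your proposal is correct and is precisely the argument the paper intends: chain the cut-norm bound $\E\|L(G)-\delta^{-1}L(G[E_\delta])\|_C\le\eps$ (the unnamed corollary preceding Lemma~\ref{lem:LipschitzSDP}) with the Grothendieck-based Lipschitz bound of Lemma~\ref{lem:LipschitzSDP}, absorbing the constant into the $\poly(1/\eps)$ slack. The paper leaves the proof implicit, and your write-up, including the care about normalization, fills it in exactly as expected.
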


\paragraph{Negative results for linear programs.}
We remark that using the approach of~\cite{CharikarMaMa09} one can obtain
strong and general results ruling out subsampling for linear programs.
\begin{theorem}\label{thm:maxcut-sa}
Let $\eps,\lambda>0.$ Suppose $G$ is a $\Delta$-regular graph with $\Delta>n^\theta.$ Then with
high probability over $G'=G[E_{\lambda/\Delta}],$ after removing $o(n)$ vertices,
$\sa_r(G')\ge1-\eps$
for $r=n^\alpha$ where $\alpha(1/\eps,1/\theta,\lambda)$ tends to zero as any of its arguments
grows.
\end{theorem}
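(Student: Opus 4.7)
The plan is to mirror the strategy sketched in the proof of Theorem~\ref{thm:sa}, upgrading the random-graph analysis there to an arbitrary $\Delta$-regular host graph with $\Delta\ge n^\theta$. The target is to verify that $G'=G[E_{\lambda/\Delta}]$, after discarding an $o(n)$ set of atypical vertices, satisfies the two hypotheses needed by the Sherali--Adams integrality gap construction of~\cite{CharikarMaMa09}: girth tending to infinity with $n$, and every vertex subset of size up to some $M=n^{\Omega(1)}$ being $(1+o(1))$-sparse. Once these structural properties are in place, their construction immediately yields a feasible Sherali--Adams solution of value at least $1-\eps$ with $r=M^{\Omega(1)}=n^\alpha$ rounds, where the exponent $\alpha$ decays in $1/\eps,1/\theta,\lambda$.

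First I would handle the easy structural properties. Since each vertex has expected degree $\lambda$ in $G'$, standard Chernoff/Markov arguments let us delete an $o(n)$ set of high-degree vertices so that the maximum degree is $O(\lambda\log n)$ (and in fact one can prune further to constant maximum degree by absorbing a small multiplicative loss into $\lambda$). For the girth bound, any $\Delta$-regular graph has at most $n\Delta^{k-1}/(2k)$ cycles of length $k$ (counting rooted closed walks of length $k$ and dividing out the $2k$ automorphisms), so the expected number of surviving $k$-cycles in $G'$ is at most $n\lambda^k/(2k\Delta)$. Summing for $k$ up to $k_0=c\theta\log n/\log(2\lambda)$ keeps the total $o(n)$, so deleting one vertex from each short cycle removes only $o(n)$ further vertices while making the remaining graph's girth exceed $k_0$.

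The heart of the argument is the subset-sparsity bound. For a fixed subset $S$ of size $s$, the number of potential edges $|E(G[S])|$ is at most $\min(\binom{s}{2},s\Delta/2)\le s^2/2$, so $|E(G'[S])|$ is stochastically dominated by $\mathrm{Bin}(s^2/2,\lambda/\Delta)$ with mean $\mu\le s^2\lambda/(2\Delta)$. A standard Chernoff upper tail gives $\Pr[|E(G'[S])|\ge (1+\eta)s]\le(es\lambda/(2(1+\eta)\Delta))^{(1+\eta)s}$. Union-bounding over the $\binom{n}{s}\le (en/s)^s$ subsets of size $s$ leaves a factor $(e^{2}n\lambda/(2(1+\eta)\Delta))^{s}$; choosing $s\le M:=(\Delta/\lambda)^{1-\gamma}$ for a sufficiently small $\gamma=\gamma(\eps,\theta,\lambda)>0$ makes the base subpolynomial in $n$, so the probability of any violating subset of size at most $M$ is $o(1)$. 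Because $\Delta\ge n^\theta$, this yields $M\ge n^{\Omega(\theta)}$, i.e.\ the required polynomial range of Sherali--Adams rounds. This step is essentially the subgraph-density argument of~\cite{AroraBoLoTo06}, adapted so that the trivial bound $|E(G[S])|\le s\Delta/2$ replaces the complete-graph density used there.

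The main obstacle I anticipate is the bookkeeping needed to turn the above into an explicit function $\alpha(1/\eps,1/\theta,\lambda)$ matching the hypothesis of the~\cite{CharikarMaMa09} construction, because their loss depends both on the sparsity slack $\eta$ and on the ratio of girth to round-count. Trading off girth, sparsity threshold, and (pruned) maximum degree against one another is a routine but somewhat delicate calculation rather than a new conceptual step. As remarked in Remark~\ref{rem:anygraph}, this is essentially why the same argument applies to any sufficiently dense regular host, not just the random or Feige--Schechtman graphs.
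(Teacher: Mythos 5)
Your overall plan (high-degree pruning, a girth bound, and small-set sparsity, then feeding this structure into the construction of~\cite{CharikarMaMa09} via path-decomposability) is the same as the paper's, and your degree-pruning and cycle-counting steps are fine. The gap is in the sparsity step. For a fixed $s$-set $S$ you dominate $|E(G'[S])|$ by $\mathrm{Bin}(s^2/2,\lambda/\Delta)$ and union bound over all $\binom{n}{s}$ sets, giving
\[
\binom{n}{s}\Bigl(\tfrac{es\lambda}{2(1+\eta)\Delta}\Bigr)^{(1+\eta)s}
\;\le\;\Bigl[\tfrac{en}{s}\cdot\bigl(\tfrac{es\lambda}{2(1+\eta)\Delta}\bigr)^{1+\eta}\Bigr]^{s}.
\]
For $\Delta=n^{\theta}$ with $\theta<1$ and $s$ up to your $M=(\Delta/\lambda)^{1-\gamma}$, the bracketed base is of order $n^{1-\theta-\theta\gamma\eta}\gg 1$ (and already of order $n^{1-\theta(1+\eta)}\gg1$ for constant $s$), so the first-moment bound diverges throughout the relevant range and the claimed $o(1)$ failure probability does not follow. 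Using the sharper bound $|E(G[S])|\le s\Delta/2$ does not rescue it: the per-set tail is then at best $c^{s}$ for a constant $c$, still hopeless against $\binom{n}{s}=n^{\Omega(s)}$. The statement you are trying to prove is true, but this counting is too lossy to establish it for any $\theta<1$.

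The missing idea is that a violating set contains a connected component of edge density at least $1+\eta$, and connectivity must be paid for inside the \emph{host} graph $G$: fixing a spanning tree of that component, each of its $k-1$ edges must simultaneously be an edge of $G$ and survive the sampling, so it effectively costs $(\Delta/n)\cdot(\lambda/\Delta)=\lambda/n$ rather than $\lambda/\Delta$ (equivalently, one unions over the roughly $n\,(O(k\Delta))^{k-1}$ trees contained in $G$ instead of over all $\binom{n}{k}$ vertex sets). Only the $\eta k$ excess edges beyond the tree pay the uncompensated factor $\lambda/\Delta$. This restores the cancellation against $\binom{n}{k}\binom{k^2}{(1+\eta)k}$ and yields a bound of the form $n\cdot\bigl(f(\lambda,\eta)\,k/\Delta\bigr)^{\eta k}$, which is exactly the computation in the paper's sparsity lemma and is where the admissible set size $\beta\Delta$ with $\beta\le(c\lambda)^{-1/\eta-1/\theta}$ --- and hence the exponent $\alpha(1/\eps,1/\theta,\lambda)$ --- comes from. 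Your adaptation of the $G_{n,p}$ computation of~\cite{AroraBoLoTo06} drops precisely this connectivity saving (in $G_{n,p}$ every pair is a potential edge at rate $p=d/n$, so the tree edges automatically cancel $\binom{n}{k}$ there), which is why your version would only close when $\Delta=\Omega(n)$.
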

The proof follows by arguing that $G[E_{\lambda/\Delta}]$ has sufficient small
set expansion so that~\cite{CharikarMaMa09} applies. Details are omitted.
%
%
%
\remove{
We will first show our Sherali-Adams lower bounds in the case of edge
subsampling, since here the proof is more transparent and general. For
convenience, we sample edges independently from a $\Delta$-regular graph each
with probability $p$. This won't make a difference compared to sampling a
fixed number of edges. The proof strategy is to show that a constant degree
subsample of any $\Delta$-regular graph $G$ has sufficient small set expansion
so that the Sherali-Adams hierarchy on $G$ will have value close to $1$
regardless of the integral value of $G.$ If the integral value of $G$ is
bounded away from~$1$, this will result in a gap instance.

We point out that in order to rule out a subsampling theorem for linear
programs it would be sufficient to exhibit one graph with this behavior. Our
proof is more general in this regard.
\begin{lemma}
\label{lem:girth} Let $G$ be a $\Delta$-regular graph for $\Delta\ge\omega(1)$ and let $G'$ be
obtained from $G$ by keeping every edge independently with probability $\lambda/\Delta$. Then, in
expectation we need to remove at most $o(n)$ vertices (or edges) from $G'$ such that the girth of
$G'$ is at least $r\ge\log\Delta/3\log\lambda.$
\end{lemma}

\begin{proof}
Pick a random sequence of $k+1$ vertices $v_1,\dots,v_{k+1}$ in $G$. For this sequence to form a
cycle in $G'$ we first need that it forms a cycle in $G$ which means that each $v_i$ is incident to
$v_{i-1}$ in $G.$ This happens with probability $\Delta/n$ in each of the $k$ steps. Furthermore,
we need that all $k+1$ edges are selected into $G'.$ Hence, the probability that they form a cycle
in $G'$ is at most
\[
\left(\frac\Delta n\right)^k
\left(\frac\lambda\Delta\right)^{k+1}
=
\frac{\lambda^{k+1}}{\Delta n^k}
\]
Therefore we expect at most
\[
\binom n {k+1} (k+1)!\cdot
\frac{\lambda^{k+1}}{\Delta n^k}
\le\frac{(en)^{k+1}(k+1)!}{(k+1)^k}\cdot
\frac{\lambda^{k+1}}{\Delta n^k}
\le n\frac{(e\lambda)^{k+1}}\Delta
\]
cycles of length $k+1$. Summing over all lengths up to $r=\log\Delta/3\log\lambda$ gives us
\[
\sum_{k=3}^{r}
n\frac{(e\lambda)^{k+1}}\Delta
\le
n\cdot \frac{\log\Delta}{2\log\lambda}
\frac{\Delta^{1/2}}\Delta
\le \frac n{\Delta^{1/3}} = o(n)
\]
cycles over all.
\end{proof}

\begin{lemma}
\label{lem:sparsity} Let $\lambda>1,\eta>0,\theta>0.$ Suppose~$G$ is a $\Delta$-regular graph for
$\Delta \geq n^\theta$ Let~$G'$ the graph obtained by sampling edges with
probability~$\frac{\lambda}{\Delta}$.

Then, with probability high probability, we can remove $o(n)$ vertices such that all sets of size
at most~$\beta \Delta$ are $(1+\eta)$-sparse as long as $\beta \le (c\lambda)^{-1/\eta-1/\theta}$
for sufficiently large constant $c.$
\end{lemma}

\begin{proof}
With every subgraph of size $k$ having $(1+\eta)k$ edges we will associate a fixed spanning tree of
$k-1$ edges. As in the proof of Lemma~\ref{lem:girth} each of these edges has a probability of
$\Delta/n \cdot \lambda/\Delta=\lambda/n$ of appearing in $G',$ since these are edges in a tree.
The remaining $\eta k$ edges appear with probability at most $\lambda/\Delta.$ Hence, the expected
number of $k$-subgraphs spanning $(1+\eta)k$ edges is at most
  \begin{align}
    \binom{n}{k} \binom{k^2}{(1+\eta)k}
    \left(\frac{\lambda}{n}\right)^{k-1}
    \left(\frac{\lambda}{\Delta}\right)^{\eta k}
    &\leq
    \left(\frac{en}{k}\right)^k
    (ek)^{(1+\eta)k}
    \left(\frac{\lambda}{n}\right)^{k-1}
    \left(\frac{\lambda}{\Delta}\right)^{\eta k} \notag \\
    &\leq
    n \cdot e^{3k}\lambda^{2k} \left(\frac{k}{\Delta}\right)^{\eta k} \notag \\
    &=
    n \cdot \left(e^{3/\eta}\lambda^{2/\eta} \frac{k}{\Delta}\right)^{\eta k}
\notag \\
    &=
    n \cdot (f(\lambda,\eta)\beta)^k\mcom \label{eq:sparsity}
  \end{align}
where $f(\lambda,\eta)\sim (c\lambda)^{1/\eta}.$

At this point we will apply Lemma~\ref{lem:girth} to remove all cycles of length up to
$r=\log\Delta/2\log\lambda=\frac{\theta}{2\log\lambda}\log n.$ Hence we may assume w.l.o.g. that
$k>r$ in~(\ref{eq:sparsity}). It remains to choose $\beta$ small enough so that~(\ref{eq:sparsity})
simplifies to $o(1).$ This happens for
\[
\beta < 1/f(\lambda,\eta)2^{(2\log\lambda)/\theta}
= (c'\lambda)^{-1/\eta - 1/\theta}.
\]
\end{proof}

\begin{remark}\label{rem:replacement}
In the previous two lemmas, we assumed that edges were sampled independently from $G$ with
probability $p$. However, both Lemma~\ref{lem:girth} and Lemma~\ref{lem:sparsity} are true in the
case where we sample $pm$ edges independently from $G$. The reason is simply that the probability
of seeing a fixed subgraph $H$ on $k$ edges in the second model is bounded by $p^k$. Conditioned on
picking one edge of the subgraph, the second edge has probability $(pm-1)/m<p$ and so forth.
\end{remark}

\begin{definition}
We say a graph $G$ is $l$-path decomposable if every $2$-connected subgraph $H$ of $G$ contains a
path of length $l$ such that every vertex of the path has degree $2$ in $H$.
\end{definition}

The following lemma appears implicitly in \cite{AroraBoLoTo06} and gives a way of proving that a
graph is $l$-path decomposable.

\begin{lemma}
\label{lem:ablt} Let $l\ge 1$ be an integer and $0<\eta<\frac1{3l-1}$, and let $G$ be a
$(1+\eta)$-sparse graph which is not a cycle. Then, $G$ is $l$-path decomposable.
\end{lemma}

Combining the above lemma with our previous lemmas, we get the following theorem.

\begin{theorem}
\label{thm:decomposable} Let $\lambda\ge1,\theta>0.$ Suppose $G$ is a $\Delta$-regular graph with
$\Delta\ge n^\theta.$ Let $G'=G[E_{\lambda/\Delta}].$ Then with high probability we can remove
$o(n)$ vertices such that in the remaining graph every set of $k$ vertices induces an
$\Omega_{\theta,\lambda}(\log(n/k))$-path decomposable subgraph.
\end{theorem}

\begin{proof}
By Lemma~\ref{lem:sparsity}, w.h.p we can remove $o(n)$ vertices such that all induced subgraphs of
size $\beta\Delta$ are $(1+\eta)$-sparse for $\eta=1/\Omega_{\theta,\lambda}(\log(1/\beta)).$
By Lemma~\ref{lem:ablt}, it follows that all these subgraphs are $\Omega(\log(1/\beta))$-path
decomposable. The probability that any of these subgraphs is a cycle is negligible.
The claim follows by putting $\beta=k/\Delta.$
\end{proof}
}

\section{Deviation bounds}
\label{sec:deferred}

\paragraph{Deviation bounds for submatrices.}
The following general lemma is useful in bounding the deviation of expressions
$\sum_{i,j\in S} |a_{ij}|$ when $S$
denotes a random subset of~$[n]$ and $A$ is a $n\times n$ matrix.

\begin{lemma}\label{lem:edgeweight}
Let $A$ denote a symmetric $n\times n$ matrix such that $a_{ii}=0$ for all $i\in[n].$
Suppose there is some $\beta>0$ such that $|a_{ij}|\le\beta$ for all
$i,j\in[n]$ and $\sum_{j}|a_{ij}|\le 1$ for all $i.$
Now, let $S\subseteq[n]$ denote a random subset of $[n]$ of size
$\delta n$ for some $\delta>\beta.$ Then, for all $\eps>0,$
\begin{equation}
\Pr\left(\left|\delta^{-2}\sum_{i,j\in S}
a_{ij}-\sum_{i,j\in[n]}a_{ij}\right|>\eps n\right)
\le \frac{O(1)}{\eps^2\delta n}\mper
\end{equation}
\end{lemma}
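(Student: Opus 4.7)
The plan is to apply Chebyshev's inequality to the random variable $X \defeq \sum_{i,j\in S} a_{ij}$. For cleanliness of the second-moment computation I would actually work with independent Bernoulli indicators $\xi_i \in \{0,1\}$ with $\Pr\{\xi_i=1\}=\delta$ and write $X = \sum_{i\ne j} a_{ij}\xi_i\xi_j$; the case of sampling a subset of exactly $\delta n$ coordinates then follows because Bernoulli sampling is negatively correlated and the discrepancy between the two models only introduces an $O(1/n)$ multiplicative correction in $\E\xi_i\xi_j$, which is absorbed in the failure probability via a standard argument.

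First I would compute $\E X = \sum_{i\ne j} a_{ij}\delta^2 = \delta^2\sum_{i,j} a_{ij}$ (using $a_{ii}=0$), so that $\delta^{-2}X$ is an unbiased estimator of $\sum_{i,j} a_{ij}$. The bulk of the work is bounding $\Var(X)$. Expanding $X^2$ and grouping the ordered pairs $((i,j),(k,l))$ by the overlap pattern of $\{i,j\}$ with $\{k,l\}$, the disjoint case contributes $0$ to $\Var(X)$, the one-element overlap case contributes at most $(\delta^3-\delta^4)\sum\pm a_{ij}a_{kl}$, and the fully overlapping case contributes at most $(\delta^2-\delta^4)\sum_{i\ne j}a_{ij}^2$.

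The key step is estimating these two sums. For the one-overlap sum, fixing the shared index $i=k$ (and symmetrically for the other three overlap patterns) gives
\[
\sum_{i}\sum_{\substack{j,l \\ j\ne i,\, l\ne i}} |a_{ij}||a_{il}| \le \sum_{i}\Big(\sum_j |a_{ij}|\Big)^2 \le n,
\]
using the row-sum hypothesis $\sum_j|a_{ij}|\le 1$. For the full-overlap sum,
\[
\sum_{i\ne j} a_{ij}^2 \le \beta\sum_{i,j}|a_{ij}| \le \beta n,
\]
using $|a_{ij}|\le\beta$ and the row-sum hypothesis. Combining, $\Var(X) \le O(\delta^3 n + \delta^2\beta n) = O(\delta^3 n)$ since $\beta\le\delta$. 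Therefore $\Var(\delta^{-2}X) \le O(n/\delta)$.

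Finally, Chebyshev's inequality gives
\[
\Pr\Big(|\delta^{-2} X - \sum_{i,j} a_{ij}| > \eps n\Big) \le \frac{\Var(\delta^{-2}X)}{(\eps n)^2} \le \frac{O(1)}{\eps^2\,\delta n}\mper
\]
The main technical point is the variance calculation: it is essential that the weaker hypothesis $|a_{ij}|\le\beta$ (rather than $\le 1$) is used for the full-overlap term, because the row-sum bound alone would only give $\sum a_{ij}^2\le n$, which combined with $\delta^2$ would ruin the bound when $\delta\ll 1$. The assumption $\delta>\beta$ is exactly what is needed to make the full-overlap contribution no worse than the one-overlap contribution.
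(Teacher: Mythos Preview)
Your proposal is correct and follows essentially the same second-moment/Chebyshev approach as the paper: both arguments identify the one-overlap and full-overlap contributions to $\Var(X)$, bound them by $O(\delta^3 n)$ and $O(\delta^2\beta n)$ respectively via the row-sum and entrywise hypotheses, and invoke $\beta\le\delta$ to combine. The only cosmetic difference is that you work with independent Bernoulli indicators and then transfer to fixed-size sampling, whereas the paper works directly with the random subset $S$ (and is in fact slightly looser than you in asserting independence of disjoint pairs there); either way the variance bound goes through.
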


\begin{proof}
Denote by $X_{ij}$ the random variable which is equal to $a_{ij}$ when both
$i\in S$ and $j\in S$ and is zero otherwise. Let $\mu_{ij}=\E
X_{ij}=\delta^2a_{ij}.$ Putting $X=\sum_{i,j\in[n]}X_{ij}$ and $\mu=\E X$ we
will compute the variance of $X.$ The key fact that we will use is that the
selection of $i,j$ and $k,l$ is independent unless either $i=k$ or $j=l.$
Pairs where neither is the case will not contribute to the variance. More
precisely,
\begin{align*}
\E\left( X - \mu \right)^2
& = \E \left( \sum_{ij} X_{ij} - \mu_{ij} \right)^2\\
& = \Ex{ \sum_{i,j,k,l} (X_{ij}-\mu_{ij})(X_{kl}-\mu_{kl}) }\\
& = \sum_{ij} \E\left(X_{ij}-\mu_{ij}\right)^2
  + \sum_{ijk} \E\left(X_{ij}-\mu_{ij}\right)\left(X_{kj}-\mu_{kj}\right)\\
& = \sum_{ij} O(\delta^2)a_{ij}^2
  + \sum_{ijk} O(\delta^3)a_{ij}a_{kj}\mper
\end{align*}
At this point notice that $\sum_{ij}a_{ij}^2$ is maximized when in every row
we have $1/\beta$ entries of magnitude $\beta$ in which case the expression
evaluates to $\frac1\beta\beta^2 n=\beta n.$ Likewise the second
expression $\sum_{ijk} a_{ij}a_{kj}$ is maximized when in every column
$j\in[n]$ we have $1/\beta$ nonzero entries of magnitude $\beta.$ In this case
the expression is $(1/\beta)^2\beta^2 n = n.$ Hence,
\begin{equation}
\sigma^2= \E\left( X - \mu \right)^2
\le O(\delta^2 \beta n) + O(\delta^3 n)
\le O(\delta^3 n)
\mcom
\end{equation}
where we used that $\delta>\beta.$
Hence by Chebyshev's inequality,
\[
\Pr(|X-\mu|\ge\eps\delta^2 n)
\le\frac{\sigma^2}{\eps^2\delta^4n^2}=\frac{O(1)}{\eps^2\delta n}\mper
\]
This is what we claimed up to scaling.
\end{proof}

In the proof of the proxy graph theorem we used the following simple
observation relating the Laplacian of a subsample~$L(G[V_\delta])$
to the corresponding principal submatrix of the Laplacian~$L(G)_{V_\delta}.$
\begin{lemma}
\label{lem:minor-vs-sub}
Let $G$ be a $\Delta$-regular graph and let $H$ be a graph of degree at least
$\Delta.$ Let $\delta\ge\poly(1/\eps)\Delta^{-1}.$ Then,
\[
\E\normtv{L(G[V_\delta]) -L(G)_{V_\delta}}
\leq \epsilon \mper
\]
\end{lemma}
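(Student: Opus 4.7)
The plan is to decompose each matrix into its diagonal and off-diagonal parts and to bound the two contributions to $\normtv{\cdot}$ (interpreted as the entrywise $\ell_1$ norm) separately. The key observation is that both $L(G[V_\delta])$ and $L(G)_{V_\delta}$ are supported, off the diagonal, on exactly the set of edges of $G$ with both endpoints in $V_\delta$. So the off-diagonal entries cannot differ in their sparsity pattern — they can differ only through the global rescaling that defines $L(G)_{V_\delta}$ (principal submatrix rescaled so that total edge weight is preserved in expectation) versus $L(G[V_\delta])$ (renormalized so that total edge weight equals $1$ exactly). The ratio of these scalings is $m'/\Esymb m'$ where $m' = |E(G[V_\delta])|$ (weighted); by Lemma~\ref{lem:edgeweight} applied to the adjacency of $G$, $m'$ is within a $(1\pm O(\eps))$ factor of its expectation with probability $1-\eps^2$, and contributes at most $O(\eps)$ off-diagonal TV-mass after integrating out the rare failure event.

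The main work is on the diagonal. The $v$-th diagonal entry of $L(G[V_\delta])$ is $2 d'_v/(2m')$ where $d'_v$ is the number of neighbors of $v$ in $V_\delta$, while the $v$-th diagonal entry of $L(G)_{V_\delta}$ is (after the same global rescaling) equal to $2/(\delta n)$ — the original normalized degree. The plan is to show that $d'_v$ concentrates sharply around its mean $\delta\Delta$, so that the two diagonals match for most vertices. Specifically, since $d'_v$ is a sum of $\Delta$ (nearly) independent Bernoulli$(\delta)$ random variables and $\delta\Delta \ge \poly(1/\eps)$, a Chernoff/Hoeffding bound gives $d'_v = (1\pm \eps)\delta\Delta$ with probability $1-\eps^3$. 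Call such vertices \emph{good}; a union bound shows all but an $\eps^2$-fraction of the vertices in $V_\delta$ are good with high probability.

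For good vertices, the two diagonal entries match up to a multiplicative $(1\pm O(\eps))$ factor and each is $\Theta(1/(\delta n))$, so they contribute $O(\eps/(\delta n))$ each to the TV norm; summed over $\le \delta n$ good vertices, this gives $O(\eps)$. For bad vertices (at most $\eps^2 \delta n$ of them), we use the crude bound that each individual diagonal entry is $O(1/(\delta n))$ in $L(G)_{V_\delta}$, while in $L(G[V_\delta])$ the diagonal can be bounded by noting that $\sum_v (L(G[V_\delta]))_{vv} = 2$, so the bad vertices contribute at most $O(\eps^2 \delta n \cdot (1/(\delta n))) = O(\eps^2)$ from the minor and at most $O(\eps)$ in total from $L(G[V_\delta])$ after removing a further $\eps$-fraction of \emph{very} bad vertices whose induced degree exceeds, say, $10\delta\Delta$ (again by Markov applied to the concentration inequality).

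The main obstacle is the book-keeping around the \emph{bad} vertices: one must separately rule out the rare event that a few vertices have unusually large induced degrees (contributing large diagonal entries in $L(G[V_\delta])$), since such entries could in principle dominate the TV norm if they were large enough. The Chernoff bound controls the upper tail of $d'_v$ at the level of $\exp(-\Omega(\eps^2\delta\Delta))$, which is $o(\eps^2)$ given $\delta\Delta \ge \poly(1/\eps)$, so by Markov's inequality the expected total mass attributable to such heavy vertices is $O(\eps)$. Combining with the off-diagonal bound above yields $\Esymb \normtv{L(G[V_\delta]) - L(G)_{V_\delta}} \le O(\eps)$, which gives the lemma after a routine rescaling of $\eps$.
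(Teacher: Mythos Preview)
Your proposal is correct and follows essentially the same approach as the paper: decompose into diagonal and off-diagonal parts, observe that the off-diagonal entries (nearly) coincide, and bound the diagonal contribution via concentration of the induced degrees $d'_v$ around $\delta\Delta$. The paper's proof is terser: it asserts the off-diagonal entries are \emph{identical} (in its normalization convention the principal minor and the induced Laplacian agree off the diagonal, so your $m'/\E m'$ rescaling step is not needed there), and it derives the diagonal bound $\sum_{i\in V_\delta}\E|d_i - 1/(\delta n)|\le\eps$ by appealing to the second-moment argument of Lemma~\ref{lem:edgeweight} rather than a Chernoff bound with a good/bad vertex split. Your more explicit treatment via Chernoff and the case split is a perfectly valid alternative and arguably cleaner on the bookkeeping for heavy-degree vertices.
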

\begin{proof}
By inspection of the two matrices we see that the difference in the
entries of the matrix is due to irregularities in the degrees
of $G[V_\delta].$ Specifically, the matrix $L(G)_{V_\delta}$ has diagonal
entries equal to $1/\delta n.$ On the other hand, the $i$-th diagonal entry of
$L(G[V_\delta])$, call it $d_i$, is equal to
$\delta^{-1}\sum_{j\in V_\delta} a_{ij}.$ We have that
$\E d_i = \frac1{\delta n}$ and we claim,
\[
\sum_{i\in V_\delta}\E\left|d_i-\frac1{\delta n}\right|\le\eps\mper
\]
This can be derived from Lemma~\ref{lem:edgeweight}.
\end{proof}


\paragraph{McDiarmid's inequality.}
We also needed McDiarmid's large deviation bound (sometimes called
Azuma's inequality).
\begin{lemma}\label{lem:mcd}
Let $X_1,\dots,X_m$ be independent random variables all taking values in the
set ${\cal X}$. Further, let $f\colon{\cal X}^m\to\mathbb{R}$ be a function of
$X_1,\dots,X_m$ that satisfies for all $i,x_1,x_2,\dots,x_m,x_i'\in{\cal X}$,
\[
|f(x_1,\dots,x_i,\dots,x_m)-f(x_1,\dots,x_i',\dots,x_m)|\le c_i.
\]
Then, for all $t>0$,
\[
\Pr\{ |f-\Ex{f}|\ge t\}\le
2\exp\left(\frac{-2t^2}{\sum_{i=1}^mc_i^2}\right).
\]
\end{lemma}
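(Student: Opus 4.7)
The plan is to prove this bounded differences inequality via the standard Doob martingale construction. Define the filtration $\mathcal{F}_i$ generated by $X_1, \ldots, X_i$ (with $\mathcal{F}_0$ trivial) and set
\[
Z_i \defeq \E[f(X_1, \ldots, X_m) \mid \mathcal{F}_i] \mcom
\]
so that $Z_0 = \E[f]$ and $Z_m = f$ almost surely. Then $(Z_i)_{i=0}^m$ is a martingale with respect to $(\mathcal{F}_i)$, and $f - \E[f] = \sum_{i=1}^m D_i$ where $D_i \defeq Z_i - Z_{i-1}$.

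The key step is to show that, conditioned on $\mathcal{F}_{i-1}$, the martingale difference $D_i$ lies in an interval of length at most $c_i$. For this, using independence of the $X_j$'s, one can write $Z_i = g_i(X_1, \ldots, X_i)$ and $Z_{i-1} = \E_{X_i'}[g_i(X_1, \ldots, X_{i-1}, X_i')]$ where the expectation is over an independent copy $X_i'$ of $X_i$. The bounded differences hypothesis $|f(x_1,\dots,x_i,\dots,x_m) - f(x_1,\dots,x_i',\dots,x_m)| \le c_i$ passes through the expectation over $X_{i+1}, \ldots, X_m$, so
\[
\sup_{x_i} g_i(X_1,\dots,X_{i-1},x_i) - \inf_{x_i'} g_i(X_1,\dots,X_{i-1},x_i') \le c_i \mcom
\]
which gives the desired conditional range bound on $D_i$.

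The next step is Hoeffding's lemma: for any random variable $D$ with $\E[D] = 0$ lying in an interval of length $L$, one has $\E[e^{\lambda D}] \le \exp(\lambda^2 L^2 / 8)$. Applying this conditionally on $\mathcal{F}_{i-1}$ to each $D_i$ and iterating via the tower property of conditional expectation yields the moment generating function bound
\[
\E\bigl[e^{\lambda (f - \E[f])}\bigr] \le \exp\Bigl(\tfrac{\lambda^2}{8} \textstyle\sum_{i=1}^m c_i^2\Bigr) \mper
\]
Finally, by Markov's inequality, $\Pr\{f - \E[f] \ge t\} \le \exp(-\lambda t) \cdot \E[e^{\lambda(f-\E[f])}]$, and optimizing over $\lambda > 0$ (take $\lambda = 4t/\sum_i c_i^2$) gives $\Pr\{f - \E[f] \ge t\} \le \exp(-2t^2/\sum_i c_i^2)$. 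Applying the same argument to $-f$ (which satisfies the same bounded differences condition with the same constants $c_i$) and taking a union bound yields the two-sided inequality with the factor $2$ out front. The main technical point — and the only place where independence of the $X_i$'s is genuinely needed — is the conditional range bound on the martingale differences; everything else is a routine chaining of Hoeffding's lemma through the martingale.
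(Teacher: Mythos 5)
Your proof is correct and complete: the Doob martingale decomposition, the conditional range bound on the increments (which is indeed the only place independence is used), Hoeffding's lemma applied conditionally, the tower-property chaining, and the optimization $\lambda = 4t/\sum_i c_i^2$ all check out, and the union bound over $f$ and $-f$ correctly produces the factor of $2$. The paper does not prove this lemma at all --- it states McDiarmid's inequality as a standard known result --- and what you have written is precisely the classical Azuma--Hoeffding argument, so there is nothing to compare against.
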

%
%

\end{document}